\documentclass[11pt]{article}

\usepackage[margin=1in]{geometry}
\usepackage{amsmath,amssymb,amsthm,mathtools}
\usepackage[T1]{fontenc}
\usepackage{microtype}
\usepackage{braket}
\usepackage{xcolor}
\usepackage{mathrsfs}
\usepackage[numbers, sort&compress]{natbib}
\usepackage{makecell}
\usepackage{booktabs}
\usepackage[blocks]{authblk}
\usepackage[
    colorlinks,
    allcolors=blue,
    pdfauthor={Steven T Flammia, Savar D Sinha, Yu Tong},
    pdftitle={Gauge freedom and efficient algorithms for Lindbladian learning}
]{hyperref}

\makeatletter
\renewcommand\AB@authnote[1]{\raisebox{0.6ex}{\normalfont\scriptsize#1}}
\renewcommand\AB@affilnote[1]{\raisebox{0.6ex}{\normalfont\scriptsize#1}}
\makeatother

\newcommand{\supp}{\operatorname{supp}}
\renewcommand{\Re}{\operatorname{Re}}
\renewcommand{\Im}{\operatorname{Im}}

\newcommand{\Tr}{\mathrm{Tr}}

\newcommand{\sgn}{\operatorname{sgn}}
\newcommand{\Or}{\mathcal{O}}

\newtheorem{theorem}{Theorem}
\newtheorem{corollary}[theorem]{Corollary}
\newtheorem{lemma}[theorem]{Lemma}
\newtheorem{proposition}[theorem]{Proposition}
\newtheorem{definition}[theorem]{Definition}
\newtheorem{remark}[theorem]{Remark}
\newtheorem{fact}[theorem]{Fact}

\title{Gauge freedom and efficient algorithms for Lindbladian learning}
\author[1,2]{Steven T. Flammia}
\author[3,4]{Savar D. Sinha}
\author[5,6,7]{Yu Tong}

\affil[1]{Department of Computer Science, Virginia Tech}
\affil[2]{Institute for Advanced Computing, Virginia Tech}
\affil[3]{Department of Computing and Mathematical Sciences, California Institute of Technology}
\affil[4]{Division of Physics, Mathematics and Astronomy, California Institute of Technology}
\affil[5]{Department of Mathematics, Duke University}
\affil[6]{Department of Electrical and Computer Engineering, Duke University}
\affil[7]{Duke Quantum Center, Duke University}
\date{\today}

\begin{document}
\maketitle

\begin{abstract}
    We study the problem of learning a local Lindbladian in the presence of state-preparation-and-measurement (SPAM) noise. Although recent work has developed scalable learning algorithms under idealized access assumptions, SPAM can make distinct Lindbladians experimentally indistinguishable, thus making part of the Lindbladian fundamentally unlearnable. We give a sharp characterization of this obstruction for bounded-degree local Lindbladians. We identify a family of locality-preserving gauge transformations that commutes with trusted single-qubit control, and use it to classify Lindbladian components, i.e., {Hamiltonian coefficients and the real and imaginary parts of dissipative coefficients, into universally gauge-invariant and generically gauge-dependent ones. The generic dependence holds among physical Lindbladians with each prescribed support pattern}. For the {generically} gauge dependent components, we construct Lindbladians to show that they can have $\Omega(1)$ uncertainty independent of system size, even after imposing complete positivity.

    We complement this characterization with SPAM-robust algorithms for learning every {universally gauge-invariant} component. Our algorithms use only trusted single-qubit operations, require neither ancillas nor entangling control, and use $\widetilde{\mathcal O}\left(\epsilon^{-2}\log(N/\delta)\right)$ experiments and total evolution time for constant locality and degree. Our guarantees allow global SPAM to become arbitrarily far from ideal as $N$ grows, requiring only nonvanishing local visibility. {We also establish sufficient conditions for boundary-induced gauge invariance of additional Hamiltonian coefficients under physical positivity constraints, though learning these additional parameters in general remains an open problem.} More broadly, our gauge-aware framework and SPAM-cancellation techniques offer a practical toolkit for the scalable characterization of open quantum systems.
\end{abstract}

\tableofcontents

\section{Introduction}

Learning a quantum system from its dynamics is a basic inverse problem in
quantum science.  For an open system, the object to be learned is its
Lindbladian: the generator that contains both the coherent interactions within the quantum system and the dissipative processes induced by the environment.
An accurate description of this generator is useful for diagnosing and calibrating quantum processors, validating analog and digital quantum simulators, constructing noise models for error mitigation, and understanding or engineering dissipative many-body dynamics~\cite{samach2022lindblad,carrasco2021,vandenberg2023pec,
barreiro2011open,verstraete2009quantum}.  
These tasks are becoming increasingly important as quantum devices grow to a size that makes full process tomography prohibitively expensive.

The problem has a long history.  
Early approaches reconstructed Lindblad operators from tomographic snapshots at several evolution times~\cite{boulant2003}.  
Later work exploited locality and many-body structure,
for example by inferring local generators from steady states and local
measurements \cite{bairey2020learning}, while experiments began fitting Lindblad models directly to processor data
\cite{samach2022lindblad,onorati2023fitting}.  
More recently, there has been rapid progress toward provably efficient methods for learning local
Lindbladians from dynamical data
\cite{olsacher2025liouvillian,arad2026near,mobus2026robust,
lewis2026structure,song2026constant,chenYu2026arbitrary,romanov2026qec}.

Concretely, we study an $N$-qubit Lindbladian expanded in the Pauli basis as
\begin{equation}
\label{eq:intro_lindbladian}
\begin{aligned}
    \mathcal L[\rho]
    ={}-i\sum_{c\in\mathcal I_h}h_c[P_c,\rho]
    +\sum_{(a,b)\in\mathcal I_d}\alpha_{ab}\mathcal D_{ab}[\rho],\quad \mathcal D_{ab}[\rho]
    =P_a\rho P_b-\frac{1}{2}\{P_bP_a,\rho\}.
\end{aligned}
\end{equation}
Here the first sum for $\mathcal{L}[\rho]$ is the coherent, or Hamiltonian, part, while the second is
the dissipative part, and the Kossakowski matrix
$A=(\alpha_{ab})$ is positive semidefinite.  The learning problem is to
recover the coefficients $h_c$ and $\alpha_{ab}$ from experiments involving
the evolution $e^{t\mathcal L}$. {We assume knowledge of a bounded-degree local support pattern (see Definition~\ref{def:support_pattern}), specifying which patches of qubits are allowed to interact. The candidate index sets \(\mathcal I_h\) and \(\mathcal I_d\) include, respectively, every nonidentity Pauli operator supported within an allowed patch and every ordered pair of nonidentity Pauli operators whose joint support is contained in an allowed patch. For fixed locality and degree, these candidate lists have total size \(\mathcal O(N)\); their coefficients are unknown and may vanish.} Moreover, we assume $|h_c|,|\alpha_{ab}|\leq 1$.

There is, however, a basic gap between an idealized learning protocol and an
experiment.  The states used to probe a device are prepared imperfectly, and
the final measurements are themselves noisy.  
These state-preparation-and-measurement (SPAM) errors can potentially affect the learnability of the Lindbladian: different combinations of SPAM noise channels and Lindbladians may result in exactly the same experimental outcome, making them indistinguishable from each other.

This obstruction also arises when one tries to characterize the noise process involved in quantum gates, and has motivated the development of gate-set tomography (GST).  By estimating
states, measurements, and gates self-consistently, GST avoids assigning all
observed error to the gate under study, but it recovers the gate set only up
to a similarity, or \emph{gauge}, transformation
\cite{merkel2013self,BlumeKohout2013GST,greenbaum2015introduction,nielsen2020gate}. 
Related gauge freedoms also determine which features of Pauli noise can be learned in the
presence of noisy SPAM~\cite{huang2022foundations}. 
For Clifford circuits, the learnable and unlearnable
directions can be identified with the cycle and cut spaces of a pattern
transfer graph \cite{chen2023learnability, chen2026efficient}.

In the continuous-time setting that we focus on, an experiment consisting of state preparation, evolution under $\mathcal{L}$, and measurement has output probabilities of the form
\begin{equation}
    p(t)=\Tr\!\left[M e^{t\mathcal L}(\rho_0)\right].
\end{equation}
For an invertible map $\mathcal S$, the simultaneous transformation
\begin{equation}
    \rho_0\mapsto\mathcal S(\rho_0),\qquad
    M\mapsto(\mathcal S^{-1})^\dagger(M),\qquad
    \mathcal L\mapsto\mathcal S\circ\mathcal L\circ\mathcal S^{-1}
\end{equation}
leaves every such probability unchanged whenever the transformed objects
remain admissible (i.e., being valid states, Lindbladians, and positive semidefinite operators respectively).  Consequently, no amount of data can distinguish two
physical models related in this way.  This raises the question that motivates
our work:

\begin{quote}
\emph{Which components of a local Lindbladian are identifiable in the presence of unknown SPAM noise, and can the identifiable components be learned efficiently using only trusted single-qubit control?}
\end{quote}

We focus on trusted single-qubit gates because they are typically easier to calibrate and deploy than entangling gates, whose characterization may itself be comparable in difficulty to learning the unknown many-body dynamics \cite{arute2019quantum}. 
For example, trapped-ion processors can implement single-qubit operations with extremely high fidelity~\cite{harty2014high,smith2025single}, whereas entangling gates require characterizing and controlling collective motional modes~\cite{choi2014optimal}.
Restricting attention to single-qubit operations avoids replacing the characterization task with an equally difficult calibration problem, while keeping the protocols applicable when reliable entangling gates or ancillary qubits are unavailable. 

\subsection{Our results}

We answer this question for bounded-degree local Lindbladians on $N$ qubits.
For SPAM, we will only prepare the all-zero initial state and measure single qubits in the computational ($Z$) basis. 
Our precise assumption on the SPAM noise can be found in Definition~\ref{defn:spam_noise_model}. We note that we do not require the SPAM noise channels to take any specific form (e.g., depolarizing noise as in Ref.~\cite{ivashkov2026ansatz}). 
We do not require
the global prepared state or the global measurement channel to be close to
ideal; in fact, they may remain arbitrarily far from the ideal operations as the system size grows. 
Instead, we only require the exact state and measurement to be ``visible'' in the actual noisy state and measurement we implement for constant-sized subsets of qubits.

\paragraph{Summary of our results.}
Subject to the above weak SPAM assumption, {the main results of this work can be summarized as:
For any fixed specification of which qubits can interact with which, we classify every component of a local Lindbladian into two categories:  those that are universally gauge-invariant and those that are generically gauge-dependent.
For the first category, we design 
efficient learning protocol to learn all of them using trusted single-qubit control despite unknown SPAM noise. For the second category, such SPAM noise can prevent arbitrarily precise learning, thus creating a fundamental error floor, for almost every physical Lindbladian.}

\paragraph{Gauge freedom and identifiability.}
We first construct an $N$-parameter family of gauge transformations from
single-qubit depolarizing maps.  These transformations commute with every
trusted single-qubit unitary, preserve locality, and can be absorbed entirely
into the unknown preparation and measurement operations.  We also characterize
the more general family of gauge transformations that commute with arbitrary
single-qubit control.

In the Pauli basis, the local depolarizing gauge gives a sharp classification
of the Lindbladian components.  
{Here a \emph{component} means a Hamiltonian coefficient or the real or imaginary part of a dissipative coefficient.}
The real and imaginary parts are separated because they may not be simultaneously gauge dependent or invariant.
We will need to use several slightly different concepts of gauge invariance and dependence, as rigorously defined in Definition~\ref{def:generic_gauge_dependence}. We provide a high-level explanation here:
{
We use the local depolarizing gauge transformations
\(\mathcal S_\theta\) defined in Eq.~\eqref{eq:gauge_on_pauli}.
The \emph{ambient generator space} \(\mathcal V\) includes both
physical Lindbladians and generators that need not produce physical
dynamics.
}

{
\begin{enumerate}
    \item \emph{Universal gauge invariance.}
    A component is universally gauge invariant if its value is unchanged
    by every gauge transformation, for every generator in the ambient
    space.
    \item \emph{Gauge dependence at a given generator.}
    A component is gauge dependent at a given generator in the ambient
    sense if some gauge transformation changes its value, without
    requiring the transformed generator to remain physical.
    \item \emph{Generic gauge dependence in the ambient sense.}
    A component is generically gauge dependent in the ambient sense
    if such changes are possible for almost every ambient generator.
    We use this notion as an intermediate step in the proof.
    \item \emph{Generic gauge dependence among physical Lindbladians.}
    Fix which groups of qubits a Hamiltonian or dissipative term may
    act on; we call this a \emph{support pattern}.
    A component is generically gauge dependent among physical
    Lindbladians if, for almost every physical Lindbladian with this
    support pattern, arbitrarily small gauge transformations can
    change its value while preserving both physicality and the
    support pattern.
\end{enumerate}
}

{
In both genericity statements, the relevant set is open, dense, and
of full measure, as made precise in
Definition~\ref{def:generic_gauge_dependence}.
Unless otherwise specified, \emph{generic gauge dependence} refers
to the physical notion in item~4.
}

{
Theorem~\ref{thm:gauge_invariant_coefficients} establishes the ambient
classification, and
Corollary~\ref{cor:generic_physical_gauge_dependence} strengthens it
to the physical setting for every component allowed by the prescribed
support pattern. Together, these results give the following
classification, in which we show that every Lindbladian component is either universally gauge invariant (item~1) or generically gauge dependent among physical Lindbladians (item~4).
}

\begin{theorem}[Gauge freedom and identifiability, informal version of Theorems~\ref{thm:gauge_invariant_coefficients} and \ref{thm:uniform_gauge_width} {and Corollary~\ref{cor:generic_physical_gauge_dependence}}]
Consider a bounded-degree local Lindbladian in the presence of unknown SPAM
noise and trusted single-qubit control.  There is an $N$-parameter family of
local depolarizing gauge transformations that leaves all experimental data
unchanged.  {The universally gauge-invariant components are exactly}
\begin{enumerate}
    \item the diagonal coefficients $\alpha_{aa}$;
    \item (Type~I) $\Re\alpha_{ab}$ when $P_a$ and $P_b$ differ on exactly one qubit
    and both are nonidentity on that qubit; and
    \item (Type~II) $\Im\alpha_{ab}$ when $P_a$ and $P_b$ differ on exactly one qubit
    and exactly one is the identity on that qubit, and $h_c$ when $P_c$ is single-qubit.
\end{enumerate}
Every other component allowed by locality is {generically gauge dependent among physical Lindbladians with the prescribed support pattern}.  Moreover, there exist bounded-degree local Lindbladians for which, for each component outside the diagonal, Type I, and Type II classes, one can find a gauge transformation that changes that component by an $\Omega(1)$ amount while preserving the validity of the transformed Lindbladian.
\end{theorem}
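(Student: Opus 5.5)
Our plan is to work entirely in the Pauli basis, where the local depolarizing gauge acts diagonally. Write the gauge as $\mathcal S_\theta=\bigotimes_{j=1}^N\mathcal S_{\theta_j}$, with $\mathcal S_{\theta_j}(I)=I$ and $\mathcal S_{\theta_j}(\sigma)=e^{\theta_j}\sigma$ for $\sigma\in\{X,Y,Z\}$. Then $\mathcal S_\theta(P)=e^{\theta\cdot\chi(P)}P$, where $\chi(P)\in\{0,1\}^N$ is the indicator of the support of $P$. This map commutes with every single-qubit unitary channel, since depolarizing channels are unitarily covariant. It is also a product of single-qubit maps, so it can be absorbed qubit by qubit into the unknown preparation and measurement channels. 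For small $\theta$ this keeps local visibility nonvanishing and leaves all data invariant. That gives the first claim.

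The core computation is the conjugated generator $\mathcal S_\theta\mathcal L\mathcal S_\theta^{-1}$, evaluated term by term on Pauli inputs. For a Hamiltonian term, $[P_c,P]$ is nonzero only when $P_c$ and $P$ anticommute, and then it is proportional to $P_cP$. So the term is multiplied by $e^{\theta\cdot(\chi(P_cP)-\chi(P))}$. This factor is $P$-independent exactly when $P_c$ is single-qubit: anticommutation forces $P$ to be nonidentity on that site, and then $P_cP$ is also nonidentity there. For $|\supp P_c|\ge2$, different anticommuting $P$ change the support differently. Hence the transformed map is no longer of Hamiltonian form with the same coefficient. Re-expanding it in the basis $\{[P_c,\cdot],\mathcal D_{ab}\}$ mixes $h_c$ with $\Im\alpha_{ab}$ for pairs with $P_aP_b\propto P_c$.

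For dissipators I would split $\mathcal D_{ab}$ into its Hermitian-preserving real and imaginary parts and do the same computation. The diagonal $\mathcal D_{aa}$ acts on $P$ as $(\pm1-1)P$, so it is diagonal in the Pauli basis and commutes with $\mathcal S_\theta$. When $P_a,P_b$ differ on one site with both nonidentity there (Type I), the operator $P_aPP_b$ changes $P$ only on that site, and it keeps that site nonidentity whenever the term is nonzero. When one factor is the identity there (Type II, imaginary part), the anticommutator piece plays the same role. The cleanest organization is to compute the action of $\mathcal S_\theta$ on the coordinates of $\mathcal V$ as a linear map $G(\theta)$, differentiate at $\theta=0$, and record which coordinate rows of $dG$ vanish identically. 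These are exactly the universally invariant components. All other rows are nonzero linear functions of the generator coordinates, so their zero sets are proper subspaces. This gives the ambient generic dependence of Theorem~\ref{thm:gauge_invariant_coefficients}.

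The main obstacle is passing to generic dependence among physical Lindbladians (Corollary~\ref{cor:generic_physical_gauge_dependence}) and to the $\Omega(1)$ statement (Theorem~\ref{thm:uniform_gauge_width}), because the gauge can break positivity of the Kossakowski matrix. For genericity, I would take Lindbladians whose Kossakowski matrix, restricted to the allowed support pattern, is strictly positive definite. This set is open, dense and of full measure in the physical set. There, small $\theta$ keeps $A$ positive definite, and $G(\theta)$ preserves the support pattern. Combined with the nonvanishing derivative rows, this yields item~4.

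For the $\Omega(1)$ width, I would construct explicit examples consisting of disjoint copies of a constant-size block. Each block would have large diagonal dissipation on single-qubit Paulis, to give slack in complete positivity, plus a target off-diagonal or multi-qubit Hamiltonian term. I would then check directly that the block stays physical for $\theta$ in a constant interval. Over that interval the target component moves by a constant factor $e^{\theta}-1$, independent of $N$.
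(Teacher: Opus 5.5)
There is a genuine gap in your $\Omega(1)$ construction: slack on single-qubit Paulis leaves the gauge no room, so the widths are zero rather than merely small. Take the simplest dependent target, $-ih[Z_1Z_2,\cdot]$, together with single-qubit Pauli-diagonal dissipation.

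\begin{itemize}
\item Conjugation rescales the transitions $P\mapsto Z_1Z_2P$ with $P_1\in\{X,Y\}$ by $e^{\pm\theta_2}$, according to whether qubit $2$ goes $I\to Z$ or $Z\to I$. The case $P_2\in\{X,Y\}$ is symmetric, with $\theta_1$.
\item Splitting into even and odd parts, the Hamiltonian term becomes $\tfrac12(\cosh\theta_1+\cosh\theta_2)(-ih[Z_1Z_2,\cdot])$.
\item It also acquires an imaginary Kossakowski entry between $Z_1$ and $Z_2$ proportional to $h(\cosh\theta_2-\cosh\theta_1)$.
\item Finally, it acquires \emph{real} Kossakowski entries proportional to $h\sinh\theta_2$ between $Z_1X_2$ and $Y_2$ (and between $Z_1Y_2$ and $X_2$), and proportional to $h\sinh\theta_1$ between $X_1Z_2$ and $Y_1$.
\end{itemize}

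The diagonal entries at $Z_1X_2$ and $X_1Z_2$ are zero in your block and are gauge invariant. Positivity therefore forces $\sinh\theta_1=\sinh\theta_2=0$ when $h\neq0$, and the target cannot move at all. The same computation shows two further errors. The Hamiltonian term mixes with real parts of $|D|=2$ pairs, not only with $\Im\alpha_{ab}$. And its variation is not of the form $e^{\theta}-1$. For odd-weight targets such as $Z_1Z_2Z_3$, Theorem~\ref{thm:hamiltonian_odd_weight_condition} gives the same obstruction for every commuting gauge.

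The missing idea is the paper's buffer. On each patch it places the complete Pauli-depolarizing generator, whose Kossakowski matrix is the identity on \emph{all} nonidentity Paulis of the patch and which commutes with the gauge. It then adds $\varepsilon$ times a generic local generator $\mathcal K_n$ at which every dependent component moves. Compactness of $[-1,1]^n$ then yields positivity and strictly positive widths $\omega_{n,\varsigma}$ without computing any explicit factor. Two smaller points remain. Disjoint blocks with one target each neither give a single Lindbladian moving all of $\operatorname{Dep}(\mathsf P)$ nor cover overlapping patches; the paper uses a proper coloring and weights $\zeta^{\kappa(C)}$ to prevent cancellation between overlapping patches. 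And absorbing $\theta$ from a constant box, not just small $\theta$, requires SPAM with explicit slack, such as $\mathcal D_{e^{-2}}^{\otimes N}$ in Remark~\ref{rem:admissible_spam_gauge}.

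The classification half also has a gap: the ``exactly'' direction is only asserted. You never exhibit a generator at which any of the following moves: $\Im\alpha_{ab}$ for a Type~I pair, $\Re\alpha_{ab}$ for a Type~II pair, or either part when $|D(P_a,P_b)|\geq2$. You also need to check that the witnessing transfer-matrix entry is not killed by trace annihilation.

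Your invariance mechanism is misstated as well. A single term $P_a\cdot P_b$ does change support on the differing site: $X\,I\,Y=iZ$ and $X\,Z\,Y=-iI$. These $I\leftrightarrow Z$ transitions cancel only in $P_a\cdot P_b+P_b\cdot P_a$ and survive in $i(P_a\cdot P_b-P_b\cdot P_a)$, which is exactly the real/imaginary dichotomy the theorem asserts. For Type~II imaginary parts the anticommutator contribution vanishes; it is $i(P_a\cdot P_b-P_b\cdot P_a)$ that preserves support. The paper obtains both directions at once from the phase-parity identity $\epsilon_R\equiv|B(U,V)|+|G_R|\pmod 2$ for the summands of the transformed $\chi_{U,V}$.

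Relatedly, showing that a basis superoperator is fixed by conjugation is a column statement, while coordinate invariance is a row statement. They coincide here only because $\mathcal S_\theta$ is self-adjoint, so conjugation is Hilbert--Schmidt self-adjoint, and $\chi_{U,V}$ is the pairing with $U\cdot V$. Your plan to read off the rows of $dG$ is sound in principle: since $\mathcal S_\theta=e^{\Gamma_\theta}$ with commuting $\Gamma_j$, a vanishing first derivative is equivalent to invariance for all $\theta$. But the plan has to be carried out.

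Your physical-genericity step matches the paper's corollary, with one omission. You must also show that $I_{f_\varsigma}\cap\mathcal V_{\mathsf P}$ is a proper subspace of $\mathcal V_{\mathsf P}$, and remove it. The paper does this by embedding a patch-local witness.
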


{By Corollary~\ref{cor:generic_physical_gauge_dependence}, all generically gauge-dependent components allowed by locality can vary simultaneously on an open dense, full-measure subset of physical Lindbladians with a prescribed support pattern, including under the coefficient bounds. However, at particular boundary instances, the requirement that the generator produce a completely positive and trace-preserving (CPTP) semigroup can force additional coefficients to remain unchanged among physically admissible gauge-equivalent Lindbladians. We call this \emph{boundary-induced gauge invariance}, and distinguish it from universal gauge invariance.}
A notable example is when the Lindbladian consists only of a Hamiltonian part. 
In this case all the Hamiltonian coefficients are SPAM-robustly learnable~\cite{HuangTongFangSu2023learning,yu2023robust}.

For the generically gauge dependent components, in fact a stronger statement can be made, as stated at the end of the theorem: one can find a Lindbladian for which all these components have $\Omega(1)$ uncertainty, which is independent of the system size even under the CPTP constraints. 
Moreover, this is true even if we specify the support of the Lindbladian terms, e.g., requiring it to be local on a graph. For a precise statement see Theorem~\ref{thm:uniform_gauge_width}.

\paragraph{{Efficient learning of the universal invariants.}}
We will show that {all universally} gauge-invariant components (diagonal dissipative coefficients and Type~I and II components) can be estimated with nearly optimal statistical scaling {(within polylogarithmic factor of the sample and total evolution time lower bounds established in \cite{arad2026near})}.
{We measure efficiency using three metrics: the \emph{number of experiments}, the \emph{total evolution time}, and the \emph{total number of single-qubit gates}.} By total evolution time, we mean the cumulative time spent evolving under the unknown Lindbladian across all experiments. {Since each experiment uses at most \(\Or(1)\) evolution time, our experiment-count bounds also upper-bound the total evolution time up to a constant factor.}
{Gate counts include all single-qubit control operations across all qubits and experiments.}

For the diagonal coefficients, Pauli twirling converts the relevant local signals into exponential decay curves. Unknown state-preparation and measurement factors enter only as multiplicative prefactors and therefore cancel when the decay rates are extracted. Locality then allows the coefficients to be recovered efficiently from expectation values of local observables. Type I coefficients are learned similarly, after supplementing the Pauli twirl with suitable non-Pauli single-qubit rotations. 

{
Type II coefficients require a different approach because they cannot, in general, be converted into decay rates. Instead, we perform a \emph{partial twirl} to isolate the dynamics in the Heisenberg picture to a two-dimensional invariant subspace. For intuition, consider a single qubit Lindbladian. The  partial twirl through randomly applying \(\{I,\sigma^z\}\) results in an effective Lindbladian
\begin{equation}
    \label{eq:single_qubit_partial_twirl_example}
\begin{aligned}
\mathcal L_{\mathrm{tw}}[\rho]
=-ih_z[\sigma^z,\rho]
+\sum_{\mu\in\{x,y,z\}}\alpha_{\mu\mu}(\sigma^\mu\rho\sigma^\mu-\rho)
+\alpha_{xy}\left(\sigma^x\rho\sigma^y+\frac{i}{2}\{\sigma^z,\rho\}\right)
+\alpha_{xy}^{*}\left(\sigma^y\rho\sigma^x-\frac{i}{2}\{\sigma^z,\rho\}\right).
\end{aligned}
\end{equation}
The span of \(\sigma^x,\sigma^y\) is invariant in the Heisenberg picture: writing \(\sigma^\mu(t)=e^{t\mathcal L_{\mathrm{tw}}^\dagger}[\sigma^\mu]\), $\mu=x,y,z$, we obtain
$$
\frac{d}{dt}
\begin{pmatrix}\sigma^x(t)\\\sigma^y(t)\end{pmatrix}
=A\begin{pmatrix}\sigma^x(t)\\\sigma^y(t)\end{pmatrix}
:=
2\begin{pmatrix}
-(\alpha_{yy}+\alpha_{zz}) & \operatorname{Re}\alpha_{xy}-h_z\\
\operatorname{Re}\alpha_{xy}+h_z & -(\alpha_{xx}+\alpha_{zz})
\end{pmatrix}
\begin{pmatrix}\sigma^x(t)\\\sigma^y(t)\end{pmatrix}.
$$
Thus the single-qubit Type II parameter \(h_z\) appears in the antisymmetric coupling between \(\sigma^x\) and \(\sigma^y\), while the dissipative contribution \(\operatorname{Re}\alpha_{xy}\) enters symmetrically.
By applying random Pauli operators and correcting for the resulting sign in the state preparation and measurement stages, we can average measurement results to reconstruct the matrix $mse^{tA}$ where $m,s$ are scalar factors from measurement and state preparation noises respectively. Thus the effect of SPAM noise is reduced to a single scalar factor $ms$, which can be estimated at $t=0$.
Because of this we can extract $h_z$ from derivatives at \(t=0\) using robust polynomial interpolation in Lemma~\ref{lem:chebyshev_derivative_estimation}.
This general methodology works beyond the single-qubit setting, as shown in \eqref{eq:typeii_effective_action}, and can be parallelized across the entire system, leading to the learning time stated in the next theorem:
}

\begin{theorem}[Efficient SPAM-robust learning, informal version of Theorems~\ref{thm:estimate_diagonal}, \ref{thm:estimate_typei}, and \ref{thm:estimate_typeii}]
We consider a bounded-degree local Lindbladian $\mathcal{L}$ of fixed locality and degree.  Under our SPAM model in Definition~\ref{defn:spam_noise_model}, there are protocols using
only trusted single-qubit control that estimate every {universally} gauge-invariant
Lindbladian component 
to additive error $\epsilon$ with failure probability at most $\delta$.  All
diagonal coefficients and all Type~I components can be
learned using
$\Or\!\left(\epsilon^{-2}\log(N/\delta)\right)$
experiments and total evolution time.  All Type~II components can be learned
using
$
    \Or\!\left(
        \epsilon^{-2}\log^4(2/\epsilon)\log(N/\delta)
    \right)
$
experiments and total evolution time of the same order.  
{The total single-qubit gate count is
\(\Or(N\epsilon^{-3}\log(N/\delta))\) for the diagonal and Type~I
protocols, and
\(\Or(N\epsilon^{-3}\log^6(2/\epsilon)\log(N/\delta))\) for the
Type~II protocol.}
\end{theorem}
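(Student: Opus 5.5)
The plan is to prove the three learning guarantees separately, one for diagonal coefficients, one for Type~I and one for Type~II, and then combine them with a union bound over the $\Or(N)$ candidate components. Every protocol will follow the same template. A random local twirl, built from Pauli operators or from Pauli operators plus a few fixed non-Pauli rotations, projects the dynamics onto a small invariant sector in the Heisenberg picture. We then measure a local observable $O$ that sits inside a constant-size patch. By locality and bounded degree, the twirled generator restricted to the light cone of $O$ involves only $\Or(1)$ coefficients. The SPAM model of Definition~\ref{defn:spam_noise_model} requires only nonvanishing local visibility, so the unknown preparation and measurement enter each local signal as a multiplicative scalar $ms$ bounded away from zero. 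Pauli frame randomization is what makes this true, since it removes cross terms from the noisy state and measurement.

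\textbf{Diagonal and Type I.} Under the full Pauli twirl the effective generator is Pauli-diagonal, so $\langle P\rangle(t)=ms\,e^{-\lambda_P t}$. Here $\lambda_P$ is a linear combination of the $\alpha_{aa}$ whose anticommutation pattern with $P$ is nonzero. For Type~I we first conjugate by single-qubit Clifford or $\pi/4$ rotations on the qubit where $P_a$ and $P_b$ differ. This rotates the pair $(\sigma^\mu,\sigma^\nu)$ into a basis in which $\Re\alpha_{ab}$ appears as a diagonal rate of the rotated generator. We estimate $\lambda_P$ from the ratio of signals at $t=0$ and at $t=\Theta(1)$, so the factor $ms$ cancels. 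Hoeffding's inequality with $\Or(\epsilon^{-2}\log(N/\delta))$ shots gives each rate to accuracy $\epsilon$. Inverting the local linear system, whose size and condition number are $\Or(1)$, recovers the coefficients. Many patches can be probed in parallel by choosing twirl and measurement patterns from a graph colouring with $\Or(1)$ colours, so the experiment count carries no factor of $N$. The gate count is $N$ gates per layer times the number of twirl layers needed to approximate continuous-time twirling to error $\epsilon$, which is $\Or(\epsilon^{-1})$ layers.

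\textbf{Type II.} This is the main obstacle, since these components appear only antisymmetrically and never as decay rates. We generalize the single-qubit example in~\eqref{eq:single_qubit_partial_twirl_example}. We partially twirl the target qubit $j$ with $\{I,\sigma^z\}$ and fully Pauli-twirl the remaining qubits of the patch. The span of $\{\sigma^x_j Q,\ \sigma^y_j Q\}$, for the appropriate Pauli string $Q$ on the other qubits, is then an invariant 2D subspace in the Heisenberg picture. Its $2\times 2$ generator $A$ carries the Type~II parameter in its antisymmetric part. Preparing eigenstates of the relevant operators up to random Pauli sign flips, and correcting those signs classically, we reconstruct $ms\,e^{tA}$ entrywise. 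We estimate $ms$ at $t=0$ and divide it out, which is safe because $ms$ is bounded below. We then get $A_{12}-A_{21}$ from first derivatives at $t=0$ using the Chebyshev derivative estimator of Lemma~\ref{lem:chebyshev_derivative_estimation}. This takes samples at $\Or(\log(1/\epsilon))$ Chebyshev nodes on a time interval of length $\Or(1)$. The noise amplification of polynomial differentiation is $\mathrm{polylog}(1/\epsilon)$, which accounts for the $\log^4(2/\epsilon)$ overhead in experiments and the $\log^6$ overhead in gates. The key points to verify are that the invariant subspace survives the multi-qubit couplings after twirling, which is equation~\eqref{eq:typeii_effective_action}, and that the imperfections of finite twirling stay within $\epsilon$.
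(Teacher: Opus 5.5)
Your overall architecture is the paper's. You use a full Pauli twirl with signed Pauli-frame randomization and a log-ratio for the diagonal rates, a $T$-type rotation on the differing qubit for Type~I, and, for Type~II, an $\{I,Z\}$ partial twirl with a $2\times2$ Heisenberg block and Chebyshev differentiation. But the Type~II step fails as stated, in two places.

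\emph{First, your twirl does not produce an invariant subspace.} Twirling only the remaining qubits of the patch $C$ does not make $\operatorname{span}\{\sigma^x_qR',\sigma^y_qR'\}$ invariant. Take a $\chi$-entry with labels $U\otimes Q_1$ and $V\otimes Q_2$, where $Q_1\neq Q_2$ act outside $C$. It is untouched by the average over $\{I_q,Z_q\}\otimes\bar{\mathcal P}_{C\setminus\{q\}}$ whenever $U,V$ agree on $C\setminus\{q\}$ and $U_q\star V_q\in\{I,Z\}$. It then sends $O_C\otimes I$ to $\chi\,VO_CU\otimes Q_2Q_1$. For example, the pair $P_a=Z_q\otimes X_p$, $P_b=Z_q\otimes Y_p$ with $p\notin C$ contributes a nonzero multiple of $X_q\otimes Z_p$ to the derivative of $X_q$ whenever $\Im\alpha_{ab}\neq0$. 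The paper instead twirls with $\{I_q,Z_q\}\otimes\bar{\mathcal P}_{[N]\setminus\{q\}}$, so surviving terms have equal exterior factors, and these cancel on $O_C\otimes I$. In the parallel version, clusters must also be grouped so that no term meeting an active cluster contains another active distinguished qubit.

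\emph{Second, no single string isolates a Type~II component.} For each $R'\in\bar{\mathcal P}_{C\setminus\{q\}}$ the antisymmetric entry is $b_{R'}=2\sum_R(-1)^{\langle R,R'\rangle}\Im\chi^{(q,C)}_{IZ;R}$, and each restricted entry is itself a sum over exterior Paulis. So you need all $4^{|C|-1}$ strings, an inverse Pauli Fourier transform, and support-inclusion back-substitution of depth at most $\mathsf k$. In particular, a single-qubit $h_{Z_q}$, which the theorem counts as Type~II, appears only as $h_{Z_q}+\sum_{Q\neq I}\Im\alpha_{(I_q\otimes Q),(Z_q\otimes Q)}$, with $Q$ outside $C$. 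You obtain it only after subtracting dissipative coefficients learned on other clusters, and your proposal never says how. The same structure is what your ``local linear system'' must mean for the diagonal and Type~I cases: each cluster yields only $\gamma_{P,C}=\sum_{a:P_{a,C}=P}\alpha_{aa}$. Identifiability then needs an interaction-covering partition family, so that every term support lies in some cluster.

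Two further steps are asserted rather than proved.

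\emph{The common SPAM scalar.} Writing all four signals as $ms\,e^{tA}$ with one scalar requires the $X_q$ and $Y_q$ preparations and measurements to share visibilities. This holds because both are routed through the same parity $Z_S$ by product Cliffords. It is essential: with axis-dependent factors the data determine only $a^2-b^2$, not $b$.

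\emph{The $N$-independent pulse count.} You claim $\Or(\epsilon^{-1})$ pulse rounds suffice independently of $N$. For Type~II the count is $\Or(\log^2(1/\epsilon)/\epsilon)$, because the $\Or(n^2)$ derivative weights also amplify pulse bias; that, not sampling noise, is the source of the extra $\log^2$ in the gate count. Either way, this needs a local Heisenberg-picture argument. The paper counts nonvanishing words $\mathcal L^\dagger_{\alpha_j}\cdots\mathcal L^\dagger_{\alpha_1}[O_C]$ for $|C|\le\mathsf k$ to get an $N$-independent one-step error $\Or(\Delta^2)$. It then telescopes, using that the ideal twirled generator preserves the algebra on $C$, which again relies on the exterior twirl. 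A global estimate would scale with $N$ and break the stated gate counts.
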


\paragraph{{Boundary-induced gauge invariance of Hamiltonian coefficients.}}
{A generically gauge-dependent component may nevertheless be gauge invariant at a particular boundary Lindbladian: 
any gauge transformation either transforms the boundary Lindbladian to something unphysical, or leaves the component value unchanged.
We use positivity of the Kossakowski matrix to give efficiently checkable sufficient conditions for this boundary-induced invariance of Hamiltonian coefficients under gauge transformations commuting with single-qubit control. These additional invariants depend on the instance and are distinct from the universally gauge-invariant components learned above. Learning these additional parameters in general remains an open problem.}

\begin{theorem}[{Boundary-induced Hamiltonian gauge invariance}, informal version of Theorems~\ref{thm:hamiltonian_product_condition}
and~\ref{thm:hamiltonian_odd_weight_condition}]
Let $\mathcal L$ and $\mathcal L'$ be physical Lindbladians related by a gauge
transformation that commutes with all single-qubit control.  A Hamiltonian
coefficient $h_c$ is gauge invariant if no two Pauli operators with nonzero
diagonal Kossakowski coefficients have projective product $P_c$, provided the
gauge factors are positive.  When $P_c$ has odd weight, it is enough to check
only pairs obtained by restricting $P_c$ to complementary subsets of its
support, and no positivity assumption on the gauge factors is needed. 
\end{theorem}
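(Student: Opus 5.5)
The plan is to pin down the general form of a gauge transformation $\mathcal S$ that commutes with all single-qubit unitaries, and then compare the Kossakowski matrices of $\mathcal L$ and $\mathcal L'=\mathcal S\mathcal L\mathcal S^{-1}$.

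\textbf{Step 1: structure of $\mathcal S$.} Since $\mathcal S$ commutes with every product unitary channel $U_1\otimes\cdots\otimes U_N$, Schur's lemma applied to the decomposition of operator space into irreducibles of $\prod_j SU(2)$ (on each qubit, identity $\oplus$ traceless) shows that $\mathcal S$ acts on each Pauli string $P$ as multiplication by a scalar $\lambda_{\supp P}$. This scalar depends only on the support, so $\mathcal S(P)=\lambda_{\supp P}P$ with $\lambda_S\neq 0$. The local depolarizing gauge is the product case $\lambda_S=\prod_{j\in S}\theta_j$.

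\textbf{Step 2: how the coefficients transform.} I will compute how $\mathcal L$ acts on Pauli strings, $\mathcal L^\dagger(P_d)=\sum_e L_{ed}P_e$, so that $\mathcal S\mathcal L\mathcal S^{-1}$ has matrix entries $L'_{ed}=(\lambda_{\supp e}/\lambda_{\supp d})L_{ed}$. The key quantities are the commuting-versus-anticommuting structure constants. The Hamiltonian $h_c$ contributes to $L_{ed}$ exactly when $P_e\propto P_cP_d$ with $\{P_c,P_d\}=0$. The dissipator $\mathcal D_{ab}$ contributes as well, and the parts that anticommute with a Hamiltonian-type term are governed by $\Im\alpha_{ab}$, with $P_aP_b\propto P_c$. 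I would then isolate $h_c$ by choosing a specific pair $(d,e)$ and reading off the antisymmetric combination of $L_{ed}$ and $L_{de}$. Concretely, take $P_d$ and $P_e=P_cP_d$ with equal support, so that $\lambda_{\supp e}/\lambda_{\supp d}=1$. This makes the entry itself invariant, and $h_c$ then differs from invariant quantities only by contributions $\Im\alpha_{ab}$ with $P_aP_b\propto P_c$.

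\textbf{Step 3: use positivity of the Kossakowski matrix.} For the first condition, if $\alpha_{aa}\alpha_{bb}=0$ for every pair with $P_aP_b\propto P_c$, then PSD-ness of $A'$ forces $\alpha'_{ab}=0$. The diagonal coefficients are universally invariant (Theorem~\ref{thm:gauge_invariant_coefficients}, extended to general $\lambda$), so this holds in both $\mathcal L$ and $\mathcal L'$. Hence the confounding $\Im\alpha_{ab}$ terms vanish, and $h_c'$ equals an entry that transforms with ratio $\lambda_S/\lambda_S$. Where the ratio is not trivially $1$, I would use positivity of the gauge factors to rule out sign-flip ambiguities. This is a product of ratios that must match an unchanged symmetric part, forcing the ratio to be $1$.

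For the odd-weight refinement, choose $P_d$ supported on a subset of $\supp P_c$ that anticommutes appropriately. Because $|\supp P_c|$ is odd, a Pauli $P_d$ that agrees with $P_c$ on some sites and differs on others can be chosen with $\supp(P_cP_d)=\supp P_d=\supp P_c$. The only dissipative pairs that can pollute this entry are then restrictions of $P_c$ to complementary subsets, and the ratio is identically $1$ without needing a sign assumption.

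\textbf{Main obstacle.} I expect the main difficulty to be the bookkeeping in Step 2: verifying exactly which $(a,b)$ pairs contaminate the chosen matrix entry, and showing that, under the odd-weight parity argument, these are only the complementary restrictions. I would first do the two-qubit case explicitly to fix the signs, then generalize by a site-by-site commutation count.
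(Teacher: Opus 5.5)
Your treatment of the positive-gauge-factor condition is essentially the paper's argument. Positivity of the Kossakowski matrix, together with invariance of its diagonal, kills every $\alpha_{ab}$ and $\alpha'_{ab}$ with $P_a\star P_b=P_c$; the paper proves this diagonal invariance separately for general $g_S$, and it does hold. The entries $L_{ed}$ and $L_{de}$ with $P_e=P_c\star P_d$ and $\{P_c,P_d\}=0$ are then purely Hamiltonian. They rescale by $r$ and $r^{-1}$, and $h'_c=rh_c=r^{-1}h_c$ with $r>0$ forces $r=1$.

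The odd-weight refinement, however, has a genuine gap at exactly the bookkeeping step you flagged. A single antisymmetrized entry $L_{ed}-L_{de}$, even with $\supp P_d=\supp P_e=S_c$, is contaminated by many more pairs than the complementary restrictions. Work out the contribution of a left--right term $U\mathbin{\cdot}V$ with $U\star V=P_c$ site by site:
\begin{itemize}
\item a site $j\in S_c$ with $\{U_j,V_j\}=\{P_{c,j},I\}$ gives a factor $\pm i$ that flips sign under $d\leftrightarrow e$;
\item a site $j\in S_c$ where $U_j,V_j$ are the two Paulis other than $P_{c,j}$ gives the symmetric factor $1$;
\item a site outside $S_c$, where $U_j=V_j$ and $P_d$ is the identity, gives $1$ whatever $U_j$ is.
\end{itemize}
So $L_{ed}-L_{de}$ contains $\Im\alpha_{ab}$ for every pair with $P_a\star P_b=P_c$ that has an odd number of sites of the first kind, with an arbitrary common factor outside $S_c$.

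For example, take $P_c=Z_1Z_2Z_3$ and $P_d=X_1X_2X_3$. The pairs $(Z_1X_2X_3,\,Y_2Y_3)$ and $(Z_1Z_2Z_3X_4,\,X_4)$ both contribute. Neither is a pair of complementary restrictions, so the odd-weight hypothesis says nothing about them. Both have $|D(P_a,P_b)|=3$, so their imaginary parts are generically gauge dependent. Invariance of your entry therefore only gives invariance of $h_c$ plus an uncontrolled combination of such terms. No other single choice of $P_d$ avoids this: an exterior part of $P_d$ only multiplies the exterior contaminants by signs, it does not remove them.

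The missing idea is to replace the single entry by a specific signed sum of all equal-support entries. This is the pairing $\Tr(\mathcal W_c^\dagger\mathcal L)$ with $\mathcal W_c=\bigl(\bigotimes_{j\in S_c}[P_{c,j},\cdot\,]\bigr)\otimes\mathcal I_{S_c^c}$, as in the paper.
\begin{itemize}
\item In transfer-matrix language, $\mathcal W_c$ sums over all $P_d$ with $P_{d,j}\notin\{I,P_{c,j}\}$ for $j\in S_c$ and arbitrary outside $S_c$.
\item It maps each Pauli to a multiple of one with the same support, so it commutes with $\mathcal G$ and the pairing is gauge invariant with no sign condition on $g_S$.
\item The sum over exterior Paulis annihilates every common factor $Q\neq I$.
\item The independent antisymmetrization on each site of $S_c$ annihilates every site of the second kind.
\end{itemize}
Hence the left--right expansion of $\mathcal W_c$ is exactly $\sum_{R\subseteq S_c}(-1)^{|S_c\setminus R|}P_{c,R}\mathbin{\cdot}P_{c,S_c\setminus R}$. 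The anticommutator parts of each dissipator cancel between $P_c\mathbin{\cdot}I$ and $I\mathbin{\cdot}P_c$. The only surviving dissipative pairs are the complementary restrictions, which the hypothesis and positivity kill. Your parity observation enters here: odd $|S_c|$ gives $P_c\mathbin{\cdot}I$ and $I\mathbin{\cdot}P_c$ opposite signs, so the pairing equals $-2i\,4^Nh_c$.
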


As an example application of this theorem, let us consider a Lindbladian
\[
\mathcal{L}[\rho] = -i[H,\rho] + \gamma\sum_{i}(\sigma^x_i\rho \sigma^x_i-\rho), \quad
H=\sum_{i} J_{i,i+1}\sigma^z_i \sigma^z_{i+1},
\]
then the coefficients $J_{i,i+1}$ {exhibit boundary-induced gauge invariance for positive gauge factors} because multiplying two Pauli operators appearing in the dissipative part, i.e., $\sigma_i^x$ in this case, cannot give us $\sigma^z_i \sigma^z_{i+1}$. This is reminiscent of the Hamiltonian-not-in-Lindbladian-span condition for the attainability of the Heisenberg limit \cite{Zhou2017AchievingTH}.

\subsection{Related work on SPAM noise and control assumptions}
\label{sec:related_spam}

\begin{table}[t]
\centering
\small
\setlength{\tabcolsep}{3pt}
\renewcommand{\arraystretch}{1.18}

\begin{tabular}{
p{0.22\linewidth}
p{0.18\linewidth}
p{0.15\linewidth}
p{0.20\linewidth}
p{0.18\linewidth}}
\toprule
\textbf{Work}
&
\textbf{\makecell[l]{SPAM noise}}
&
\textbf{Scalable}
&
\textbf{\makecell[l]{Classification \\ of learnability}}
&
\textbf{Trusted control}
\\
\midrule

Stilck Fran\c{c}a \emph{et al.}~\cite{StilckFrança2024}
&
Known
&
Yes
&
None
&
Single-qubit
\\

Liu \emph{et al.}~\cite{liu2025robust}
&
Unknown
&
No
&
Partial
&
None
\\

Ivashkov \emph{et al.}~\cite{ivashkov2026ansatz}
&
Known
&
Conditional
&
None
&
Single-qubit
\\

Sinha~\cite{sinha2026efficient}
&
Unknown
&
Yes
&
Full
&
Multi-qubit
\\

Möbus \emph{et al.}~\cite{mobus2026robust}
&
Ideal
&
Yes
&
None
&
Single-qubit
\\

Lewis \emph{et al.}~\cite{lewis2026structure}
&
Ideal
&
Yes
&
None
&
Single-qubit
\\

Zhou and Gong~\cite{zhouGong2026few}
&
Known
&
Yes
&
None
&
Single-qubit
\\

\emph{This work}
&
Unknown
&
Yes
&
Full
&
Single-qubit\\
\bottomrule
\end{tabular}
\caption{Comparison of SPAM noise assumptions, scalability, identifiability results, and trusted-control requirements of selected Lindbladian learning works. 
For SPAM noise assumptions, ``ideal'' means the work assumes ideal SPAM without noise; ``known'' means the work assumes known SPAM channel, which can also be restricted to a specific form; ``unknown'' means no prior assumption on the form of the SPAM noise channels are assumed. 
Scalability here means whether learning a Lindbladian on $N$ qubits, under sparse or bounded-degree local assumption, can be accomplished in time $\mathrm{poly}(N)$. The scalability guarantee of Ref.~\cite{ivashkov2026ansatz} is conditional because it depends on a conditioning factor without a priori bound.
For classification of learnability, ``full'' denotes a classification of every coefficient within the stated model; ``partial'' denotes a characterization of only selected components or a global gauge freedom; ``none'' denotes the absence of an unknown-SPAM identifiability analysis. {For this work, ``full'' refers to the classification into universally gauge-invariant and generically gauge-dependent components; learning additional boundary-induced invariants in general remains open.}}
\label{tab:spam-control-comparison}
\end{table}

Prior approaches to Lindbladian learning differ substantially in how they
treat state-preparation-and-measurement (SPAM) noise.  
In Table~\ref{tab:spam-control-comparison} we compare a selected set of Lindbladian learning works in terms of their assumption on the SPAM noise, whether they can efficiently scale to large systems, whether they provide a full classification of the learnability of Lindbladian components, and what forms of trusted control they use. To the best of our knowledge, our work is the only one that addresses SPAM noise of unknown form, is scalable for large systems (having only $\Or(\log(N))$ overhead for $N$ qubits), provides {a complete classification into universally gauge-invariant and generically gauge-dependent Lindbladian components}, and only uses trusted single-qubit control.

Below we will group the relevant works in terms how they address the SPAM noise, and provide a more detailed discussion.

\paragraph{Ideal SPAM.} All works in this category derive their main learning guarantees assuming either ideal state preparation or ideal measurement \cite{arad2026near,mobus2026robust,lewis2026structure,song2026constant,chenYu2026arbitrary,romanov2026qec,bairey2020learning,olsacher2025liouvillian,onorati2023fitting}. Ref.~\cite{olsacher2025liouvillian} does not require exact preparation of the intended initial states, but does not analyze unknown measurement noise. Ref.~\cite{onorati2023fitting} explicitly discusses
state-preparation and measurement errors, but does not provide a guarantee
relating such errors to the accuracy of the recovered Lindbladian.

\paragraph{Known SPAM noise channels.}
Several works account for SPAM errors by treating the corresponding noise channels as exactly known.  
Ref.~\cite{zhouGong2026few} assumes known Pauli-diagonal channels for both state preparation and measurement, while Ref.~\cite{ivashkov2026ansatz} specializes to known independent single-qubit depolarizing channels.  
One of the regimes analyzed in Ref.~\cite{StilckFrança2024} likewise assumes known SPAM channels that can be efficiently incorporated into the learning procedure. 
In practice, SPAM channels are generally not known a priori, because self-consistent characterization of state preparations and measurements determines them only up to gauge transformations unless additional operations are assumed to be trusted \cite{nielsen2020gate}.

\paragraph{Unknown SPAM without systematic identifiability analysis.} 
Unknown SPAM channels are often discussed in the literature without a detailed analysis of their effects.  
Ref.~\cite{samach2022lindblad} fits an initial SPAM noise model using maximum likelihood. 
It does not distinguish what degrees of freedom are learnable, and the runtime is exponential in the number of qubits. 
Ref.~\cite{StilckFrança2024} analyzes a regime with unknown SPAM noise that creates an error floor on achievable precision. 
Ref.~\cite{seif2021compressed} analyzes the effect of small unknown SPAM errors, showing that first-order contributions can be absorbed into the amplitude and offset of the decay signal.
However, these two works do not show which part of the Lindbladian is limited by SPAM in the achievable precision and which part can be learned to arbitrary precision. Ref.~\cite{dobrynin2025compressed} applies Lindbladian tomography to experimental data containing SPAM and acknowledges that SPAM limits the achievable accuracy. 
It does not isolate or quantify the resulting SPAM-induced bias.

\paragraph{Unknown SPAM with systematic identifiability analysis.}
Several works explicitly address learning in the presence of unknown SPAM noise, although under more restrictive dynamical models or stronger control assumptions. 
Ref.~\cite{liu2025robust} combine gate-set tomography with Lindbladian fitting to jointly estimate Lindbladian models for a collection of noisy gates and the associated state preparations and measurements, subject to the similarity-transformation gauge freedom.
It does not determine which individual Hamiltonian and dissipative coefficients remain identifiable, and its reliance on full process tomography makes the runtime exponential in the number of qubits. 
Refs.~\cite{vandenberg2023pec,vandenbergWocjan2024techniques,chen2023learnability,chen2026efficient} instead consider Pauli or Pauli-Lindblad noise associated with Clifford circuits. 
Among these, Ref.~\cite{chen2023learnability} gives a complete characterization of the learnable and unlearnable combinations of Pauli-noise parameters. 
Ref.~\cite{chen2026efficient} extends this characterization to structured Pauli-noise models and provides efficient protocols for local and quasi-local noise. 
Its experimental framework is closely related to averaged circuit eigenvalue sampling (ACES)~\cite{flammia2021averaged}, while explicitly incorporating unknown Pauli SPAM channels into the learnability analysis. 
These methods are restricted to Pauli-diagonal noise and do not cover general local Lindbladians containing coherent Hamiltonian terms and off-diagonal dissipative couplings. 
Ref.~\cite{sinha2026efficient} gives a complete classification, at the level of individual coefficients, of the learnable and unlearnable components of a sparse Lindbladian under access to arbitrary trusted multiqubit unitaries. 
By contrast, the present work requires only trusted single-qubit gates and determines which coefficients of a general local Lindbladian remain identifiable under this weaker control assumption.

\paragraph{Trusted control.}
SPAM assumptions should be distinguished from control requirements.  
The product-Pauli protocols in
Refs.~\cite{arad2026near,mobus2026robust,lewis2026structure,song2026constant,ivashkov2026ansatz,zhouGong2026few} require no trusted entangling gates.  
By contrast, Ref.~\cite{chenYu2026arbitrary,sinha2026efficient} use trusted multiqubit unitaries, and Ref.~\cite{romanov2026qec} requires ancillary qubits and interleaved stabilizer error correction.  
Our protocol only assumes
trusted single-qubit gates, and requires neither ancillary
qubits nor trusted multi-qubit control, while allowing unknown SPAM.

\section{Preliminaries}
\label{sec:preliminaries}

We denote the Pauli matrices by $\sigma^x,\sigma^y,\sigma^z$.
We denote  the $N$-qubit Pauli group by $\mathcal{P}_N$
\begin{equation}
    \mathcal{P}_N = \{i^{r}P_1\otimes P_2\otimes\cdots\otimes P_N: P_j\in \{I,\sigma^x,\sigma^y,\sigma^z\}, r=0,1,2,3\}.
\end{equation}
We denote by $\bar{\mathcal{P}}_N$ the projective Pauli group
\begin{equation}
    \bar{\mathcal{P}}_N = \mathcal{P}_N/\braket{i}.
\end{equation}
Because $\bar{\mathcal{P}}_N$ is isomorphic to the vector space $\mathbb{F}_2^{2N}$, the symplectic inner product in $\mathbb{F}_2^{2N}$ induces a symplectic inner product for $\bar{\mathcal{P}}_N$, which, for $\bar{P},\bar{Q}\in \bar{\mathcal{P}}_N$, can be defined as
\begin{equation}
\label{eq:Pauli_inner_product}
    \braket{\bar{P},\bar{Q}} = \begin{cases}
        0,\quad \text{if } PQ=QP,\\
        1,\quad \text{if } PQ\neq QP.\\
    \end{cases}
\end{equation}
Below, unless otherwise stated, we will not distinguish between $\bar{P}$ and its representative element $P$, which we will take to be the tensor product $\{I,\sigma^x,\sigma^y,\sigma^z\}$ without an additional $\braket{i}$ factor.

For a Pauli operator $P=P_1\otimes\cdots\otimes P_N$, we define its \emph{support vector} $s(P)\in\{0,1\}^N$ by
\begin{equation}
\label{eq:support_vector_defn}
    s_i(P)
    =
    \begin{cases}
        0, & P_i=I,\\
        1, & P_i\in\{\sigma^x,\sigma^y,\sigma^z\}.
    \end{cases}
\end{equation}

For any $S\subset [N]$, we define a superoperator embedding map $\iota_S$. 
For any superoperator $\mathcal{A}$ acting on $S$, we define
\begin{equation}
    \label{eq:embedding_map}
    \iota_S(\mathcal{A}) = \mathcal{A}\otimes \mathcal{I}_{S^c}.
\end{equation}

\section{Problem setup}
\label{sec:problem_setup}

We consider learning a Lindbladian on  $N$ qubits, labeled $a\in[N]$. 
The Lindbladian can be written as
\begin{equation}
\label{eq:lindbladian_defn}
    \mathcal{L}[\rho] = -i\sum_{c\in \mathcal{I}_h} h_c[P_c,\rho] + \sum_{(a,b)\in\mathcal{I}_d}\alpha_{ab} \mathcal{D}_{ab}[\rho],
\end{equation}
where
\begin{equation}
    \mathcal{D}_{ab}[\rho] = P_a\rho P_b-\frac{1}{2}\{{P_b P_a},\rho\}.
\end{equation}
{Here \(A=(\alpha_{ab})\) is the positive semidefinite Kossakowski matrix in the nonidentity Hermitian Pauli basis. We take \(h_c\in\mathbb R\) and fix the coefficient of the identity in the Hamiltonian to zero. The index sets \(\mathcal I_h,\mathcal I_d\) and their Pauli labels are known; only the coefficients are unknown, and they satisfy \(|h_c|,|\alpha_{ab}|\leq1\). Coefficients outside these lists are zero. The locality and degree bounds below apply every term with index in $\mathcal{I}_h$ and $\mathcal{I}_d$.}
We will call each $-ih_c[P_c,\cdot]$ and $\alpha_{ab}\mathcal{D}_{ab}$ a term in the Lindbladian, the former a \emph{coherent term} and the latter a \emph{dissipative term}. We will also define their support to be
\begin{equation}
    S_{c} = \mathrm{supp}(P_c),\quad S_{(a,b)} = \mathrm{supp}(P_a)\cup\mathrm{supp}(P_b).
\end{equation}
{We study the learnability of the coefficients \(h_c\) and \(\alpha_{ab}\). Since the real and imaginary parts of a dissipative coefficient may have different learnability properties, we refer to \(h_c\), \(\Re\alpha_{ab}\), and \(\Im\alpha_{ab}\) as Lindbladian \emph{components}.}

For Lindbladian terms, we will define an interaction graph
\begin{definition}[Interaction graph]
    The interaction graph $G_{\mathrm{int}}=(V_{\mathrm{int}},E_{\mathrm{int}})$ is the graph where $V_{\mathrm{int}}= \mathcal{I}_h\cup \mathcal{I}_d$, and for $\alpha,\beta\in V_{\mathrm{int}}$, $(\alpha,\beta)\in E_{\mathrm{int}}$ if and only if $S_\alpha \cap S_\beta \neq \varnothing$.
\end{definition}

We will assume that the Lindbladian is $(\mathsf{k},\mathsf{d})$-bounded degree local, with $\mathsf{k},\mathsf{d}=\Or(1)$, according to the following definition
\begin{definition}[Bounded-degree local]
    We say the Lindbladian $\mathcal{L}$ is $(\mathsf{k},\mathsf{d})$-bounded degree local if $\max_{\alpha\in V_{\mathrm{int}}}|S_\alpha|\leq \mathsf{k}$ and $G_{\mathrm{int}}$ has degree at most $\mathsf{d}$.
\end{definition}

\begin{definition}[Support pattern]
\label{def:support_pattern}
A \emph{support pattern} on \(N\) qubits is a finite family
$
    \mathsf{P}=\{C_1,\ldots,C_m\}
$
of nonempty subsets of \([N]\), called \emph{patches}. The
intersection graph of \(\mathsf{P}\) has one vertex for each patch,
with an edge between distinct patches \(C,C'\in\mathsf{P}\) exactly
when \(C\cap C'\neq\varnothing\). We say that \(\mathsf{P}\) is
\((\mathsf{k},\mathsf{d})\)-bounded degree local if
$
|C|\leq\mathsf{k}
$
for every $C\in \mathsf{P}$
and its intersection graph has degree at most \(\mathsf{d}\).
\end{definition}

\begin{definition}[Superoperators with given support pattern]
\label{defn:superops_with_support_pattern}
    For \(C\subseteq[N]\), let \(\mathcal{V}_C\) be the real vector space of Hermiticity-preserving, trace-annihilating superoperators on the qubits in \(C\). We define $\mathcal{V}_{\mathsf{P}}$ to be
    \begin{equation}
    \label{eq:local_generator_space}
        \mathcal{V}_{\mathsf{P}}
        =
        \sum_{C\in\mathsf{P}}\iota_C(\mathcal{V}_C).
    \end{equation}
    We denote the set of Lindbladians supported on \(\mathsf{P}\) by
\begin{equation*}
    \operatorname{Lind}(\mathsf{P})
    =
    \left\{
        \mathcal{L}\in\mathcal{V}_{\mathsf{P}}:
        \mathcal{L}\text{ is a Lindbladian}
    \right\}.
\end{equation*}
\end{definition}

If \(\mathsf{P}\) is
\((\mathsf{k},\mathsf{d})\)-bounded degree local, then every qubit belongs to at
most \(\mathsf{d}+1\) patches, since all patches containing that
qubit form a clique in the intersection graph. Consequently,
\begin{equation}
\label{eq:number_of_support_patches}
    |\mathsf{P}|
    \leq
    \sum_{C\in\mathsf{P}}|C|
    \leq
    (\mathsf{d}+1)N.
\end{equation}
In particular, \(|\mathsf{P}|=\Or(N)\) when
\(\mathsf{d}=\Or(1)\). The distinct nonempty term supports of a
\((\mathsf{k},\mathsf{d})\)-bounded-degree local Lindbladian form a
\((\mathsf{k},\mathsf{d})\)-bounded degree local support pattern.

We model the SPAM noise as follows
\begin{definition}[SPAM noise model]
\label{defn:spam_noise_model}
    Every experiment involves initializing the quantum system in state $\ket{0^N}$, and measuring all qubits in the computational basis at the end of the experiment. State preparation involves noise that corrupts the actual prepared state to be $\rho_0$, satisfying 
    \[
    \Tr[Z_C \rho_0]\geq r_{p},
    \]
    where $Z_C = \prod_{a\in C} \sigma^z_a$,
    for every $C\subset[N]$ such that $|C|\leq \mathsf{k}$. Measurement involves a noise channel $\mathcal{E}$ that corrupts the actual measured operator from $X$ to $\mathcal{E}^\dag[X]$, and it satisfies
    \[
    \overline{\Tr}[X \mathcal{E}^\dag[X]]\geq r_m,
    \]
    {for every Hermitian Pauli operator \(X\) with \(|\supp(X)|\leq\mathsf k\), where \(\overline{\Tr}\) is the normalized trace. We assume that \(\mathcal E\) is CPTP and that \(r_p,r_m>0\) are independent of the system size \(N\).}
\end{definition}
Our SPAM assumption is weaker than global closeness to ideal preparation and measurement. {We require nonvanishing preparation visibility for computational-basis Pauli parities and nonvanishing diagonal measurement visibility for Pauli operators of weight at most \(\mathsf k=\Or(1)\), with lower bounds independent of system size.} The global SPAM operations may be arbitrarily far from ideal as $N$ increases.

\section{Gauge transformations and gauge freedom}
\label{sec:gauge_transformations_freedom}

In this section, we introduce a class of gauge transformations that change Lindbladian components while leaving all experimental outcomes unchanged. We then classify the components as {universally gauge invariant or generically gauge dependent} and quantify the extent to which the gauge-dependent components can vary.

\subsection{Gauge transformations}
\label{sec:gauge_transformations}
For $\theta=(\theta_1,\ldots,\theta_N)\in\mathbb{R}^N$, we define a superoperator $\Gamma_\theta$ through
\begin{equation}
    \Gamma_\theta[X]
    =
    \frac{1}{4}\sum_{i=1}^N\theta_i(\sigma_i^x X \sigma_i^x+\sigma_i^y X \sigma_i^y+\sigma_i^z X \sigma_i^z-3X),
    \quad
    \mathcal{S}_\theta=e^{\Gamma_\theta}.
\end{equation}
By this definition, we have
\begin{equation}
\label{eq:gauge_on_pauli}
    \Gamma_\theta[P]
    =
    -\theta\cdot s(P)P,
    \quad
    \mathcal{S}_\theta[P]
    =
    e^{-\theta\cdot s(P)}P.
\end{equation}
We will define a class of gauge transformations through the similarity transformation
\begin{equation}
\label{eq:gauge_similarity}
    \mathcal{L}
    \longmapsto
    \mathcal{L}_\theta
    =
    \mathcal{S}_\theta
    \circ\mathcal{L}
    \circ\mathcal{S}_\theta^{-1}.
\end{equation}
This transformation can be absorbed into the unknown SPAM operations. Indeed, for an initial state $\rho_0$ and an effective measured observable $M$, define
\[
    \rho_{0,\theta}=\mathcal{S}_\theta[\rho_0],
    \quad
    M_\theta=(\mathcal{S}_\theta^{-1})^\dag[M].
\]
Then we have
\begin{equation}
    \Tr\left[
        M_\theta e^{t\mathcal{L}_\theta}[\rho_{0,\theta}]
    \right]
    = \Tr\left[
        M_\theta (\mathcal{S}_\theta^{-1}\circ \mathcal{S}_\theta \circ e^{t\mathcal{L}_\theta}\circ\mathcal{S}_\theta^{-1}\circ\mathcal{S}_\theta[\rho_{0,\theta}])
    \right]
    =
    \Tr\left[
        M e^{t\mathcal{L}}[\rho_0]
    \right].
\end{equation}
The same identity holds in the presence of controls that commute with
$\mathcal{S}_\theta$, including arbitrary products of single-qubit
unitary channels. 
In other words, for channels $\mathcal{U}_1,\mathcal{U}_2,\cdots, \mathcal{U}_r$ that are tensor products of single-qubit unitary channels,
\begin{equation}
\label{eq:indistinguishability_gauge_transform}
    \Tr\left[
        M_\theta e^{t_1\mathcal{L}_\theta}\mathcal{U}_1\cdots e^{t_r\mathcal{L}_\theta}\mathcal{U}_r[\rho_{0,\theta}]
    \right]
    =
    \Tr\left[
        M e^{t_1\mathcal{L}}\mathcal{U}_1\cdots e^{t_r\mathcal{L}}\mathcal{U}_r[\rho_{0}]
    \right].
\end{equation}
Thus, whenever the transformed preparation state and measurement observable
remain admissible under the SPAM noise model, the experimental data
cannot distinguish $\mathcal{L}$ from $\mathcal{L}_\theta$. 
{
\begin{remark}[Admissible SPAM for the gauge construction]
\label{rem:admissible_spam_gauge}
Let \(\mathcal D_\lambda[X]=\lambda X+(1-\lambda)\Tr[X]I/2\)
be the single-qubit depolarizing channel for \(0\leq\lambda\leq1\).
For any Lindbladian \(\mathcal L\), choose preparation and
measurement-noise channels
\(\mathcal E_{\mathrm p}=\mathcal E=\mathcal D_{e^{-2}}^{\otimes N}\),
with \(\rho_0=\mathcal E_{\mathrm p}[\ket{0^N}\!\bra{0^N}]\).
For every \(\theta\in[-1,1]^N\), the transformed channels are
\begin{equation*}
\begin{aligned}
    \mathcal E_{\mathrm p,\theta}
    =\mathcal S_\theta\circ\mathcal E_{\mathrm p}
    =\bigotimes_{j=1}^N\mathcal D_{e^{-2-\theta_j}},\quad 
    \mathcal E_\theta
    =\mathcal E\circ\mathcal S_\theta^{-1}
    =\bigotimes_{j=1}^N\mathcal D_{e^{-2+\theta_j}}.
\end{aligned}
\end{equation*}
All multipliers lie in \([e^{-3},e^{-1}]\), so both channels are
CPTP. Their local preparation and measurement visibility factors
are products of at most \(\mathsf k\) such multipliers. Thus the
original and transformed SPAM models satisfy
Definition~\ref{defn:spam_noise_model} with
\(r_p=r_m=e^{-3\mathsf k}\), independently of \(N\) for fixed
locality \(\mathsf k\). Whenever \(\mathcal L_\theta\) is also a
Lindbladian, Eq.~\eqref{eq:indistinguishability_gauge_transform}
shows that the original and transformed models give identical
outcome distributions for every allowed controlled experiment.
\end{remark}
The indistinguishability identity~\eqref{eq:indistinguishability_gauge_transform}
implies that a quantity can be identified uniformly over the allowed
model class only if it is invariant under gauge transformations for
which both the generator and SPAM remain admissible.
For the gauge-dependence constructions below, since
Remark~\ref{rem:admissible_spam_gauge} already supplies admissible
SPAM channels that make the original and transformed models
experimentally indistinguishable for every \(\theta\in[-1,1]^N\), 
we may therefore focus on the Lindbladian and its physical constraints.}

While the family $\mathcal S_\theta$ suffices for a generic characterization of gauge freedom, as we will see in Section~\ref{sec:gauge_transformations_freedom}, it is not the most general gauge transformation
that commutes with all single-qubit unitary channels.  We recall the standard characterization of all such transformations \cite{chen2020robust,helsen2023shadow}.  For each
$S\subseteq[N]$, let $\Pi_S$ denote the linear projection onto the Pauli
operators with exact support $S$; equivalently,
\begin{equation}
    \Pi_S[P]
    =
    \begin{cases}
        P, & \supp(P)=S,\\
        0, & \supp(P)\neq S.
    \end{cases}
    \label{eq:exact_support_projection}
\end{equation}

\begin{proposition}[Commuting gauge transformations]
\label{prop:commuting_gauge_transformations}
A complex-linear superoperator $\mathcal G$ commutes with every single-qubit unitary channel acting on any one of the \(N\) qubits if and only if
\begin{equation}
    \mathcal G
    =
    \sum_{S\subseteq[N]}g_S\Pi_S,
    \qquad
    g_S\in\mathbb C.
    \label{eq:general_commuting_gauge}
\end{equation}
Equivalently, $\mathcal G[P]=g_{\supp(P)}P$ for every Pauli operator $P$.
If $\mathcal G$ is Hermiticity preserving, then every $g_S$ is real.  It is
invertible if and only if every $g_S$ is nonzero, while trace preservation and
unitality each require $g_\varnothing=1$.
\end{proposition}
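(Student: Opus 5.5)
The plan is to work in the Pauli basis and reduce the commutation condition to a per-qubit statement. The key tool is the tensor-product structure of the space of superoperators on $N$ qubits. Write $\mathcal G=\sum_{P,Q} G_{Q,P}\,|Q\rangle\!\rangle\langle\!\langle P|$, with $\mathcal G[P]=\sum_Q G_{Q,P}Q$. A single-qubit unitary channel $\mathcal U_j$ acting on qubit $j$ acts on Pauli strings as $\mathcal U_j[P]=P_{\neq j}\otimes R_j(P_j)$. Here $R_j$ fixes $I$ and acts on $\mathrm{span}\{\sigma^x,\sigma^y,\sigma^z\}$ by a rotation in $SO(3)$, and every element of $SO(3)$ occurs.

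\textbf{Sufficiency.} I first prove the ``if'' direction. Each $\mathcal U_j$ preserves the exact support of every Pauli expansion, because it maps $I$ to $I$ and nonidentity single-qubit Paulis to linear combinations of nonidentity ones. Hence $\mathcal U_j$ commutes with every $\Pi_S$. Linearity then gives commutation with $\sum_S g_S\Pi_S$.

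\textbf{Necessity.} The step I expect to be the main obstacle is showing that commuting with every $\mathcal U_j$ forces the form~\eqref{eq:general_commuting_gauge}. I will do it with representation theory of $G=SO(3)^{\times N}$, or more concretely by commutant arguments.

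On each qubit, the operator space $\mathbb C^4=\mathrm{span}\{I\}\oplus\mathrm{span}\{\sigma^x,\sigma^y,\sigma^z\}$ decomposes under $SO(3)$ as the trivial representation plus the (complexified) defining representation. Both are irreducible over $\mathbb C$ and mutually inequivalent. The defining representation is absolutely irreducible, since the complexification of the spin-1 representation stays irreducible.

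The full operator space is the tensor product over qubits. It therefore decomposes under $SO(3)^{\times N}$ as $\bigoplus_S V_S$, where $V_S=\bigotimes_{j\in S}\mathbf 3\otimes\bigotimes_{j\notin S}\mathbf 1$ is exactly the range of $\Pi_S$. Each $V_S$ is an outer tensor product of irreducibles of the factor groups, hence irreducible for the product group. Distinct $S$ give inequivalent representations, because their restrictions to factor $j$ differ whenever $j$ lies in exactly one of the two sets.

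Schur's lemma then says any intertwiner is a scalar $g_S$ on each $V_S$. This gives $\mathcal G[P]=g_{\supp(P)}P$.

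If I want to avoid representation theory, I can argue directly instead. Commuting with Pauli conjugations on each qubit ($\pi$ rotations) forces $G_{Q,P}=0$ unless $Q$ and $P$ agree up to the identity-versus-nonidentity type on each qubit; this needs a little care, since Pauli conjugation alone does not separate $I$ from $\sigma^x$. The $\pi/2$ rotations cyclically permuting $x\to y\to z$, together with the $\pi$ rotations, then force all entries within a fixed support to be equal and off-diagonal entries to vanish. Either route gives the same conclusion; I will present the Schur-lemma one and mention the explicit one as a check.

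\textbf{Remaining claims.} These are routine consequences of the form $\mathcal G[P]=g_{\supp(P)}P$.
\begin{itemize}
\item \emph{Hermiticity preservation.} Pauli strings are Hermitian, so $\mathcal G[P]=g_S P$ is Hermitian iff $g_S\in\mathbb R$.
\item \emph{Invertibility.} $\mathcal G$ is diagonal in the Pauli basis with eigenvalues $g_S$, so it is invertible iff every $g_S\neq0$.
\item \emph{Trace preservation.} $\Tr[\mathcal G[P]]=g_S\Tr[P]$ is nonzero only for $P=I$, i.e.\ $S=\varnothing$. So trace preservation is equivalent to $g_\varnothing=1$.
\item \emph{Unitality.} $\mathcal G[I]=g_\varnothing I$, so unitality is also equivalent to $g_\varnothing=1$.
\end{itemize}
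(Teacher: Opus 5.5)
Your proof is correct and follows essentially the paper's route. The paper also decomposes the operator space into mutually inequivalent irreducibles indexed by exact Pauli support, applies Schur's lemma, and uses preservation of exact support by single-qubit unitary channels for the converse; it states the decomposition for the local Clifford group, you state it for $SO(3)^{\times N}$, and this makes no real difference.

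One small slip in your optional explicit check: conjugation by all single-qubit Paulis does separate $I$ from $\sigma^x$ (e.g.\ conjugation by $\sigma^z$ negates $\sigma^x$ but fixes $I$). So Pauli conjugations alone already force $\mathcal G$ to be diagonal in the Pauli basis, after which local Cliffords equalize the eigenvalues within each support.
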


This proposition follows directly from the known 
decomposition of the local-Clifford conjugation representation into mutually
inequivalent irreducible subspaces indexed by exact Pauli support \cite{helsen2023shadow}; the same
decomposition is used in the analysis of local-Clifford shadow
protocols~\cite{chen2020robust}.  Schur's lemma implies that a map commuting
with all local Clifford channels acts as a scalar on each exact-support
subspace.  Conversely, each product of single-qubit unitary channels preserves
exact Pauli support, so every map in Eq.~\eqref{eq:general_commuting_gauge}
commutes with all such channels.  Hence there are no additional complex-linear
commuting gauge transformations.

The gauge transformation $\mathcal S_\theta$ considered above is the
$N$-parameter subfamily for which
$
    g_S
    =
    \exp\left(-\sum_{j\in S}\theta_j\right),
$
and $g_\varnothing=1$. 
Equation~\eqref{eq:general_commuting_gauge} also includes a depolarizing map
acting jointly on any collection $A\subseteq[N]$ of qubits.  
The indistinguishability identity \eqref{eq:indistinguishability_gauge_transform}
above applies to every invertible $\mathcal G$ in
Eq.~\eqref{eq:general_commuting_gauge} whenever the transformed preparation
state and measurement observable remain admissible under the SPAM noise model.

As discussed above, the restricted family of gauge transformations \(\mathcal S_\theta\) suffices for the analysis in Section~\ref{sec:gauge_transformations_freedom}. Any component that is {generically} gauge-dependent under this restricted family is necessarily {generically} gauge-dependent under the full class of gauge transformations. {For all universally gauge-invariant components in this classification, Theorems~\ref{thm:estimate_diagonal}, \ref{thm:estimate_typei}, and~\ref{thm:estimate_typeii} give learning protocols, which also establish invariance under every commuting gauge transformation that preserves the admissibility of the generator and SPAM.} The more general gauge transformations beyond \(\mathcal S_\theta\) will be needed in Section~\ref{sec:hamiltonian_learnability}.

\subsection{Gauge-dependence of Lindbladian components}
\label{sec:gauge_dependence}
We next determine the action of this gauge transformation on the
Pauli-basis coefficients of $\mathcal{L}$. Write the Pauli-transfer
representation as
\begin{equation}
\label{eq:ptm_representation}
    \mathcal{L}[\rho]
    =
    \frac{1}{2^N}
    \sum_{P,Q\in\bar{\mathcal{P}}_N}
    T_{P,Q}P\Tr[Q\rho].
\end{equation}
Because the representatives of $\bar{\mathcal{P}}_N$ are Hermitian,
the coefficients $T_{P,Q}$ are real whenever $\mathcal{L}$ is
Hermiticity preserving. Equation~\eqref{eq:gauge_on_pauli} gives
\begin{equation}
\label{eq:ptm_gauge_transform}
    T_{P,Q}
    \longmapsto
    e^{\theta\cdot(s(Q)-s(P))}T_{P,Q}.
\end{equation}

Equivalently, write $\mathcal{L}$ in the Pauli-basis $\chi$
representation as
\begin{equation}
\label{eq:chi_representation}
    \mathcal{L}[\rho]
    =
    \sum_{U,V\in\bar{\mathcal{P}}_N}
    \chi_{U,V}U\rho V.
\end{equation}
For $A,B\in\bar{\mathcal{P}}_N$, let $A\star B$ denote the chosen
Hermitian representative of the product of their projective Pauli
labels, and define $\omega(A,B)\in\{\pm1,\pm i\}$ by
\begin{equation}
    AB=\omega(A,B)(A\star B).
\end{equation}
The Pauli-twirling identity
\begin{equation}
    \frac{1}{4^N}
    \sum_{R\in\bar{\mathcal{P}}_N}
    RXR
    =
    \frac{\Tr[X]}{2^N}I
\end{equation}
implies
\begin{equation}
    \frac{1}{2^N}P\Tr[Q\rho]
    =
    \frac{1}{4^N}
    \sum_{R\in\bar{\mathcal{P}}_N}
    RQ\rho RP.
\end{equation}
Making the change of variables
$
U=R\star Q
$
and
$
V=R\star P
$
therefore gives
\begin{equation}
\label{eq:chi_from_ptm}
    \chi_{U,V}
    =
    \frac{1}{4^N}
    \sum_{R\in\bar{\mathcal{P}}_N}
    \varphi_R(U,V)
    T_{R\star V,R\star U},
\end{equation}
where
\begin{equation}
\label{eq:chi_gauge_transform_phase}
    \varphi_R(U,V)
    =
    \omega(R,R\star U)\omega(R,R\star V).
\end{equation}
Combining \eqref{eq:ptm_gauge_transform} and
\eqref{eq:chi_from_ptm}, we obtain
\begin{equation}
\label{eq:chi_gauge_transform}
    \chi_{U,V}
    \longmapsto
    \frac{1}{4^N}
    \sum_{R\in\bar{\mathcal{P}}_N}
    \varphi_R(U,V)
    e^{\theta\cdot
    (s(R\star U)-s(R\star V))}
    T_{R\star V,R\star U}.
\end{equation}

For two Pauli operators $U$ and $V$, define
\begin{equation}
    D(U,V)=\{i\in[N]:U_i\neq V_i\}
\end{equation}
and
\begin{equation}
    B(U,V)
    =
    \left\{
        i\in D(U,V):
        s_i(U)\neq s_i(V)
    \right\}.
\end{equation}
Thus, $B(U,V)$ consists of the differing qubits on which exactly one
of $U_i,V_i$ is the identity.

\begin{definition}[{Universal gauge invariance and generic gauge dependence}]
\label{def:generic_gauge_dependence}
Let \(\mathcal{V}\) denote the \emph{ambient generator space}, namely the real vector space of Hermiticity-preserving,
trace-annihilating superoperators on \(N\) qubits, and let
\(f:\mathcal{V}\to\mathbb{R}\) be a linear functional. Define
\begin{equation}
    I_f
    =
    \left\{
        \mathcal{L}\in \mathcal{V}:
        f(\mathcal{L}_\theta)=f(\mathcal{L})
        \text{ for every }\theta\in\mathbb{R}^N
    \right\}.
\end{equation}
{
\begin{enumerate}
    \item \emph{Universal gauge invariance.} We say that \(f\) is universally gauge invariant under \(\mathcal S_\theta\) if \(I_f=\mathcal V\).
    \item \emph{Gauge dependence at an ambient generator.} We say that \(f\) is gauge dependent at \(\mathcal L\in\mathcal V\) in the ambient sense if \(\mathcal L\notin I_f\).
    \item \emph{Generic gauge dependence in the ambient sense.} We say that \(f\) is generically gauge dependent in the ambient sense if \(I_f\) is a proper linear subspace of \(\mathcal V\).
    \item \emph{Generic gauge dependence among physical Lindbladians.} For a fixed support pattern \(\mathsf P\), let \(K_{\mathsf P}=\operatorname{Lind}(\mathsf P)\). We say that \(f\) is generically gauge dependent among physical Lindbladians in \(K_{\mathsf P}\) if there exists \(O_f\subseteq K_{\mathsf P}\), open and dense relative to \(K_{\mathsf P}\), whose complement in \(K_{\mathsf P}\) has Lebesgue measure zero in \(\mathcal V_{\mathsf P}\), such that for every \(\mathcal L\in O_f\) and every neighborhood \(U\) of zero in \(\mathbb R^N\), there is a \(\theta\in U\) with \(\mathcal L_\theta\in K_{\mathsf P}\) and \(f(\mathcal L_\theta)\neq f(\mathcal L)\). When coefficient bounds $|h_c|,|\alpha_{ab}|\leq 1$ are imposed, the same definition applies to the corresponding subset of \(K_{\mathsf P}\).
\end{enumerate}
Unless explicitly qualified as ambient, \emph{generic gauge dependence} refers to the physical notion in item~4.}
\end{definition}

{
\begin{remark}
For each fixed \(\theta\), the map \(\mathcal L\mapsto f(\mathcal L_\theta)-f(\mathcal L)\) is linear. Hence \(I_f\), being the intersection of the kernels of these maps, is a linear subspace of \(\mathcal V\). If it is proper, its complement is open, dense, and of full Lebesgue measure in \(\mathcal V\). Thus every linear coefficient functional is either universally gauge invariant or generically gauge dependent in the ambient sense, and a single example of ambient gauge dependence suffices to establish the latter. These observations do not impose positivity of the GKS matrix. For the locally allowed Hamiltonian and dissipative components classified below, Corollary~\ref{cor:generic_physical_gauge_dependence} establishes the further implication from ambient generic dependence to generic dependence among physical Lindbladians.
\end{remark}
}

\begin{theorem}[{Universal gauge invariance and ambient generic gauge dependence of Pauli-basis coefficients}]
\label{thm:gauge_invariant_coefficients}
Let $\mathcal{L}$ be Hermiticity preserving and trace-annihilating.

\begin{enumerate}
    \item If $U=V$, then $\chi_{U,U}$ is real and gauge invariant.

    \item If $|D(U,V)|=1$ and $B(U,V)=\varnothing$, then
    $\Re\chi_{U,V}$ is gauge invariant. Equivalently, $U$ and $V$
    differ on exactly one qubit, and both have a nonidentity Pauli on
    that qubit.

    \item If $|D(U,V)|=1$ and $|B(U,V)|=1$, then
    $\Im\chi_{U,V}$ is gauge invariant. Equivalently, $U$ and $V$
    differ on exactly one qubit, and exactly one has the identity on
    that qubit.
\end{enumerate}

{These are exactly the universally gauge-invariant components. Every other component is generically gauge dependent in the ambient sense of Definition~\ref{def:generic_gauge_dependence}.\footnote{All genericity statements in this theorem concern the ambient space \(\mathcal V\), but in Corollary~\ref{cor:generic_physical_gauge_dependence},
we strengthen this result to generic gauge dependence among physical
Lindbladians.}} In particular, the complementary
component of $\chi_{U,V}$ is generically gauge dependent when
$|D(U,V)|=1$, and both the real and imaginary components are
generically gauge dependent when $|D(U,V)|\geq 2$.
\end{theorem}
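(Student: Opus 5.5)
The plan is to work directly from the transformation law \eqref{eq:chi_gauge_transform}. I will view the map $\mathcal L\mapsto\chi_{U,V}(\mathcal L_\theta)$ as a linear functional of the PTM entries $T_{P,Q}$. Each term is weighted by $e^{\theta\cdot(s(R\star U)-s(R\star V))}$ and by the phase $\varphi_R(U,V)$.

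The key observation concerns the coordinates. For $i\notin D(U,V)$ we have $U_i=V_i$, hence $(R\star U)_i=(R\star V)_i$, and those coordinates contribute nothing to the exponent. So only qubits in $D(U,V)$ matter, and on such a qubit $i$ the exponent is $s_i(R_i U_i)-s_i(R_i V_i)\in\{-1,0,1\}$ as $R_i$ ranges over $\{I,X,Y,Z\}$.

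\textbf{Invariant cases.} I will handle these first.
\begin{enumerate}
\item For $U=V$ the exponent vanishes identically, so $\chi_{U,U}$ is invariant. It is real because Hermiticity preservation gives $\chi_{U,V}^*=\chi_{V,U}$.
\item For $|D|=1$ at qubit $i$ with $U_i,V_i$ distinct nonidentity Paulis, the exponent is nonzero only when $R_i\in\{U_i,V_i\}$. It equals $-1$ for one choice and $+1$ for the other.
\item For $|D|=1$ with, say, $U_i=I$ and $V_i=\sigma$, the exponent is nonzero only for $R_i\in\{I,\sigma\}$.
\end{enumerate}
In cases 2 and 3 I must show that the gauge-dependent part of $\chi_{U,V}(\mathcal L_\theta)$ is purely imaginary in case 2 and purely real in case 3. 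I will do this by pairing the terms. Denote the two admissible values of $R_i$ by $\{A,B\}$; they are $\{U_i,V_i\}$ in case 2 and $\{I,\sigma\}$ in case 3. I pair the term with $R_i=A$ against the term with $R_i=B$, keeping $R$ fixed off qubit $i$. For a pair $(R,R')$ of this kind, $R'\star V$ and $R\star U$ are the same Pauli, and likewise $R'\star U$ and $R\star V$. So the PTM entries in the two terms are transposes of each other.

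I then need to compare the phases $\varphi_R$ and $\varphi_{R'}$ using Hermiticity-preservation relations between $T_{P,Q}$ and $\chi$. Rather than compute this directly, I will use the relation $\chi_{U,V}=\overline{\chi_{V,U}}$ applied to the transformed generator. Since $\mathcal L_\theta$ is Hermiticity preserving, it satisfies $\chi_{U,V}(\mathcal L_\theta)=\overline{\chi_{V,U}(\mathcal L_\theta)}$. Swapping $U$ and $V$ flips the sign of the exponent. Hence the coefficient of $e^{+\theta_i}$ in $\chi_{U,V}(\mathcal L_\theta)$ equals the conjugate of the coefficient of $e^{-\theta_i}$ in $\chi_{V,U}(\mathcal L_\theta)$.

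It remains to relate the coefficients of $e^{+\theta_i}$ and $e^{-\theta_i}$ within $\chi_{U,V}$ itself. I will do this by an explicit single-qubit phase computation, reducing to the one-qubit case with the other coordinates factored out: $\omega$ is multiplicative over tensor factors, so $\varphi_R$ factorizes. The goal of that computation is to show that the $e^{\pm\theta_i}$ contributions combine as $c\,e^{\theta_i}-\bar c\,e^{-\theta_i}$ in case 2, which is purely imaginary, and as $c\,e^{\theta_i}+\bar c\,e^{-\theta_i}$ in case 3, which is purely real, possibly with the roles of the sign swapped. I expect this one-qubit phase bookkeeping to be the main obstacle and the place where errors are likely. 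As a sanity check I will verify it against the single-qubit example \eqref{eq:single_qubit_partial_twirl_example}, where $h_z$ and $\Re\alpha_{xy}$ are the invariant antisymmetric and symmetric couplings.

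\textbf{Dependence cases.} By the Remark, it suffices to exhibit a single ambient generator at which the component moves. For the complementary part when $|D|=1$, I will take a generator whose PTM contains only the pair of entries identified above with $c$ chosen so that the relevant combination varies with $\theta_i$; for example, I take $c$ real in case 2, so the real part becomes $c(e^{\theta_i}-e^{-\theta_i})$. For $|D|\ge2$ I will choose $R$ so that the exponent has a nonzero coordinate, and take a generator with a single pair of conjugate PTM entries. The distinct coordinates of $D$ give independent exponentials in $\theta$, so no cancellation between the real and imaginary parts can occur. Choosing that entry with a suitable phase makes both $\Re$ and $\Im$ depend on $\theta$.

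Trace annihilation only forbids $T_{I,Q}\neq0$. I will therefore choose entries with $P\neq I$, which is always possible because $|D|\ge1$ lets me pick $R\star V\ne I$. Hermiticity preservation only requires real $T$, which is automatic.
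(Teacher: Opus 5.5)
Your proposal has a genuine gap at the step that carries the theorem. The invariance in cases 2 and 3 is deferred to a ``one-qubit phase bookkeeping'' whose target is wrong, and the pairing/Hermiticity detour cannot supply it.

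\emph{Why the pairing fails.} The coefficients of $e^{+\theta_i}$ and $e^{-\theta_i}$ in $\chi_{U,V}(\mathcal L_\theta)$ come from disjoint families of PTM entries: those with $(R\star V)_i=I$ and those with $(R\star U)_i=I$, respectively. Your pairing only shows that these are transposes $T_{P,Q}$ versus $T_{Q,P}$, which are independent real numbers. The relation $\chi_{U,V}=\overline{\chi_{V,U}}$ relates two different entries. So no identity of the form $c\,e^{\theta_i}-\bar c\,e^{-\theta_i}$ holds. For one qubit, the $e^{+\theta}$ coefficient vanishes identically because it would require some $T_{I,Q}$. By contrast, amplitude damping gives a nonzero $e^{-\theta}$ coefficient, via $T_{Z,I}$, in $\chi_{X,Y}$.

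\emph{Why the target form is wrong.} The expression $c\,e^{\theta_i}-\bar c\,e^{-\theta_i}$ is not purely imaginary unless $\Re c=0$. Your later choice of real $c$ in case 2 would make $\Re\chi_{U,V}$ vary, contradicting item 2 itself.

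\emph{The missing idea.} No pairing is needed. Since $T_{P,Q}\in\mathbb R$ and $e^{\theta\cdot(\cdots)}>0$, every summand of Eq.~\eqref{eq:chi_gauge_transform} is individually purely real or purely imaginary, according to $\varphi_R(U,V)\in\{\pm1,\pm i\}$. What must be shown is the parity rule: $\varphi_R(U,V)$ is imaginary iff $|B(U,V)|+|G_R|$ is odd, where $G_R=\{i\in D(U,V): s_i(R\star U)\neq s_i(R\star V)\}$ is the set of qubits that actually rescale the summand. This is a local check on each qubit.
\begin{itemize}
\item Off $D$, the local phase is $\omega^2=\pm1$.
\item On $D\setminus B$, it is $\pm i$ exactly when $R_i\in\{U_i,V_i\}$, i.e.\ when $i\in G_R$.
\item On $B$, it is $\pm i$ exactly when $R_i\notin\{U_i,V_i\}$, i.e.\ when $i\notin G_R$.
\end{itemize}
With $|D|=1$, the rescaled summands ($G_R=D$) then have parity opposite to $|B|$. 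This gives items 2 and 3 at once.

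\emph{The dependence half has a matching defect.} Each $T_{P,Q}$ enters $\chi_{U,V}$ through exactly one summand, $R=P\star V$, with a phase fixed by $\varphi_R$, and PTM entries are real. So a generator with a single entry moves either $\Re\chi_{U,V}$ or $\Im\chi_{U,V}$, never both; there is no ``suitable phase'' to choose. You need, for each parity $p$ separately, an $R$ with $G_R\neq\varnothing$, $|B|+|G_R|\equiv p \pmod 2$, and $R\star V\neq I$.

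With the parity rule this is immediate, since every $G\subseteq D(U,V)$ is realized by some $R$:
\begin{itemize}
\item For $|D|=1$, take $G=D$.
\item For $|D|\geq2$, take a singleton and a two-element $G$ to get both parities.
\item In each case set $R_i=U_i$ for some $i\in G$. Then $R\neq V$, and the entry is not forced to vanish by trace annihilation.
\end{itemize}
Without this rule you cannot tell which summands feed which part. Your remark that independent exponentials prevent cancellation between the real and imaginary parts is beside the point, since those parts never mix.
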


\begin{proof}
Fix $U,V$ and a summation Pauli $R$ in
\eqref{eq:chi_gauge_transform}. Define
\begin{equation}
    G_R
    =
    \left\{
        i\in D(U,V):
        s_i(R\star U)\neq s_i(R\star V)
    \right\}.
\end{equation}
These are precisely the qubits that contribute a nontrivial factor
depending on $\theta_i$ to the summand indexed by $R$.

Consider a qubit $i\in D(U,V)\setminus B(U,V)$. In this case both $U_i$ and $V_i$
are non-identity and distinct. We then consider the local phase $\varphi_{R_i}(U_i,V_i)$ appearing in
$\varphi_R(U,V)$. If $i\in G_R$, then $s_i(R\star U)\neq s_i(R\star V)$, which implies either $R_i=U_i$ or $R_i=V_i$. Without loss of generality, we assume $R_i=U_i$. Under this assumption, $\omega(R_i,R_i\star U_i)=1$, and $\omega(R_i,R_i\star V_i)=\omega(U_i,U_i\star V_i)=\pm i$ since $U_i$ and $V_i$ are distinct non-identity Pauli matrices. Therefore $\varphi_{R_i}(U_i,V_i)=\omega(R_i,R_i\star U_i)\omega(R_i,R_i\star V_i)=\pm i$. If $i\notin G_R$, then $s_i(R\star U)=s_i(R\star V)$, which implies $R_i\neq U_i$, $R_i\neq V_i$. {So either $R_i=I$ or $R_i$ is the third non-identity Pauli that is different from $U_i,V_i$. If the former, both local phases equal \(1\). If the latter, both phases are imaginary. In either case, \(\varphi_{R_i}(U_i,V_i)=\pm1\).}
Therefore $\varphi_{R_i}(U_i,V_i)$ is imaginary if and only if $i\in G_R$. 

On the other
hand, if $i\in B(U,V)$, exactly one of $U_i,V_i$ is the identity, and a similar argument shows that
the local phase $\varphi_{R_i}(U_i,V_i)$ is imaginary exactly when $i\notin G_R$. The local
phase is real on every qubit outside $D(U,V)$. Therefore, we introduce a variable $\epsilon_R$ such that 
$\epsilon_R=0$ when $\varphi_R(U,V)$ is real and $\epsilon_R=1$ when
it is imaginary, and then
\begin{equation}
\label{eq:gauge_phase_parity}
    \epsilon_R
    \equiv
    |G_R\cap(D(U,V)\setminus B(U,V))|
    +
    |B(U,V)\setminus G_R|
    \equiv
    |B(U,V)|+|G_R|
    \pmod 2.
\end{equation}

Conversely, every subset $G\subseteq D(U,V)$ can occur as $G_R$ for
some choice of $R$. It follows that the component of phase parity
$p\in\{0,1\}$ contains a gauge-dependent summand exactly when there
exists a nonempty $G\subseteq D(U,V)$ satisfying
\begin{equation}
    |B(U,V)|+|G|\equiv p\pmod 2.
\end{equation}
Here $p=0$ corresponds to the real part and $p=1$ to the imaginary
part. Moreover, when \(G\neq\varnothing\), choose \(i\in G\) and, in
constructing \(R\), take \(R_i=U_i\). The remaining tensor factors
of \(R\) can be chosen so that \(G_R=G\). Since \(U_i\neq V_i\), this
choice ensures that \(R\neq V\), and hence \(R\star V\neq I\).
Therefore, the corresponding Pauli-transfer coefficient
\(T_{R\star V,R\star U}\) is not forced to vanish by the
trace-annihilation constraint \(T_{I,Q}=0\).

If $D(U,V)=\varnothing$, then $U=V$ and the only possible choice is
$G=\varnothing$. Hence no summand is rescaled, and
$\chi_{U,U}$ is gauge invariant. It is real because the $\chi$ matrix
of a Hermiticity-preserving map is Hermitian.

If $|D(U,V)|=1$, the unrescaled summands, corresponding to $G_R=\varnothing$, have phase parity
$|B(U,V)|$, while the rescaled summands, corresponding to $G_R=D(U,V)$, have the opposite parity.
When $B(U,V)=\varnothing$, the unrescaled summands are real, proving
that $\Re\chi_{U,V}$ is invariant. When $|B(U,V)|=1$, the unrescaled
summands are imaginary, proving that $\Im\chi_{U,V}$ is invariant.

Finally, suppose that $|D(U,V)|\geq 2$. Then $D(U,V)$ has nonempty
subsets of both parities, for example a singleton and a two-element
subset. Equation~\eqref{eq:gauge_phase_parity} therefore shows that
both the real and imaginary parts contain gauge-dependent summands.
For generic $T_{P,Q}$, these terms cannot cancel: distinct gauge
exponents are linearly independent functions of $\theta$, while
cancellations among terms with the same exponent impose proper linear
constraints on the Pauli-transfer coefficients. Thus such
cancellations occur only on a nongeneric subset of coefficient space.
\end{proof}

{
We now relate the \(\chi\)-matrix classification to the Hamiltonian
and dissipative coefficients. For nonidentity Pauli operators
\(P_a,P_b,P_c\), one has
$$
\chi_{P_a,P_b}=\alpha_{ab},
\quad
h_c=\Im\chi_{I,P_c}=-\Im\chi_{P_c,I}.
$$
The second identity holds because the anticommutator operator
\(\sum_{a,b}\alpha_{ab}P_bP_a\) is Hermitian and therefore has real
Pauli coefficients, so its contribution to \(\chi_{I,P_c}\) is real.
These identities hold throughout the ambient generator space
\(\mathcal V\), where the Kossakowski matrix is Hermitian but need
not be positive. Thus Theorem~\ref{thm:gauge_invariant_coefficients}
classifies all Hamiltonian and dissipative components. In particular,
single-qubit Hamiltonian coefficients are universally gauge invariant,
whereas higher-weight Hamiltonian coefficients are generically gauge
dependent in the ambient sense.
Corollary~\ref{cor:generic_physical_gauge_dependence} establishes
generic dependence among physical Lindbladians for every locally
allowed component classified as dependent.
}

For the Lindbladian representation in
\eqref{eq:lindbladian_defn}, whenever $U=P_a$ and $V=P_b$ are
nonidentity dissipative basis elements, one has
$
\chi_{U,V}=\alpha_{ab}
$.
Consequently, all diagonal coefficients $\alpha_{aa}$ are gauge
invariant. {We name the two classes of gauge-invariant off-diagonal
components as follows.}

\begin{definition}[{Type I Off-Diagonal Coefficients}]
    \label{defn:typei_coeff}
    {For a Lindbladian $\mathcal L = -i\sum_c h_c[P_c, \cdot] + \sum_{a, b}\alpha_{ab}\mathcal D_{ab}$, we say that a component $\Re\alpha_{ab}$ is Type I, or equivalently, $(a, b) \in \mathcal T_1$, if $|D(P_a, P_b)| = 1$ and $B(P_a, P_b) = \varnothing$. In other words, $P_a, P_b$ satisfy Case 2 of Proposition~\ref{thm:gauge_invariant_coefficients}.}
\end{definition}

\begin{definition}[{Type II Components}]
    \label{defn:typeii_coeff}
    {For a Lindbladian $\mathcal L = -i\sum_c h_c[P_c,\cdot]+\sum_{a,b}\alpha_{ab}\mathcal D_{ab}$, we say that a component $\Im\alpha_{ab}$ is Type II, or equivalently, $(a,b)\in\mathcal T_2$, if
    $|D(P_a,P_b)|=|B(P_a,P_b)|=1$. Thus, $P_a$ and $P_b$ differ on a unique qubit, and exactly one of them is the identity on that qubit.  We also include every single-qubit Hamiltonian coefficient $h_c$ among the Type~II components, i.e., $c\in\mathcal{T}_2$ if $|\mathrm{supp}(P_c)|=1$.}
\end{definition}

{This also classifies every Hamiltonian coefficient. For every nonidentity Pauli \(P_c\), one has \(h_c=\Im\chi_{I,P_c}\): the anticommutator contribution to \(\chi_{I,P_c}\) is real because \(\sum_{a,b}\alpha_{ab}P_bP_a\) is Hermitian, as can also be seen from the expansion below. The same identity holds throughout the ambient generator space \(\mathcal V\), where the GKS matrix is Hermitian but need not be positive. Theorem~\ref{thm:gauge_invariant_coefficients} therefore shows that single-qubit Hamiltonian coefficients are universally gauge-invariant Type~II components, whereas every higher-weight Hamiltonian coefficient is generically gauge dependent in the ambient sense. Together with \(\chi_{P_a,P_b}=\alpha_{ab}\), this gives the ambient classification of all Hamiltonian and dissipative components. Corollary~\ref{cor:generic_physical_gauge_dependence} upgrades the dependence of every locally allowed component in this classification to generic dependence among physical Lindbladians.}

We note that {Theorem}~\ref{thm:gauge_invariant_coefficients} does not consider whether the transformed generator is still a Lindbladian. In other words, for certain values of $\theta$, it may be that the corresponding $\mathcal{L}_{\theta}$ is not a valid Lindbladian, as the gauge transformation $\mathcal{S}_\theta$ is not guaranteed to preserve the positivity of the GKS matrix. This leads to a natural question: can the positivity of the GKS matrix eliminate gauge degrees of freedom? {Below we show that, for almost every physical Lindbladian with a prescribed support pattern, all locally allowed components classified as dependent in the ambient sense can still vary under arbitrarily small gauge transformations that preserve physicality (Corollary~\ref{cor:generic_physical_gauge_dependence}).}

For \(U,V\in\bar{\mathcal{P}}_N\) and \(p\in\{0,1\}\), define the
real linear functional
\begin{equation}
\label{eq:pauli_component_functional}
    f_{U,V,p}(\mathcal{L})
    =
    \begin{cases}
        \Re\chi_{U,V}(\mathcal{L}), & p=0,\\
        \Im\chi_{U,V}(\mathcal{L}), & p=1.
    \end{cases}
\end{equation}
These functionals can be related explicitly to the Hamiltonian coefficients and the GKS matrix elements by expanding
\begin{equation*}
\begin{aligned}
    \mathcal{L}[\rho]
    ={}&
    \sum_{(a,b)\in\mathcal{I}_d}
    \alpha_{ab}P_a\rho P_b
    -i\sum_{c\in\mathcal{I}_h}h_cP_c\rho
    +i\sum_{c\in\mathcal{I}_h}h_c\rho P_c
    -
    \frac{1}{2}
    \sum_{(a,b)\in\mathcal{I}_d}
    \alpha_{ab}{P_bP_a}\rho
    -
    \frac{1}{2}
    \sum_{(a,b)\in\mathcal{I}_d}
    \alpha_{ab}\rho{P_bP_a} .
\end{aligned}
\end{equation*}
If \(U=P_a\) and \(V=P_b\) are both nonidentity Pauli operators, then
\(\chi_{U,V}=\alpha_{ab}\), so \(f_{U,V,0}\) and \(f_{U,V,1}\)
extract \(\Re\alpha_{ab}\) and \(\Im\alpha_{ab}\), respectively. If
\(U\neq I\), the one-identity coefficients are
\begin{equation*}
\begin{aligned}
    \chi_{U,I}
    =
    -ih_U
    -
    \frac{1}{2}
    \sum_{\substack{(a,b)\in\mathcal{I}_d\\P_a\star P_b=U}}
    {\omega(P_b,P_a)}\alpha_{ab},
    \quad
    \chi_{I,U}
    =
    ih_U
    -
    \frac{1}{2}
    \sum_{\substack{(a,b)\in\mathcal{I}_d\\P_a\star P_b=U}}
    {\omega(P_b,P_a)}\alpha_{ab},
\end{aligned}
\end{equation*}
where \(h_U=h_c\) if \(U=P_c\) for some \(c\in\mathcal{I}_h\), and
\(h_U=0\) otherwise. Thus \(f_{U,I,p}\) and \(f_{I,U,p}\) extract the
real or imaginary parts of the coefficients multiplying \(U\rho\)
and \(\rho U\), including the contributions from terms of the form
\({P_bP_a}\rho\) and \(\rho{P_bP_a}\). In particular, the Hamiltonian
coefficient is isolated by
$
    h_U
    =
    \frac{\chi_{I,U}-\chi_{U,I}}{2i}.
$

For \(\varsigma=(U,V,p)\), we write
\begin{equation*}
    f_\varsigma=f_{U,V,p},
    \quad
    S_\varsigma=\supp(U)\cup\supp(V).
\end{equation*}
Given a support pattern \(\mathsf{P}\), let
\begin{equation}
\label{eq:dependent_components_support_pattern}
    \operatorname{Dep}(\mathsf{P})
    =
    \left\{
        \varsigma=(U,V,p):
        {\begin{aligned}
        &f_\varsigma\text{ is generically gauge dependent in the ambient sense}\\
        &\text{and }S_\varsigma\subseteq C
        \text{ for some }C\in\mathsf{P}
        \end{aligned}}
    \right\}.
\end{equation}
{The set \(\operatorname{Dep}(\mathsf P)\) is defined using the ambient notion in Definition~\ref{def:generic_gauge_dependence} and the classification in Theorem~\ref{thm:gauge_invariant_coefficients}. Corollary~\ref{cor:generic_physical_gauge_dependence} below establishes generic gauge dependence among physical Lindbladians for every component in this set.}

\begin{proposition}[Simultaneous variation within a support pattern]
\label{prop:local_simultaneous_gauge_dependence}
For every support pattern \(\mathsf{P}\), there exists a nonempty
subset
$
    O(\mathsf{P})
    \subseteq
    \operatorname{Lind}(\mathsf{P})
$
that is open relative to \(\mathcal{V}_{\mathsf{P}}\) and has the
following property. For every
\(\mathcal{L}\in O(\mathsf{P})\) and every neighborhood \(U\) of
\(0\) in \(\mathbb{R}^N\), there exists \(\theta\in U\) such that
$
    \mathcal{L}_\theta
    \in
    \operatorname{Lind}(\mathsf{P})
$
and
$
    f_\varsigma(\mathcal{L}_\theta)
    \neq
    f_\varsigma(\mathcal{L})$
for every $\varsigma\in\operatorname{Dep}(\mathsf{P})$.
Thus, all locality-allowed {components that are generically gauge dependent in the ambient sense} 
{can be changed simultaneously by an arbitrarily small gauge transformation while preserving both physicality and the prescribed support pattern.}
\end{proposition}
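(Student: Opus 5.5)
The plan is to find an interior point of $\operatorname{Lind}(\mathsf P)$, relative to $\mathcal V_{\mathsf P}$, at which the physical constraints are slack. We then take $O(\mathsf P)$ to be a small open ball around it, possibly with a measure-zero set removed.

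\textbf{Step 1: the interior of the physical set.} A generator in $\mathcal V_{\mathsf P}$ is determined by its local Hamiltonian and local Kossakowski blocks. It is a Lindbladian exactly when its global GKS matrix, restricted to the span of locally allowed nonidentity Paulis, is positive semidefinite. We choose $\mathcal L_0=\sum_{C\in\mathsf P}\iota_C(\mathcal D^{\mathrm{dep}}_C)$, where $\mathcal D^{\mathrm{dep}}_C$ has Kossakowski matrix equal to the identity on all nonidentity Paulis supported in $C$. The global Kossakowski matrix of $\mathcal L_0$ is then positive definite on the allowed span, with smallest eigenvalue at least $1$, since each allowed Pauli is supported in some patch. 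By continuity, every $\mathcal L\in\mathcal V_{\mathsf P}$ in a small ball $B$ around $\mathcal L_0$ still has a positive definite Kossakowski matrix on that span, so $B\subseteq\operatorname{Lind}(\mathsf P)$ is open relative to $\mathcal V_{\mathsf P}$. We rescale so that the coefficient bounds also hold with slack.

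\textbf{Step 2: the gauge preserves the support pattern and is continuous.} Because $\mathcal S_\theta$ acts diagonally on Paulis, Eq.~\eqref{eq:ptm_gauge_transform} multiplies each Pauli-transfer entry $T_{P,Q}$ by a positive scalar. In particular, $\mathcal S_\theta\circ\iota_C(\mathcal A)\circ\mathcal S_\theta^{-1}=\iota_C(\mathcal S^C_\theta\mathcal A(\mathcal S^C_\theta)^{-1})$, so $\mathcal V_{\mathsf P}$ is invariant, and Hermiticity preservation and trace annihilation are preserved. The map $\theta\mapsto\mathcal L_\theta$ is continuous with $\mathcal L_{\theta}\to\mathcal L$ as $\theta\to0$. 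Hence, for $\mathcal L$ in a slightly smaller ball $B'$, $\mathcal L_\theta\in B\subseteq\operatorname{Lind}(\mathsf P)$ for all sufficiently small $\theta$. The locality of $\mathsf P$ gives the uniformity needed here, because each patch block is transformed independently with bounded factors.

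\textbf{Step 3: simultaneous variation.} For each $\varsigma\in\operatorname{Dep}(\mathsf P)$, the function $g_\varsigma(\mathcal L,\theta)=f_\varsigma(\mathcal L_\theta)-f_\varsigma(\mathcal L)$ is, by Eq.~\eqref{eq:chi_gauge_transform}, a finite exponential polynomial in $\theta$ whose coefficients are linear in $\mathcal L$. The main obstacle is showing that $g_\varsigma(\mathcal L,\cdot)\not\equiv0$ for $\mathcal L$ restricted to $\mathcal V_{\mathsf P}$, not merely for generic $\mathcal L\in\mathcal V$. For this I would revisit the proof of Theorem~\ref{thm:gauge_invariant_coefficients}. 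The witness summand there has $T_{R\star V,R\star U}$ with $R$ chosen so that $R_i=U_i$ and the other factors lie in $S_\varsigma\subseteq C$. Choosing $R$ supported in $C$ makes both Paulis supported in $C$, so this Pauli-transfer entry is a free coordinate of $\mathcal V_C\subseteq\mathcal V_{\mathsf P}$ with $R\star V\neq I$. Thus the linear map $\mathcal L\mapsto$ (coefficient of a nonconstant exponent $e^{\theta\cdot v}$ in $g_\varsigma$) is nonzero on $\mathcal V_{\mathsf P}$. Its kernel $Z_\varsigma$ is a proper subspace, and we set $O(\mathsf P)=B'\setminus\bigcup_\varsigma Z_\varsigma$. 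This set is open, nonempty, and of full measure in $B'$, because finitely many proper subspaces are removed; there are finitely many since $N$ is fixed.

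Finally, fix $\mathcal L\in O(\mathsf P)$. Each $g_\varsigma(\mathcal L,\cdot)$ is a nonzero real-analytic function of $\theta$ vanishing at $0$, so its zero set $\{\theta: g_\varsigma(\mathcal L,\theta)=0\}$ is closed with empty interior. A finite union of such sets still has empty interior. Therefore every neighborhood $U$ of $0$, intersected with the small ball from Step~2, contains a $\theta$ outside all these zero sets. For that $\theta$, $\mathcal L_\theta\in\operatorname{Lind}(\mathsf P)$ and every $f_\varsigma$ changes, as required.
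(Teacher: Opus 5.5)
Your proposal is correct and follows the paper's proof step for step. Both arguments show that $\mathcal V_{\mathsf P}$ is gauge invariant, that each invariant subspace is proper in $\mathcal V_{\mathsf P}$ via a witness inside a patch $C\supseteq S_\varsigma$, and that strictly positive generators form a nonempty relatively open set. Both then remove finitely many proper subspaces and use the empty interior of zero sets of nonzero real-analytic functions of $\theta$.

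The differences are cosmetic. You use a ball around a sum of patch depolarizers instead of the open-mapping image of strictly positive local blocks, and a single local PTM witness instead of invoking Theorem~\ref{thm:gauge_invariant_coefficients} on the $|C|$-qubit system. One imprecision: embedding that local entry via $\iota_C$ replicates it as $T_{(R\star V)\otimes P',(R\star U)\otimes P'}$ for every $P'$ on $C^c$. These copies all carry the same phase and exponent, so the global coefficient equals the local one and nothing cancels. It is therefore not a single free global coordinate, as you stated.
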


\begin{proof}
We will prove this proposition by first showing that superoperators in $\mathcal{V}_{\mathsf{P}}$ with a $\varsigma$-component invariant under $\mathcal{S}_\theta$ are contained in a finite union of proper subspaces of $\mathcal{V}_{\mathsf{P}}$. We then construct a set of valid Lindbladians in $\mathcal{V}_\mathsf{P}$ that is open relative to $\mathcal{V}_{\mathsf{P}}$. Removing elements of the finite union of subspaces from the open set we construct, we then end up with an open set of valid Lindbladians. For Lindbladians within this set, we will then show that $\theta$ changes all parameters except for rare cases that form a set with empty interior, thereby proving the existence of a Lindbladian $\mathcal{L}_\theta$ whose every component is different from $\mathcal{L}$. Below we will proceed step by step.

Because \(\mathcal{S}_\theta\) is a product of single-qubit maps, for
every \(C\in\mathsf{P}\) and \(\mathcal{K}_C\in\mathcal{V}_C\),
\begin{equation}
\label{eq:gauge_preserves_locality}
    \bigl(\iota_C(\mathcal{K}_C)\bigr)_\theta
    =
    \iota_C\bigl((\mathcal{K}_C)_{\theta_C}\bigr),
\end{equation}
where \(\theta_C\) is the restriction of \(\theta\) to \(C\).
Consequently, \(\mathcal{V}_{\mathsf{P}}\) is invariant under the
gauge transformation.

For \(\varsigma\in\operatorname{Dep}(\mathsf{P})\), we consider the set of superoperators with support pattern $\mathsf{P}$ whose component extracted by $f_\varsigma$ remains invariant under the gauge transformation:
\begin{equation*}
    J_\varsigma(\mathsf{P})
    =
    I_{f_\varsigma}\cap\mathcal{V}_{\mathsf{P}},
\end{equation*}
where \(I_{f_\varsigma}\) is defined in
Definition~\ref{def:generic_gauge_dependence}. This is a linear
subspace of \(\mathcal{V}_{\mathsf{P}}\). Below we will prove that it is a
proper subspace.

Choose \(C\in\mathsf{P}\) such that \(S_\varsigma\subseteq C\), and regard
\(U\) and \(V\) as Pauli operators on \(C\). Applying
Theorem~\ref{thm:gauge_invariant_coefficients} to the
\(|C|\)-qubit system shows that the corresponding local functional
is {generically gauge dependent in the ambient sense} on \(\mathcal{V}_C\). Hence there are
\(\mathcal{K}_C\in\mathcal{V}_C\) and a local gauge vector
\(\vartheta\) such that
\begin{equation*}
    f_\varsigma\bigl(\iota_C((\mathcal{K}_C)_\vartheta)\bigr)
    \neq
    f_\varsigma\bigl(\iota_C(\mathcal{K}_C)\bigr).
\end{equation*}
Here the local and embedded \(\chi\)-representation coefficients
agree. Extending \(\vartheta\) arbitrarily to a gauge vector on the
full system and applying
\eqref{eq:gauge_preserves_locality} shows that
\(\iota_C(\mathcal{K}_C)\notin J_\varsigma(\mathsf{P})\). Therefore
\(J_\varsigma(\mathsf{P})\) is proper.

We next construct a relatively open set of valid Lindbladians in
\(\mathcal{V}_{\mathsf{P}}\). For each \(C\in\mathsf{P}\), choose
a traceless Hermitian basis \(\{F_a^{(C)}\}_a\). Every
\(\mathcal{K}_C\in\mathcal{V}_C\) admits a GKS representation
\begin{equation*}
\begin{aligned}
    \mathcal{K}_C[\rho_C]
    =
    -i[H_C,\rho_C]
    +
    \sum_{a,b}(A_C)_{ab}
    \left(
        F_a^{(C)}\rho_C F_b^{(C)}
        -
        \frac{1}{2}
        \{F_b^{(C)}F_a^{(C)},\rho_C\}
    \right),
\end{aligned}
\end{equation*}
where \(H_C\) and \(A_C\) are Hermitian. The linear map
\begin{equation*}
    \Phi:
    (H_C,A_C)_{C\in\mathsf{P}}
    \longmapsto
    \sum_{C\in\mathsf{P}}\iota_C(\mathcal{K}_C)
\end{equation*}
is surjective onto \(\mathcal{V}_{\mathsf{P}}\). Since a
surjective linear map between finite-dimensional vector spaces is
open, the set
\begin{equation*}
    O_0(\mathsf{P})
    =
    \Phi
    \left(
        \left\{
            (H_C,A_C)_{C\in\mathsf{P}}:
            A_C\succ0
            \text{ for every }C\in\mathsf{P}
        \right\}
    \right)
\end{equation*}
is nonempty and open relative to
\(\mathcal{V}_{\mathsf{P}}\). Moreover,
\begin{equation*}
    O_0(\mathsf{P})
    \subseteq
    \operatorname{Lind}(\mathsf{P}),
\end{equation*}
because every element of \(O_0(\mathsf{P})\) is a sum of embedded
local Lindbladians.

{To impose the coefficient normalization, intersect \(O_0(\mathsf P)\) with the open set where every Hamiltonian and dissipative coefficient has magnitude less than \(1\). This intersection is nonempty because a sufficiently small positive rescaling of any element of \(O_0(\mathsf P)\) remains in \(O_0(\mathsf P)\). We use this intersection as \(O_0(\mathsf P)\) below.}

There are only finitely many elements of
\(\operatorname{Dep}(\mathsf{P})\), and each
\(J_\varsigma(\mathsf{P})\) is a proper linear subspace. Therefore,
\begin{equation*}
    O(\mathsf{P})
    =
    O_0(\mathsf{P})
    \setminus
    \bigcup_{\varsigma\in\operatorname{Dep}(\mathsf{P})}
    J_\varsigma(\mathsf{P})
\end{equation*}
is nonempty and open relative to
\(\mathcal{V}_{\mathsf{P}}\).

Fix \(\mathcal{L}\in O(\mathsf{P})\), and define
\begin{equation*}
    \Delta_\varsigma(\theta)
    =
    f_\varsigma(\mathcal{L}_\theta)-f_\varsigma(\mathcal{L}).
\end{equation*}
For every \(\varsigma\in\operatorname{Dep}(\mathsf{P})\),
\(\Delta_\varsigma\) is a nonzero real-analytic function. The zero set of a
nonzero real-analytic function has empty interior, and hence so does
the finite union
\begin{equation*}
    \bigcup_{\varsigma\in\operatorname{Dep}(\mathsf{P})}
    \{\theta:\Delta_\varsigma(\theta)=0\}.
\end{equation*}
Furthermore, \(O_0(\mathsf{P})\) is relatively open,
\(\mathcal{L}_0=\mathcal{L}\), and the gauge action is continuous.
Thus \(\mathcal{L}_\theta\in O_0(\mathsf{P})\) for all sufficiently
small \(\theta\). Every neighborhood of \(0\) therefore contains a
\(\theta\) for which \(\mathcal{L}_\theta\) is a Lindbladian
supported on \(\mathsf{P}\) and every \(\Delta_\varsigma(\theta)\) is
nonzero.
\end{proof}

{
\begin{corollary}[Generic gauge dependence among physical Lindbladians]
\label{cor:generic_physical_gauge_dependence}
For every support pattern \(\mathsf P\), the set \(O(\mathsf P)\) in
Proposition~\ref{prop:local_simultaneous_gauge_dependence} can be chosen
open in \(\mathcal V_{\mathsf P}\) and dense in
\(\operatorname{Lind}(\mathsf P)\), with
\(\operatorname{Lind}(\mathsf P)\setminus O(\mathsf P)\) of Lebesgue
measure zero in \(\mathcal V_{\mathsf P}\). Thus, for almost every
physical Lindbladian with this support pattern, all components in
\(\operatorname{Dep}(\mathsf P)\) can be changed simultaneously by an
arbitrarily small gauge transformation preserving physicality and the
support pattern. This establishes generic gauge dependence in the physical
sense of Definition~\ref{def:generic_gauge_dependence} for every component
in \(\operatorname{Dep}(\mathsf P)\).
The same conclusion holds within the subset satisfying
the coefficient bounds \(|h_c|,|\alpha_{ab}|\leq1\).
\end{corollary}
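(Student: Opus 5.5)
The plan is to replace the specific set \(O_0(\mathsf P)\) from the proof of Proposition~\ref{prop:local_simultaneous_gauge_dependence} by the full relative interior of \(\operatorname{Lind}(\mathsf P)\), and then reuse the rest of that proof unchanged. First I show that \(\operatorname{Lind}(\mathsf P)\) is a closed convex cone in \(\mathcal V_{\mathsf P}\). A generator in \(\mathcal V\) is a Lindbladian exactly when its GKS (Kossakowski) matrix, taken in the nonidentity Pauli basis, is positive semidefinite. That matrix is a linear function of the generator; by the identities above it is just the block \(\chi_{P_a,P_b}\). So \(\operatorname{Lind}(\mathcal V)\) is the preimage of the closed convex PSD cone under a linear map. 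Intersecting with the linear subspace \(\mathcal V_{\mathsf P}\) keeps the set closed and convex. By the proof of Proposition~\ref{prop:local_simultaneous_gauge_dependence}, \(O_0(\mathsf P)\) is nonempty and open in \(\mathcal V_{\mathsf P}\) and is contained in \(\operatorname{Lind}(\mathsf P)\). Hence \(\operatorname{Lind}(\mathsf P)\) has nonempty interior \(\operatorname{int}\operatorname{Lind}(\mathsf P)\) relative to \(\mathcal V_{\mathsf P}\).

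Next I invoke two standard facts about a closed convex set \(K\) with nonempty interior in a finite-dimensional space. First, \(K\) equals the closure of its interior: for \(x\in K\) and interior point \(y\), every point \((1-t)x+ty\) with \(t\in(0,1]\) is interior. Second, \(\partial K\) has Lebesgue measure zero; for instance, \(\partial K\) is a Lipschitz graph locally, or one can use that \(\lambda\)-dilations of \(K\) about an interior point are contained in \(\operatorname{int}K\). I then define
\[
    O(\mathsf P)=\operatorname{int}\operatorname{Lind}(\mathsf P)\setminus\bigcup_{\varsigma\in\operatorname{Dep}(\mathsf P)}J_\varsigma(\mathsf P).
\]
Each \(J_\varsigma(\mathsf P)\) was shown to be a proper linear subspace of \(\mathcal V_{\mathsf P}\), so each is closed, nowhere dense, and null, and there are finitely many. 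Therefore \(O(\mathsf P)\) is open in \(\mathcal V_{\mathsf P}\) and dense in \(\operatorname{int}\operatorname{Lind}(\mathsf P)\), hence dense in \(\operatorname{Lind}(\mathsf P)\). Its complement in \(\operatorname{Lind}(\mathsf P)\) lies in \(\partial\operatorname{Lind}(\mathsf P)\cup\bigcup_\varsigma J_\varsigma\), which is null.

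The gauge step is then verbatim from Proposition~\ref{prop:local_simultaneous_gauge_dependence}. Take \(\mathcal L\in O(\mathsf P)\). Because \(\mathcal V_{\mathsf P}\) is gauge-invariant by Eq.~\eqref{eq:gauge_preserves_locality}, the action \(\theta\mapsto\mathcal L_\theta\) is continuous, and the interior is open, we get \(\mathcal L_\theta\in\operatorname{Lind}(\mathsf P)\) for all small \(\theta\). Since \(\mathcal L\notin J_\varsigma\), each \(\Delta_\varsigma\) is a nonzero real-analytic function. The union of the zero sets of these finitely many functions has empty interior, so every neighborhood of \(0\) contains a \(\theta\) at which all \(\Delta_\varsigma(\theta)\neq0\). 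This is exactly item~4 of Definition~\ref{def:generic_gauge_dependence}, with \(O_f=O(\mathsf P)\) for every \(f=f_\varsigma\).

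For the coefficient-bounded version, I replace \(K=\operatorname{Lind}(\mathsf P)\) by \(K\cap B\). Here \(B=\{|h_c|\le1,\ |\alpha_{ab}|\le1\}\), where the bound on a complex coefficient means \(|\Re\alpha_{ab}|^2+|\Im\alpha_{ab}|^2\le1\). The set \(B\) is closed and convex because all these coefficients are linear functionals on \(\mathcal V_{\mathsf P}\). The intersection \(K\cap B\) is closed and convex, and it has nonempty interior: the paper's rescaled \(O_0(\mathsf P)\) with strict bounds supplies interior points. The same three facts then apply verbatim, and small \(\theta\) keeps \(\mathcal L_\theta\) in the interior of \(K\cap B\).

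The main point I expect to need care is the claim that \(\operatorname{Lind}(\mathsf P)\) is the closure of an open set. The earlier proof only exhibits \(O_0(\mathsf P)\), which consists of sums of local Lindbladians. A general element of \(\operatorname{Lind}(\mathsf P)\) need not decompose into locally positive pieces, so density of \(O_0\) itself is unclear. Convexity of the global positivity constraint is what circumvents this, which is why I pass to the full interior rather than to \(O_0\).
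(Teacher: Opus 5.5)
Your proposal is correct and follows the paper's proof essentially step for step. Like you, the paper sets $O(\mathsf P)=\operatorname{int}_{\mathcal V_{\mathsf P}}\operatorname{Lind}(\mathsf P)\setminus\bigcup_{\varsigma}J_\varsigma(\mathsf P)$, uses that $\operatorname{Lind}(\mathsf P)$ is closed and convex with nonempty interior (its interior is dense and its boundary is null), reruns the real-analytic gauge argument, and handles the coefficient bounds by intersecting with the convex bound set and rescaling an interior point.
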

}

\begin{proof}
{
Write \(V=\mathcal V_{\mathsf P}\) and
\(K=\operatorname{Lind}(\mathsf P)\). The set \(K\) is closed and
convex, since the Kossakowski matrix depends linearly on the generator
and physicality is its positive semidefiniteness. The preceding proof
shows that \(K\) has nonempty interior in \(V\) and that each
\(J_\varsigma(\mathsf P)\) is a proper linear subspace of \(V\). Set
\begin{equation*}
    O(\mathsf P)
    =\operatorname{int}_{V}K
    \setminus\bigcup_{\varsigma\in\operatorname{Dep}(\mathsf P)}
    J_\varsigma(\mathsf P).
\end{equation*}
The interior of a full-dimensional convex set is dense in that set,
and its boundary has Lebesgue measure zero. Removing finitely many
proper linear subspaces therefore gives the asserted openness,
density, and full measure.
}

{
For \(\mathcal L\in O(\mathsf P)\), continuity of the gauge action
and its preservation of \(V\) imply \(\mathcal L_\theta\in K\) for
all sufficiently small \(\theta\). Each function
\(\theta\mapsto f_\varsigma(\mathcal L_\theta)-f_\varsigma(\mathcal L)\)
is real analytic by Eq.~\eqref{eq:ptm_gauge_transform} and is not identically
zero because \(\mathcal L\notin J_\varsigma(\mathsf P)\). As in the preceding proof,
the finite union over \(\varsigma\in\operatorname{Dep}(\mathsf P)\) of their zero sets has empty interior, so every
neighborhood of zero contains a \(\theta\) that changes all these
components while keeping \(\mathcal L_\theta\in K\).
For the coefficient-bounded version, apply the
same argument to the intersection of \(K\) with the coefficient bounds;
this set is again closed and convex with nonempty interior, as follows
by scaling an interior point of \(K\) to satisfy all bounds strictly.
}
\end{proof}

{For fixed locality, the gauge vectors in
Proposition~\ref{prop:local_simultaneous_gauge_dependence} and
Corollary~\ref{cor:generic_physical_gauge_dependence} can be chosen
in \((-1,1)^N\). Remark~\ref{rem:admissible_spam_gauge} therefore
supplies admissible SPAM models making the original and transformed
Lindbladians experimentally indistinguishable. These generic
coefficient ambiguities thus obstruct uniform learning over the
allowed SPAM class.}

We note that the preceding proposition is qualitative and does not require
bounded patch size $\mathsf{k}$ or bounded overlap $\mathsf{d}$. 
{The proposition and corollary establish generic gauge dependence among physical Lindbladians, but do not quantify the size of the ambiguity.} Below we will show that there exists a Lindbladian for which such changes can be lower bounded by a constant that depends only on  $\mathsf{k}$ and $\mathsf{d}$, and not on the system size $N$. {For the Lindbladian construction below and for each patch $C$ in the support pattern, we take the candidate lists of Lindbladian terms to include every nonidentity Hamiltonian Pauli supported within \(C\) and every pair of nonidentity Pauli operators satisfying \(\supp(P_a)\cup\supp(P_b)\subseteq C\), with any listed coefficient allowed to vanish.}

\begin{theorem}[System-size-independent marginal gauge width]
\label{thm:uniform_gauge_width}
For every \(\mathsf{k}\geq1\) and \(\mathsf{d}\geq0\), there exists
a constant
$
    \eta_{\mathsf{k},\mathsf{d}}>0
$
with the following property. For every \(N\) and every
\((\mathsf{k},\mathsf{d})\)-bounded degree local support pattern
\(\mathsf{P}\) on \([N]\), there exists
$
    \mathcal{L}_*
    \in
    \operatorname{Lind}(\mathsf{P})
$
such that, for
$
    \Theta_N=[-1,1]^N,
$
one has
$
    \mathcal{L}_{*,\theta}
    \in
    \operatorname{Lind}(\mathsf{P})
$
for every $\theta\in\Theta_N$. {The generators can be chosen so that all their Hamiltonian and dissipative coefficients have magnitude at most \(1\), uniformly over \(\theta\in\Theta_N\).}
Moreover, every \(\varsigma\in\operatorname{Dep}(\mathsf{P})\) satisfies
\begin{equation}
\label{eq:uniform_gauge_oscillation}
    \max_{\theta\in\Theta_N}
    f_\varsigma(\mathcal{L}_{*,\theta})
    -
    \min_{\theta\in\Theta_N}
    f_\varsigma(\mathcal{L}_{*,\theta})
    \geq
    \eta_{\mathsf{k},\mathsf{d}}.
\end{equation}
The constant is independent of both \(N\) and the particular support
pattern, although \(\mathcal{L}_*\) may depend on the support pattern.
{With the SPAM channels of
Remark~\ref{rem:admissible_spam_gauge}, the entire family satisfies
Definition~\ref{defn:spam_noise_model} and gives identical outcome
distributions for every allowed controlled experiment.}
\end{theorem}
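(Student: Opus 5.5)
The plan is to build $\mathcal{L}_*$ as a sum of embedded local generators $\mathcal{L}_*=\sum_{C\in\mathsf P}\iota_C(\mathcal{K}_C)$ of the form $\mathcal{K}_C=\kappa(\mathcal{D}_C+\mu\,\mathcal{Q}_C)$. We take $\mathcal{D}_C=\sum_{P}\mathcal{D}_{PP}$, summed over all nonidentity Paulis supported on $C$, which is a Pauli-diagonal dissipator. Since $\mathcal{S}_\theta$ is Pauli-diagonal, it commutes with $\mathcal{D}_C$, so $\mathcal{D}_C$ is gauge invariant and contributes the identity to the local Kossakowski matrix. The perturbation $\mathcal{Q}_C\in\mathcal{V}_C$ is drawn from a finite catalogue of templates described below.

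By \eqref{eq:gauge_preserves_locality}, $(\mathcal{L}_*)_\theta=\sum_C\iota_C\bigl((\mathcal{K}_C)_{\theta_C}\bigr)$. For $\theta_C\in[-1,1]^C$, the Hamiltonian part and Kossakowski matrix of $(\mathcal{Q}_C)_{\theta_C}$ depend continuously on $\theta_C$ over a compact set, hence are bounded by a constant depending only on $\mathsf k$ and the template. Choosing $\mu$ small makes each $(\mathcal{K}_C)_{\theta_C}$ have a positive definite Kossakowski matrix, so each embedded term is a Lindbladian and so is their sum. Each Pauli pair or Pauli lies in at most $\mathsf d+1$ patches, so taking $\kappa\sim 1/(\mathsf d+1)$ keeps every total coefficient of magnitude at most $1$, uniformly in $\theta$. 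SPAM admissibility is then exactly Remark~\ref{rem:admissible_spam_gauge}.

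The substantive step is the oscillation bound \eqref{eq:uniform_gauge_oscillation}. For $\varsigma\in\operatorname{Dep}(\mathsf P)$, only patches $C'\supseteq S_\varsigma$ contribute to $f_\varsigma$, and there are at most $\mathsf d+1$ of them. So $f_\varsigma((\mathcal{L}_*)_\theta)$ is $\kappa\mu$ times a finite exponential polynomial in the at most $(\mathsf d+1)\mathsf k$ coordinates of $\theta$ on the union of these patches, since the $\mathcal{D}$ parts are constant. The danger is cancellation between overlapping patches.

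To prevent this, we first color the qubits with a distance-two coloring of the patch hypergraph, so that qubits sharing a patch, or lying in two intersecting patches, get distinct colors. This needs only $c(\mathsf k,\mathsf d)=\Or(\mathsf k^2\mathsf d)$ colors. We then let $\mathcal{Q}_C$ depend only on the \emph{colored type} of $C$, namely its set of colors with qubits identified by color. There are finitely many colored patch types, and hence finitely many colored configurations $(S_\varsigma,\{C'\supseteq S_\varsigma\},\varsigma)$ up to relabeling. In each configuration the patches carry distinct color sets, so they use distinct templates.

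For each configuration, the requirement that the oscillation be nonzero is a condition on the template tuple that fails only on a proper algebraic subset. It is proper because we may zero all templates except one patch containing $S_\varsigma$, and then Theorem~\ref{thm:gauge_invariant_coefficients}, applied on that patch, gives a nonconstant contribution. A generic choice of the finitely many templates therefore avoids all these bad sets. Each such oscillation is then a positive number determined by the configuration alone, and $\eta_{\mathsf k,\mathsf d}$ is $\kappa\mu$ times their minimum over the finite list, which is independent of $N$ and $\mathsf P$.

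I expect this anti-cancellation step to be the main obstacle. Two points need care. First, the genericity condition must be taken uniformly across configurations that share templates. Second, we should make sure that restricting $\theta$ to $[-1,1]^N$, rather than a neighborhood, still yields positive oscillation. This holds because a nonconstant real-analytic function cannot be constant on a cube with nonempty interior.
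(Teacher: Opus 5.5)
Your proposal is correct. The positivity and normalization steps are the same as the paper's: a gauge-invariant full Pauli depolarizer with identity Kossakowski matrix on each patch, plus a small perturbation, rescaled by roughly $1/(\mathsf d+1)$.

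The genuine difference is the anti-cancellation step.

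\textbf{The paper's route.} It uses one fixed local Lindbladian $\mathcal K_n$ per patch size, supplied by Proposition~\ref{prop:local_simultaneous_gauge_dependence}. It properly colors the patch intersection graph with $\mathsf d+1$ colors and gives each patch a geometric weight $\zeta^{\kappa(C)}$ indexed by its color. For a given $\varsigma$, it varies $\theta$ only on the patch $C_0\supseteq S_\varsigma$ of smallest color. The contribution of $C_0$ then moves by at least $\varepsilon\zeta^{j_0}$ times the minimal single-patch oscillation. Every other patch containing $S_\varsigma$ has a distinct, larger color and moves by at most $\varepsilon\zeta^{\kappa(C)}$ times the maximal single-patch oscillation. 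A geometric-series estimate then gives an explicit width of order $\zeta^{\mathsf d}$.

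\textbf{Your route.} You make the templates depend on the colored type of a patch under a distance-two qubit coloring. This reduces every $\varsigma$ to one of finitely many abstract configurations whose patches carry pairwise distinct templates. For each configuration, the coefficients of the exponentials $e^{\theta\cdot v}$ are linear in the templates, so the bad set is a linear subspace. It is proper, because zeroing all templates but one isolates a single patch. A generic template choice, together with real-analyticity on the cube, then gives a positive minimum oscillation.

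\textbf{Comparison.} Your route needs no quantitative dominance estimate and avoids the $\zeta^{\mathsf d}$ attenuation. In exchange, its constant is non-explicit (a minimum over a large finite list), and it needs a template for every color subset rather than one per patch size. The paper's route is explicit and uses a coarser coloring.

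One harmless slip: the greedy bound for your coloring is $\mathsf k(\mathsf d+1)^2$ colors, and your $\Or(\mathsf k^2\mathsf d)$ claim is false in general. Already when all patches are edges of a graph $H$, the conflict graph is $H^2$, which can require order $\mathsf d^2$ colors. Only finiteness of the palette is used, so nothing breaks.
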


\begin{proof}
For each \(1\leq n\leq\mathsf{k}\), apply
Proposition~\ref{prop:local_simultaneous_gauge_dependence} to the
support pattern with a single patch \(\{[n]\}\), and choose an \(n\)-qubit
Lindbladian \(\mathcal{K}_n\) at which every generically
gauge-dependent \(n\)-qubit Pauli-basis component is gauge
dependent. Let
$
    \operatorname{Dep}_n
    =
    \operatorname{Dep}(\{[n]\})
$
denote the finite collection of these components. For component label
\(\varsigma\in\operatorname{Dep}_n\), define
\begin{equation}
\label{eq:range_params_Kn}
    g_{n,\varsigma}(\vartheta)
    =
    f_\varsigma((\mathcal{K}_n)_\vartheta),
    \quad
    \omega_{n,\varsigma}
    =
    \max_{\vartheta\in[-1,1]^n}g_{n,\varsigma}(\vartheta)
    -
    \min_{\vartheta\in[-1,1]^n}g_{n,\varsigma}(\vartheta).
\end{equation}
$\omega_{n,\varsigma}$ therefore describes how much the $\varsigma$ component can change when gauge transformation $\mathcal{S}_\vartheta$ is applied to $n$ qubits. It may be that the transformed $(\mathcal{K}_n)_{\vartheta}$ is no longer a valid Lindbladian, which is acceptable in our construction.
Each \(g_{n,\varsigma}\) is real analytic and nonconstant, so
\(\omega_{n,\varsigma}>0\). Since there are only finitely many pairs
\((n,\varsigma)\), the constants
\begin{equation}
\label{eq:local_oscillation_constants}
\begin{aligned}
    \mu
    &=
    \min_{\substack{1\leq n\leq\mathsf{k}\\
                    \varsigma\in\operatorname{Dep}_n}}
    \omega_{n,\varsigma}
    >0,\quad 
    \Omega
    =
    \max_{\substack{1\leq n\leq\mathsf{k}\\
                    \varsigma\in\operatorname{Dep}_n}}
    \omega_{n,\varsigma}
    <\infty
\end{aligned}
\end{equation}
depend only on \(\mathsf{k}\).

We will construct $\mathcal{L}_*$ as follows. First, because the intersection graph of \(\mathsf{P}\) has degree at most
\(\mathsf{d}\), we can choose a proper coloring
\begin{equation*}
    \kappa:
    \mathsf{P}
    \longrightarrow
    \{0,1,\ldots,\mathsf{d}\}.
\end{equation*}
Fix an ordering of the qubits in every patch \(C\), and let
\(\mathcal{K}_C\) be the corresponding relabeled copy of
\(\mathcal{K}_{n}\) that we have constructed above for $n=|C|$. We then define the complete Pauli-depolarizing
generator on \(C\) by
\begin{equation*}
    \mathcal{G}_C[\rho_C]
    =
    \sum_{R\in\bar{\mathcal{P}}_{|C|}\setminus\{I\}}
    (R\rho_C R-\rho_C).
\end{equation*}
This enables us to define $\mathcal{L}_*$:
\begin{equation}
\label{eq:uniform_width_generator}
    \mathcal{L}_*
    =
    \sum_{C\in\mathsf{P}}
    \iota_C
    \left(
        \mathcal{G}_C
        +
        \varepsilon\zeta^{\kappa(C)}\mathcal{K}_C
    \right),
\end{equation}
where $\varepsilon$ and $\zeta$ are parameters that we will choose to ensure $\mathcal{L}_{*,\theta}\in \mathrm{Lind}(\mathsf{P})$ for all $\theta\in\Theta_N$, and to ensure the component change contributed by each $\mathcal{K}_C$ is not canceled by those from other overlapping patches $C'\neq C$ (more precisely, this is why we choose the exponential scaling with the color index $\zeta^{\kappa(C)}$). The specific choices we use are
\begin{equation}
\label{eq:uniform_buffer_constants}
    \varepsilon
    =
    \frac{1}{2\max\{1,M\}},
    \quad
    \zeta
    =
    \min\left\{
        \frac{1}{2},
        \frac{\mu}{4\Omega}
    \right\}.
\end{equation}
where we let \(A_n(\vartheta)\) denote the Hermitian dissipative GKS matrix of
\((\mathcal{K}_n)_\vartheta\) in the nonidentity Pauli basis, and define
\[
    M=\max_{\substack{1\leq n\leq\mathsf{k}\\
                    \vartheta\in[-1,1]^n}}
    \|A_n(\vartheta)\|_{\mathrm{op}}.
\]
$M$ is finite due to the compactness of $[-1,1]^n$.

We will first show that our choice of $\varepsilon$ ensures $\mathcal{L}_{*,\theta}$ remains a valid Lindbladian. For this we only need the GKS matrix of each summand in \eqref{eq:uniform_width_generator}, i.e., $\mathcal{G}_C + \varepsilon\zeta^{\kappa(C)}\mathcal{K}_C$, to be positive semidefinite after applying the gauge transformation with parameter $\theta_C\in[-1,1]^{|C|}$.
For every \(\theta\in\Theta_N\), the GKS matrix of the transformed
summand
\[
\left(
    \mathcal{G}_C
    +
    \varepsilon\zeta^{\kappa(C)}\mathcal{K}_C
\right)_{\theta_C} = \mathcal{S}_{\theta_C}\circ\left(
    \mathcal{G}_C
    +
    \varepsilon\zeta^{\kappa(C)}\mathcal{K}_C
\right)\circ\mathcal{S}_{\theta_C}^{-1}
\]
on \(C\) is
\begin{equation*}
\begin{aligned}
    I_{4^{|C|}-1}
    +
    \varepsilon\zeta^{\kappa(C)}
    A_{|C|}(\theta_C)
    \succeq
    (1-\varepsilon M)I_{4^{|C|}-1}
    \succeq
    \frac{1}{2}I_{4^{|C|}-1}.
\end{aligned}
\end{equation*}
Here we have used the fact that $\mathcal{G}_C$ is invariant under the gauge transformation and has identity GKS matrix.
Every transformed patch summand
$
\left(
    \mathcal{G}_C
    +
    \varepsilon\zeta^{\kappa(C)}\mathcal{K}_C
\right)_{\theta_C}
$
is therefore a Lindbladian, and hence
\begin{equation*}
    \mathcal{L}_{*,\theta}
    =
    \sum_{C\in\mathsf{P}}
    \iota_C
    \left(
        \left(
            \mathcal{G}_C
            +
            \varepsilon\zeta^{\kappa(C)}\mathcal{K}_C
        \right)_{\theta_C}
    \right)
    \in
    \operatorname{Lind}(\mathsf{P}),
    \quad
    \text{for every }\theta\in\Theta_N.
\end{equation*}

Next, we will show that our choice of $\zeta$ in \eqref{eq:uniform_buffer_constants} ensures all Lindbladian components $f_{\varsigma}(\mathcal{L}_{*,\theta})$ are changed by a minimum amount that depends only on $\mathsf{k},\mathsf{d}$ when $\theta$ varies within range $[-1,1]^N$. 

 Fix
\(\varsigma=(U,V,p)\in\operatorname{Dep}(\mathsf{P})\), and let $\mathsf{P}(\varsigma)$ be the set of patches that contain $S_\varsigma=\supp(U)\cup\supp(V)$, i.e.,
$
    \mathsf{P}(\varsigma)
    =
    \{C\in\mathsf{P}:S_\varsigma\subseteq C\}.
$
For \(C\in\mathsf{P}(\varsigma)\), let \(\varsigma|_C\) denote the component
obtained by restricting \(U\) and \(V\) to \(C\) and expressing them
in the fixed ordering of that patch. Then because both $U$ and $V$ are supported entirely within $C$, we have
$
\varsigma|_C\in\operatorname{Dep}_{|C|}.
$

Because \(\varsigma\) is gauge dependent, \(S_\varsigma\neq\varnothing\). Hence all
patches in \(\mathsf{P}(\varsigma)\) pairwise intersect and consequently
receive distinct colors. Choose \(C_0\in\mathsf{P}(\varsigma)\) with
smallest color index
$
j_0=\kappa(C_0).
$
Choose
\(\vartheta^+,\vartheta^-\in[-1,1]^{|C_0|}\) attaining the maximum
and minimum, respectively, in the definition of
\(\omega_{|C_0|,\varsigma|_{C_0}}\) (\eqref{eq:range_params_Kn}). Extend them to
\(\theta^+,\theta^-\in\Theta_N\) by setting all coordinates outside
\(C_0\) equal to zero.

Only patches $C$ in \(\mathsf{P}(\varsigma)\) can contribute to the change in
\(f_\varsigma\), as otherwise $S_{\varsigma}\not\subseteq C$ and $U\cdot V$ cannot appear as a term in $\mathcal{K}_C$. The depolarizing terms $\mathcal{G}_C$ contribute no change because they are invariant under the gauge transformation. The contribution from
\(C_0\) therefore changes by at least
\(\varepsilon\zeta^{j_0}\mu\) when $\theta$ changes between $\theta^+$ and $\theta^-$ (by \eqref{eq:local_oscillation_constants}), whereas the magnitude of the change
from any other patch \(C\in\mathsf{P}(\varsigma)\) is at most
\(\varepsilon\zeta^{\kappa(C)}\Omega\) (also by \eqref{eq:local_oscillation_constants}). Since the patches
\(C\in \mathsf{P}(\varsigma)\) have distinct colors and all colors other than
\(j_0\) are larger than \(j_0\),
\begin{equation}
\begin{aligned}
    \left|
        f_\varsigma(\mathcal{L}_{*,\theta^+})
        -
        f_\varsigma(\mathcal{L}_{*,\theta^-})
    \right|
    \geq
    \varepsilon\zeta^{j_0}\mu
    -
    \varepsilon\Omega
    \sum_{\substack{C\in\mathsf{P}(\varsigma)\\C\neq C_0}}
    \zeta^{\kappa(C)}
    \geq
    \varepsilon\zeta^{j_0}
    \left(
        \mu-\frac{\Omega\zeta}{1-\zeta}
    \right)
    \geq
    \frac{\varepsilon\mu}{2}\zeta^{j_0}
    \geq
    \frac{\varepsilon\mu}{2}\zeta^{\mathsf{d}}.
\end{aligned}
\end{equation}
It follows that \eqref{eq:uniform_gauge_oscillation} holds with
\begin{equation}
\label{eq:uniform_gauge_width_constant}
    \eta_{\mathsf{k},\mathsf{d}}
    =
    \frac{\varepsilon\mu}{2}\zeta^{\mathsf{d}}
    >0,
\end{equation}
which depends only on $\mathsf{k},\mathsf{d}$.

{Finally, we normalize our construction so that all Lindbladian coefficients are bounded by $1$. let \(M_h\) be the maximum magnitude of a Hamiltonian coefficient of \((\mathcal K_n)_\vartheta\), over \(1\leq n\leq\mathsf k\) and \(\vartheta\in[-1,1]^n\). Compactness ensures \(M_h<\infty\) and is independent of $N$. Every transformed patch summand has dissipative coefficients bounded by \(1+\varepsilon M\) and Hamiltonian coefficients bounded by \(\varepsilon M_h\). Each Hamiltonian or dissipative  coefficient receives contributions from at most \(\mathsf d+1\) patches. Therefore, multiplying both \(\mathcal L_*\) and \(\eta_{\mathsf k,\mathsf d}\) by
$$
\beta_{\mathsf k,\mathsf d}
=\frac{1}{(\mathsf d+1)\max\{1+\varepsilon M,\varepsilon M_h\}}
$$
ensures that every coefficient of every \(\mathcal L_{*,\theta}\) has magnitude at most \(1\). Positivity and the support pattern are preserved, and the rescaled width still depends only on \(\mathsf k,\mathsf d\).}
\end{proof}

{For each component in \(\operatorname{Dep}(\mathsf P)\),
combining Eq.~\eqref{eq:uniform_gauge_oscillation} with
Remark~\ref{rem:admissible_spam_gauge} gives two
experimentally indistinguishable models whose values differ by at
least \(\eta_{\mathsf k,\mathsf d}\). No estimator can guarantee
additive error \(\epsilon<\eta_{\mathsf k,\mathsf d}/2\) with
success probability greater than \(1/2\) uniformly over this
family, regardless of the number of experiments: the two accuracy
intervals are disjoint, while the data distributions are identical.}

{\section{Efficient learning of universally gauge-invariant components}}

\begin{definition}[Interaction-covering partition family]
    We will partition the $N$-qubit system in $\mathsf{r}$ different ways, and denote the partitions by $\Pi_1,\Pi_2,\cdots,\Pi_\mathsf{r}$, which together we call an \emph{interaction-covering partition family}. Each partition $\Pi_i=(C_{i,1},C_{i,2},\cdots,C_{i,m_i})$, where each $C_{i,j}\subset[N]$ is called a \emph{cluster}, and $C_{i,j}\cap C_{i,k}=\varnothing$ {for distinct \(j,k\). We require \(|C_{i,j}|\leq\mathsf k\) and \(\bigcup_j C_{i,j}=[N]\)}.
    These partitions satisfy, for each $\alpha\in V_{\mathrm{int}}$, there exists $i,j$ such that $S_\alpha \subset C_{i,j}$.
\end{definition}

\begin{fact}
    There exists $\mathsf{r}=\Or(1)$ such that an interaction-covering partition family $\Pi_1,\Pi_2,\cdots,\Pi_\mathsf{r}$ exists for the Lindbladian $\mathcal{L}$ if it is $(\mathsf{k},\mathsf{d})$-bounded degree local. In particular, $\mathsf{r}\leq \mathsf{d}+1$. Each cluster, as required in the above definition, is of size at most $\mathsf{k}$.
\end{fact}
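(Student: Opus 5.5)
The natural route is graph coloring. Let \(\mathsf P\) be the support pattern of distinct term supports \(\{S_\alpha:\alpha\in V_{\mathrm{int}}\}\); by the remark after Definition~\ref{def:support_pattern}, it is \((\mathsf k,\mathsf d)\)-bounded degree local. Its intersection graph has maximum degree at most \(\mathsf d\), so the greedy algorithm gives a proper coloring \(\kappa:\mathsf P\to\{1,\dots,\mathsf d+1\}\). For each color \(i\), the patches of color \(i\) are pairwise disjoint, because any two intersecting patches are adjacent and therefore receive different colors.

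The partitions are built one per color. For color \(i\), let \(\Pi_i\) consist of all patches of color \(i\), together with a singleton \(\{q\}\) for every qubit \(q\) not covered by a patch of color \(i\). The clusters of \(\Pi_i\) are then pairwise disjoint and their union is \([N]\). Each patch has size at most \(\mathsf k\), and each singleton has size \(1\leq\mathsf k\), so every cluster has size at most \(\mathsf k\).

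The covering property holds directly. Every term \(\alpha\) has support \(S_\alpha\in\mathsf P\), and \(S_\alpha\) is itself a cluster of \(\Pi_{\kappa(S_\alpha)}\), so \(S_\alpha\subseteq C_{\kappa(S_\alpha),j}\) for some \(j\). This gives \(\mathsf r\leq\mathsf d+1=\Or(1)\). If some color classes are empty, either drop them or keep the all-singleton partition; neither choice affects the bound.

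The only point needing care is that the degree bound is stated for \(G_{\mathrm{int}}\), whose vertices are terms, whereas we color distinct supports. Merging terms with identical supports into a single vertex cannot increase the degree, since the neighbors of the merged vertex are among the neighbors of any one of the terms it came from. So the degree bound \(\mathsf d\) carries over, and the greedy coloring still uses at most \(\mathsf d+1\) colors. Nothing else is difficult.
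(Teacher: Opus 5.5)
Your proof is correct and is essentially the paper's argument: the paper properly colors \(G_{\mathrm{int}}\) with at most \(\mathsf d+1\) colors, takes the pairwise disjoint term supports in each color class as clusters, and adds singletons for the uncovered qubits. The only difference is that you color distinct supports rather than terms. Your merging argument correctly shows this cannot increase the degree, consistent with the paper's own remark that the distinct term supports form a \((\mathsf k,\mathsf d)\)-bounded degree local support pattern.
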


\begin{proof}
    {The known interaction graph has a proper coloring with at most \(\mathsf d+1\) colors. For each color, use its mutually disjoint term supports as clusters and add singleton clusters for uncovered qubits. Every cluster has size at most \(\mathsf k\), and the resulting family covers every term support with \(\mathsf r\leq\mathsf d+1\) partitions.} 
\end{proof}

\subsection{The diagonal part}

{Write \(\mathcal I_{\mathrm{diag}}=\{a:(a,a)\in\mathcal I_d\}\) and \(S_a=\supp(P_a)\).} We will first focus on the diagonal part of the Lindbladian
\begin{equation}
    \label{eq:diag_Lindblad}
    \mathcal{L}^{\mathrm{diag}}[\rho] = \sum_a \alpha_{aa}\left( P_{a} \rho P_{a} - \rho\right).
\end{equation}
The diagonal part, being a Pauli-Lindbladian, transforms Pauli operators in a particularly simple way
\begin{equation}
    \label{eq:diag_Lindblad_on_Pauli}
    (\mathcal{L}^{\mathrm{diag}})^{\dag}[P] = \sum_a \alpha_{aa}\left((-1)^{\braket{P,P_a}}-1\right)P.
\end{equation}

\begin{definition}[Restricted Pauli-Lindbladian]
\label{defn:restricted_pauli_lindbladian}
    For any $C\subset [N]$, we define the corresponding \emph{restricted Pauli-Lindbladian} to be 
    \[
    \mathcal{L}_C^{\mathrm{diag}}[\rho] = \sum_a \alpha_{aa}\left( P_{a,C} \rho P_{a,C} - \rho\right).
    \]
    Here $P_{a,C}$ is defined as follows: if $P_a = \Pi_{i=1}^N P_{a,i}$, where each $P_{a,i}$ is the corresponding Pauli operator acting on qubit $i$, then $P_{a,C} = \Pi_{i\in C} P_{a,i}$. 
\end{definition}

We note that the coefficients of the restricted Pauli-Lindbladian $\mathcal{L}_C^{\mathrm{diag}}$ reflect the coefficients of the original Lindbladian. More precisely, we can write the restricted Pauli-Lindbladian as
\begin{equation}
    \mathcal{L}_C^{\mathrm{diag}}[\rho] = \sum_{P:\supp(P)\subset C}\gamma_{P,C} (P\rho P-\rho).
\end{equation}

The restricted Pauli-Lindbladian $\mathcal{L}_C^{\mathrm{diag}}$ faithfully reproduces the action of the original diagonal part $\mathcal{L}^{\mathrm{diag}}$, as stated in the following lemma:
\begin{lemma}
\label{lem:restricted_pauli_lindbladian}
    For any $C\subset [N]$, if $Q$ is a Pauli operator supported on $C$, then
    \begin{equation}
        (\mathcal{L}^{\mathrm{diag}})^\dag [Q]=(\mathcal{L}_C^{\mathrm{diag}})^\dag [Q] =  \sum_a \alpha_{aa}\left((-1)^{\braket{Q,P_{a,C}}}-1\right)Q.
    \end{equation}
\end{lemma}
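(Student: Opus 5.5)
The plan is to compute the adjoint action term by term and then reduce the commutation phase on $[N]$ to one computed on $C$. First I compute the Heisenberg-picture action of a single Pauli dissipator on a Pauli operator. Each $P_a$ is Hermitian and unitary, so the adjoint of $\rho\mapsto P_a\rho P_a-\rho$ is $X\mapsto P_aXP_a-X$. For any Pauli $Q$ we have $P_aQP_a=(-1)^{\braket{Q,P_a}}Q$, where the sign is read off from \eqref{eq:Pauli_inner_product}. Summing over $a$ with weights $\alpha_{aa}$ gives \eqref{eq:diag_Lindblad_on_Pauli} with $P=Q$. The identical computation, with $P_{a,C}$ in place of $P_a$, gives $(\mathcal{L}_C^{\mathrm{diag}})^\dag[Q]=\sum_a\alpha_{aa}\bigl((-1)^{\braket{Q,P_{a,C}}}-1\bigr)Q$. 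The restriction $P_{a,C}$ is again a Hermitian tensor-product Pauli, possibly the identity, in which case its term vanishes. So the second equality in the lemma is immediate.

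It remains to prove $\braket{Q,P_a}=\braket{Q,P_{a,C}}$ whenever $\supp(Q)\subseteq C$. For tensor-product Paulis, commutation is determined qubit by qubit. The symplectic form is additive over qubits, so
\[
\braket{Q,P_a}\equiv\sum_{i=1}^N\braket{Q_i,P_{a,i}}\pmod 2,
\]
where the local form is $1$ exactly when $Q_i$ and $P_{a,i}$ are distinct nonidentity single-qubit Paulis. For $i\notin C$ we have $Q_i=I$, so those local terms vanish. For $i\in C$, $P_{a,i}$ and $(P_{a,C})_i$ coincide. Hence the two sums agree, which gives the first equality.

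No step is a genuine obstacle. The only point needing care is the bookkeeping that the symplectic form factorizes over qubits, equivalently that $P_aQ=(-1)^{\braket{Q,P_a}}QP_a$, with a sign equal to the product of local signs. I would state this explicitly via the $\mathbb{F}_2^{2N}$ isomorphism already introduced in the preliminaries. Finally, I would note that the restricted Lindbladian's coefficients $\gamma_{P,C}$ aggregate the $\alpha_{aa}$ sharing the same restriction, which is consistent with the displayed formula.
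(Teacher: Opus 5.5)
Your proof is correct and is essentially the argument the paper leaves implicit, since the paper states the lemma without a separate proof. That argument applies Eq.~\eqref{eq:diag_Lindblad_on_Pauli} to both $\mathcal{L}^{\mathrm{diag}}$ and $\mathcal{L}_C^{\mathrm{diag}}$, and uses the fact that the sign of $P_aQP_a$ depends only on the tensor factors on $\supp(Q)\subseteq C$, so that $\braket{Q,P_a}=\braket{Q,P_{a,C}}$; your qubit-by-qubit additivity of the symplectic form is exactly the bookkeeping this requires.
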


\subsubsection{Recovering coefficients from \texorpdfstring{$\{\gamma_{P,C_{i,j}}\}$}{gamma sub {P,C {i,j}}}}
Because of Lemma~\ref{lem:restricted_pauli_lindbladian}, we can obtain equations for $\alpha_{aa}$ if we have estimates for $\gamma_{P,C}$:
\begin{equation}
\label{eq:from_alpha_aa_to_gamma_PC}
    \gamma_{P,C} = \sum_{{a\in\mathcal I_{\mathrm{diag}}}: P_{a,C}=P} \alpha_{aa}. 
\end{equation}
This allows us to recover all $\alpha_{aa}$ through a back-substitution over the partial order given by support inclusion. Process the labels {\(a\in\mathcal I_{\mathrm{diag}}\)} in any order such that $b$ is processed before $a$ whenever $S_a\subsetneqq S_b$. When processing $a$, {choose a cluster \(C(a)=C_{i,j}\) from the interaction-covering family with \(S_a\subseteq C(a)\)}.
Every $b\neq a$ such that $P_{b,C(a)}=P_a$ has $S_a\subsetneqq S_b$, so its coefficient $\alpha_{bb}$ has already been estimated. We therefore can compute ${\alpha}_{aa}$ through
\begin{equation}
\label{eq:back_substitution}
    {\alpha}_{aa}
    =
    {\gamma}_{P_a,C(a)}
    -
    \sum_{b\neq a: P_{b,C(a)}=P_a}{\alpha}_{bb}.
\end{equation}
For a maximal element $a$, the sum is empty and this reduces to ${\alpha}_{aa}={\gamma}_{P_a,C(a)}$. Continuing downward in the support-inclusion order estimates every $\alpha_{aa}$. If several clusters contain $S_a$, one may use any such cluster (averaging over them can reduce error in practice but does not make a big difference in the analysis). Since each term overlaps only $\Or(1)$ other terms under the bounded-degree locality assumption, each ${\alpha}_{aa}$ only involves $\Or(1)$ of the ${\alpha}_{bb}$. {Every substitution follows strict support inclusion, so dependency chains have length at most \(\mathsf k\). Bounded branching and bounded depth imply that \(\epsilon\) error on the restricted coefficients produces only \(\Or(\epsilon)\) error on each \(\alpha_{aa}\), and the total classical reconstruction takes \(\Or(N)\) time.} 

\subsubsection{Estimating \texorpdfstring{$\{\gamma_{P,C_{i,j}}\}$}{gamma sub {P,C {i,j}}} from experiments}
\label{sec:robust_est_gamma_P_C}

We now learn $\{\gamma_{P,C_{i,j}}\}$ from experiments, in parallel over the clusters of each partition $\Pi_i$. The protocol is robust under the SPAM noise model in Definition~\ref{defn:spam_noise_model}.

We first Pauli-twirl the evolution. For $R\in\bar{\mathcal{P}}_N$, let $\mathcal{R}[\rho]=R\rho R$. The twirled evolution is associated with the generator $\mathcal{L}^{\mathrm{diag}}$:
\begin{equation}
\label{eq:twirled_generator}
    \frac{1}{4^N}
    \sum_{R\in\bar{\mathcal{P}}_N}
    \mathcal{R}^{\dag}\circ\mathcal{L}\circ\mathcal{R}[\rho]=\sum_a \alpha_{aa}\left( P_{a} \rho P_{a} - \rho\right)=\mathcal{L}^{\mathrm{diag}}[\rho].
\end{equation}
Indeed, the Pauli average removes every Hamiltonian term and every dissipative term with distinct left and right Paulis.
Operationally, the twirled generator can be implemented by rapidly interleaving the evolution with uniformly random Pauli pulses.
Because the protocol only reads constant-size marginals, it is enough to control the Heisenberg evolution of local observables. 
This allows us to obtain a pulse rate that does not depend on the system size, as stated in the following lemma, which closely follows that in \cite[Theorem~16]{HuangTongFangSu2023learning}.

{
\begin{lemma}[System-size-independent local pulse-rate bound]
    \label{lem:local_pulse_rate}
    Given an $N$-qubit $(\mathsf k,\mathsf d)$-bounded degree local
    Lindbladian $\mathcal L=\sum_{\alpha\in V_{\mathrm{int}}}\mathcal
    L_\alpha$, let
    \begin{equation*}
        J=\max_{\alpha\in V_{\mathrm{int}}}
        \|\mathcal L_\alpha^\dagger\|_{\infty\to\infty},
    \end{equation*}
    {and suppose that \(J=\Or(1)\)}.  Define the ideal twirled channel
    $\mathcal U_t=e^{t\mathcal L^{\mathrm{diag}}}$ and the randomized-pulse
    channel
    \begin{equation}
        \label{eq:randomized_pulse_channel}
        \mathcal U'_{t,r}
        =
        \left(
            \frac{1}{4^N}
            \sum_{R\in\bar{\mathcal P}_N}
            \mathcal R^\dagger\circ e^{(t/r)\mathcal L}\circ\mathcal R
        \right)^r.
    \end{equation}
    There exist constants $K,\Delta_0>0$, depending only on
    $\mathsf k,\mathsf d$, and $J$, such that, whenever $t/r\leq\Delta_0$,
    every operator $O_C$ supported on a set $C$ with $|C|\leq\mathsf k$
    satisfies
    \begin{equation}
        \label{eq:local_pulse_heisenberg_bound}
        \left\|
            (\mathcal U'_{t,r})^\dagger[O_C]
            -
            \mathcal U_t^\dagger[O_C]
        \right\|_\infty
        \leq
        K\frac{t^2}{r}\|O_C\|_\infty.
    \end{equation}
\end{lemma}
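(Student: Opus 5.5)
The plan is a telescoping argument combined with a Lieb--Robinson-type bound, in the spirit of \cite[Theorem~16]{HuangTongFangSu2023learning}. Write $\tau=t/r$ and let $\mathcal T_\tau=\frac{1}{4^N}\sum_R \mathcal R^\dagger\circ e^{\tau\mathcal L}\circ\mathcal R$ be the one-step twirled channel, so that $\mathcal U'_{t,r}=\mathcal T_\tau^r$ and $\mathcal U_t=(e^{\tau\mathcal L^{\mathrm{diag}}})^r$. In the Heisenberg picture,
\begin{equation*}
(\mathcal U'_{t,r})^\dagger[O_C]-\mathcal U_t^\dagger[O_C]
=\sum_{j=0}^{r-1}(\mathcal T_\tau^\dagger)^{j}\bigl(\mathcal T_\tau^\dagger-e^{\tau(\mathcal L^{\mathrm{diag}})^\dagger}\bigr)e^{(r-1-j)\tau(\mathcal L^{\mathrm{diag}})^\dagger}[O_C].
\end{equation*}
Both $\mathcal T_\tau^\dagger$ and the diagonal evolution are unital CP maps, hence contractions in $\|\cdot\|_\infty$. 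It therefore suffices to bound each middle factor by $K'\tau^2\|O_C\|_\infty$, since summing over $r$ steps gives $K' r\tau^2=K't^2/r$.

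\emph{Step 1: the diagonal evolution preserves locality exactly.} By Lemma~\ref{lem:restricted_pauli_lindbladian}, $(\mathcal L^{\mathrm{diag}})^\dagger$ is diagonal in the Pauli basis. So $O'=e^{s(\mathcal L^{\mathrm{diag}})^\dagger}[O_C]$ remains supported on $C$, and $\|O'\|_\infty\le\|O_C\|_\infty$ by contractivity. This is the key simplification: there is no growth of support on the ideal side. We are left with bounding $\|(\mathcal T_\tau^\dagger-e^{\tau(\mathcal L^{\mathrm{diag}})^\dagger})[O']\|_\infty$ for an operator $O'$ supported on $C$ with $|C|\le\mathsf k$.

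\emph{Step 2: the one-step local error.} Expand $e^{\tau\mathcal L^\dagger}=\mathcal I+\tau\mathcal L^\dagger+\int_0^\tau(\tau-s)\,e^{s\mathcal L^\dagger}(\mathcal L^\dagger)^2\,ds$, and similarly for the diagonal generator. The twirl of $\mathcal L^\dagger$ equals $(\mathcal L^{\mathrm{diag}})^\dagger$ by \eqref{eq:twirled_generator}, so the zeroth- and first-order terms cancel exactly. Twirling is a convex combination of conjugations by the unitary channels $\mathcal R$, which do not change norms. It therefore remains to bound $\|(\mathcal L^\dagger)^2[\mathcal R[O']]\|_\infty$ and the analogous diagonal term. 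Since $\mathcal R[O']$ is still supported on $C$, only terms $\mathcal L_\alpha$ with $S_\alpha\cap C\neq\varnothing$ act in the first application. There are at most $\mathsf k(\mathsf d+1)$ such terms, because each qubit lies in at most $\mathsf d+1$ term supports, as argued for patches around \eqref{eq:number_of_support_patches}. Their image has support of size at most $\mathsf k+\mathsf k^2(\mathsf d+1)$, so the second application again touches only $\Or(1)$ terms. This gives $\|(\mathcal L^\dagger)^2[\cdot]\|\le c_{\mathsf k,\mathsf d}J^2\|O'\|_\infty$.

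The remaining issue, and the main obstacle, is the factor $e^{s\mathcal L^\dagger}$ inside the remainder integral, because the full evolution spreads support. We only need it to be a contraction, though. $e^{s\mathcal L^\dagger}$ is unital and CP, so $\|e^{s\mathcal L^\dagger}[X]\|_\infty\le\|X\|_\infty$, and no Lieb--Robinson bound is needed. The local bound on $(\mathcal L^\dagger)^2$ is applied before the evolution acts, which ensures the constants do not depend on $N$. This yields a per-step error of at most $c_{\mathsf k,\mathsf d}J^2\tau^2\|O_C\|_\infty$, so we can take $K=c_{\mathsf k,\mathsf d}J^2$. The condition $\tau\le\Delta_0$ is then needed only if one instead uses a truncated power series; with the integral-remainder form it can be taken to be arbitrary, say $\Delta_0=1$.
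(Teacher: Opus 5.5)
Your proof is correct, and it shares the paper's outer structure. You telescope with the exact diagonal evolution on the right, so each one-step difference acts on an operator still supported on $C$. You then use contractivity of the unital CP adjoints on the left.

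The difference is in the one-step estimate. The paper expands both exponentials as full power series and controls every order $j$ through the local word count $A_j(C)\le 2^{\mathsf k-1}j!\,[2(\mathsf d+1)]^j$. This count grows like $j!$, so it cancels the $1/j!$ of the exponential series only up to a factor $g^j$ with $g=2(\mathsf d+1)J$. Summing the tail therefore forces $t/r\le\Delta_0=(2g)^{-1}$.

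You instead stop at second order using the integral remainder $\int_0^\tau(\tau-s)\,e^{s\mathcal L^\dagger}(\mathcal L^\dagger)^2\,ds$. This needs only a two-step locality count. The support-spreading factor $e^{s\mathcal L^\dagger}$ acts after $(\mathcal L^\dagger)^2$, and you absorb it by contractivity of a unital CP map. The result is more elementary and slightly stronger: no step-size restriction is needed, so $\Delta_0$ can be arbitrary.

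The price is that complete positivity of the exact semigroup enters the one-step estimate. The paper's one-step bound holds for any bounded local generator and uses positivity only in the telescoping. That makes no difference for this lemma. Your argument also carries over unchanged to the partial-twirl extension, since a Pauli-subgroup average of a Lindbladian is again a Lindbladian.

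One point you leave implicit is why \emph{the analogous diagonal term} obeys the same $c_{\mathsf k,\mathsf d}J^2$ bound. The reason is that $\mathcal L^{\mathrm{diag}}=\sum_\alpha\overline{\mathcal L_\alpha}$ splits into twirled local summands with the same supports as the $\mathcal L_\alpha$ and induced norms at most $J$. These summands are Pauli-diagonal, so they keep the support inside $C$.
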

}

As a result of this lemma, with the twirling described above, for every input state $\rho$,
    \begin{equation}
        \label{eq:local_pulse_marginal_bound}
        \left\|
            \Tr_{C^c}\!\left[
                \mathcal U'_{t,r}(\rho)-\mathcal U_t(\rho)
            \right]
        \right\|_1
        \leq
        K\frac{t^2}{r}.
    \end{equation}
    In particular, to make the local error (the left-hand side of \eqref{eq:local_pulse_marginal_bound}) at most $\epsilon$, we need
    $r\geq\max\{t/\Delta_0,Kt^2/\epsilon\}$, with no dependence on $N$.

{We also use a partial-twirl extension of Lemma~\ref{lem:local_pulse_rate}. Let \(\mathcal H\subseteq\bar{\mathcal P}_N\) be a product Pauli subgroup and let \(\overline{\mathcal L}=|\mathcal H|^{-1}\sum_{R\in\mathcal H}\mathcal R^\dagger\circ\mathcal L\circ\mathcal R\). Replace the full Pauli average in \eqref{eq:randomized_pulse_channel} by the average over \(\mathcal H\), and replace \(\mathcal U_t\) by \(e^{t\overline{\mathcal L}}\). If \(\overline{\mathcal L}^{\dagger}\) preserves the algebra of operators supported on \(C\), then \eqref{eq:local_pulse_heisenberg_bound} and \eqref{eq:local_pulse_marginal_bound} hold for that \(C\) with the same constants. The proof in Appendix~\ref{app:local_pulse_rate} applies with these replacements.}

{The proof, given in Appendix~\ref{app:local_pulse_rate}, adapts the
local Heisenberg-picture argument of Ref.~\cite{HuangTongFangSu2023learning}
to Lindbladian evolution.  
For the Lindbladian considered here, the coefficient bounds $|h_c|,|\alpha_{ab}|\leq1$ imply $J\leq2$, as required in Lemma~\ref{lem:local_pulse_rate}.
Since the experiment below uses a fixed evolution
time $t=\Or(1)$, $r=\Or(\epsilon^{-1})$ parallel Pauli-pulse rounds per shot
suffice to make the pulse-induced bias $\Or(\epsilon)$, independently of the
system size.}
{Each pulse round uses \(\Or(N)\) single-qubit gates, with
\(\Or(N)\) additional gates for preparation and measurement basis
changes and Pauli projections. Thus an experiment with \(r\) pulse
rounds uses \(\Or(N(r+1))\) gates.}

Fix a cluster $C$ and a Pauli operator $Q$ supported on $C$. Define
\begin{equation}
\label{eq:local_pauli_decay_rate}
    \ell_{Q,C}
    =
    \sum_{P:\supp(P)\subseteq C}
    \left(
        (-1)^{\braket{Q,P}}-1
    \right)\gamma_{P,C}
    =
    -2
    \sum_{\substack{P:\supp(P)\subseteq C\\
                    \braket{Q,P}=1}}
    \gamma_{P,C}.
\end{equation}
{Each qubit belongs to at most \(\mathsf d+1\) term supports, so at most \(\mathsf k(\mathsf d+1)\) diagonal terms can overlap \(\supp(Q)\). Since \(0\leq\alpha_{aa}\leq1\), the original coefficients give the bound

\begin{equation}
\label{eq:decay_rate_bound}
    0\leq-\ell_{Q,C}
=2\sum_{a:\braket{Q,P_a}=1}\alpha_{aa}
\leq2\mathsf k(\mathsf d+1)=\Or(1).
\end{equation}

}
By Lemma~\ref{lem:restricted_pauli_lindbladian},
\begin{equation}
    e^{t(\mathcal{L}_{\mathrm{diag}})^\dag}[Q]
    =
    e^{t\ell_{Q,C}}Q.
\end{equation}
Thus, it suffices to estimate $\ell_{Q,C}$ for every Pauli $Q$ supported on $C$.

We next describe a SPAM-robust experiment for estimating $\ell_{Q,C}$. Let $S=\supp(Q)$ and define
$
    Z_S=\prod_{a\in S}\sigma_a^z.
$
Choose a product Clifford $V_Q$ (i.e., a product of single-qubit Clifford gates) supported on $S$ such that
\begin{equation}
    V_Q Z_S V_Q^\dag=Q.
\end{equation}
According to the SPAM noise model in Definition~\ref{defn:spam_noise_model}, 
experimental noisy state preparation results in the state $\rho_0$. We then apply $V_Q$, resulting in a state
\begin{equation}
    \rho_Q=V_Q\rho_0V_Q^\dag.
\end{equation}
At the end of the experiment we will measure $Q$ by first applying $V_Q^\dag$ and measure in the computational basis. We use the measurement result to estimate the expectation value of $V_Q Z_S V_Q^\dag$, but because of the measurement error channel, the actual observable whose expectation value we estimate is
\begin{equation}
    M_Q
    =
    V_Q\mathcal{E}^\dag[Z_S]V_Q^\dag.
\end{equation}
The SPAM noise channels result in decay of the signal, which is captured by the following parameters:
\begin{equation}
\label{eq:spam_prefactors}
    s_Q
    =
    \Tr[Q\rho_Q]
    =
    \Tr[Z_S\rho_0],\quad 
    m_Q
    =
    \overline{\Tr}[QM_Q]
    =
    \overline{\Tr}[Z_S\mathcal{E}^\dag[Z_S]].
\end{equation}
Since $|S|\leq\mathsf{k}$, Definition~\ref{defn:spam_noise_model} implies
\begin{equation}
\label{eq:spam_prefactor_bounds}
    s_Q\geq r_p,
    \quad
    m_Q\geq r_m.
\end{equation}

The state $\rho_Q$ and observable $M_Q$ may contain Pauli components other than $Q$. We remove these components by an additional projection implemented by randomized twirling. Sampling two independent, uniformly random Paulis
$
    R_0,R_1\in\bar{\mathcal{P}}_N,
$
we apply $R_0$ immediately before the evolution and $R_1$ immediately after the evolution. We then multiply the measured parity (the $\pm 1$ measurement result we obtain from computational basis measurements) on $S$ by
$
(-1)^{\braket{Q,R_0}+\braket{Q,R_1}}.
$
Character orthogonality for the Pauli group gives
\begin{equation}
\begin{aligned}
    \mathbb{E}_{R_0}
    \left[
        (-1)^{\braket{Q,R_0}}
        R_0\rho_QR_0
    \right]
    =
    \frac{s_Q}{2^N}Q,\quad
    \mathbb{E}_{R_1}
    \left[
        (-1)^{\braket{Q,R_1}}
        R_1M_QR_1
    \right]
    =
    m_QQ.
\end{aligned}
\end{equation}
Consequently, denoting by $f_{Q,C}(t)$ the expectation of the signed random variable, i.e., the measured parity multiplied by $(-1)^{\braket{Q,R_0}+\braket{Q,R_1}}$, we have
\begin{equation}
\label{eq:spam_robust_decay_signal}
    f_{Q,C}(t)
    =
    s_Qm_Qe^{t\ell_{Q,C}}.
\end{equation}

For any fixed $\tau>0$, taking the ratio of the signals at times $0$ and $\tau$ gives
\begin{equation}
\label{eq:spam_robust_decay_estimator}
    \ell_{Q,C}
    =
    \frac{1}{\tau}
    \log\left(
        \frac{f_{Q,C}(\tau)}
             {f_{Q,C}(0)}
    \right).
\end{equation}
The unknown preparation and measurement factors cancel exactly. Moreover, \eqref{eq:spam_prefactor_bounds} implies
\begin{equation}
\label{eq:zero_time_signal_bound}
    f_{Q,C}(0)
    =
    s_Qm_Q
    \geq
    r_pr_m>0.
\end{equation}
Thus, the signal remains bounded away from zero by a constant independent of the system size.

Let $\widehat{f}_{Q,C}(0)$ and $\widehat{f}_{Q,C}(\tau)$ denote the corresponding empirical averages, and define
\begin{equation}
\label{eq:def_log_ratio}
    \widehat{\ell}_{Q,C}
    =
    \frac{1}{\tau}
    \log\left(
        \frac{\widehat{f}_{Q,C}(\tau)}
             {\widehat{f}_{Q,C}(0)}
    \right).
\end{equation}
{We use this formula when both empirical averages are positive, and set \(\widehat{\ell}_{Q,C}=0\) otherwise.}
{Choose \(\tau=[2\mathsf k(\mathsf d+1)]^{-1}\), which satisfies} 
$
    -\tau\ell_{Q,C}\leq 1
$
for every required pair $Q,C$. Then
\begin{equation}
    f_{Q,C}(\tau)
    \geq
    e^{-1}r_p r_m.
\end{equation}
{Set \(c=e^{-1}r_pr_m\) and \(\eta=\min\{c/2,\tau c\epsilon/4\}\). On the event that all empirical signals at times \(0\) and \(\tau\) differ from their expectations by at most \(\eta\), both the empirical and true signals are at least \(c/2\). Since \(\log\) is \(2/c\)-Lipschitz on \([c/2,\infty)\), this event implies 
\begin{equation}
\begin{aligned}
\left|\widehat{\ell}_{Q,C}-\ell_{Q,C}\right|
&\leq \frac{1}{\tau}\Bigl(
    \left|\log\widehat{f}_{Q,C}(\tau)-\log f_{Q,C}(\tau)\right|
    +\left|\log\widehat{f}_{Q,C}(0)-\log f_{Q,C}(0)\right|
\Bigr)\\
&\leq \frac{2}{\tau c}\Bigl(
    \left|\widehat{f}_{Q,C}(\tau)-f_{Q,C}(\tau)\right|
    +\left|\widehat{f}_{Q,C}(0)-f_{Q,C}(0)\right|
\Bigr)\\
&\leq \frac{4\eta}{\tau c}
\leq \epsilon,
\end{aligned}
\end{equation}
simultaneously for all required pairs \(Q,C\). 
The complementary event, including any nonpositive empirical signals, is counted as a failure. Applying Hoeffding's inequality to each empirical signal and taking a union bound over both sampling times \(t\in\{0,\tau\}\) and all \(\Or(N)\) required pairs \(Q,C\) shows that
}
\begin{equation}
\label{eq:decay_rate_sample_complexity}
    \Or\left(
        \epsilon^{-2}
        \log\left(N/\delta\right)
    \right)
\end{equation}
repetitions with the fixed evolution time $\tau=\Or(1)$ suffice to ensure
$
    \left|
        \widehat{\ell}_{Q,C}
        -
        \ell_{Q,C}
    \right|
    \leq\epsilon,
$
simultaneously for every required local Pauli $Q$ and every cluster $C$, with probability at least $1-\delta$.

It remains to recover $\gamma_{P,C}$ from the decay rates. Character orthogonality applied to \eqref{eq:local_pauli_decay_rate} gives, for every nonidentity Pauli $P$ supported on $C$,
\begin{equation}
\label{eq:gamma_from_decay_rates}
    \widehat{\gamma}_{P,C}
    =
    4^{-|C|}
    \sum_{Q:\supp(Q)\subseteq C}
    (-1)^{\braket{P,Q}}
    \widehat{\ell}_{Q,C},
    \quad
    \widehat{\ell}_{I,C}=0.
\end{equation}
We set $\widehat{\ell}_{I,C}=0$ because the identity operator is the fixed point of the diagonal Lindbladian $\mathcal{L}_{\mathrm{diag}}$.
Since the right-hand side of \eqref{eq:gamma_from_decay_rates} is a normalized sum, an additive error of at most $\epsilon$ in each decay rate  $\ell_{Q,C}$ produces an additive error of at most $\epsilon$ in each coefficient $\gamma_{P,C}$.

Finally, the protocol can be performed in parallel over the clusters of a partition. Because the clusters $C_{i,j}$ are disjoint, a single experimental shot can prepare and measure one chosen local Pauli on every cluster simultaneously. There are at most
$
    4^{\mathsf{k}}-1=\Or(1)
$
choices of the nonidentity local Pauli $Q$ in each cluster. Cycling through these choices, which involve an $\Or(1)$ overhead, therefore estimates all coefficients
$
\{\gamma_{P,C_{i,j}}\}_{P,j}
$
using the number of repetitions in \eqref{eq:decay_rate_sample_complexity}. 

\begin{lemma}
    \label{lem:estimate_gamma}
    Under the SPAM noise model in Definition~\ref{defn:spam_noise_model},
    for each partition $\Pi_i$ indexed by $i\in[\mathsf{r}]$,
    we can estimate all $\{\gamma_{P,C_{i,j}}\}_{P,j}$ for each nonidentity Pauli $P$ supported on cluster $C_{i,j}$ to within additive error $\epsilon$ with probability at least $1-\delta$ using $\Or(\epsilon^{-2}\log(N/\delta))$ experiments and total evolution time, for any $\epsilon,\delta>0$.
\end{lemma}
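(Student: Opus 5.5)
The plan is to assemble the ingredients of Section~\ref{sec:robust_est_gamma_P_C} into a single end-to-end guarantee. First, for the fixed partition $\Pi_i$, I would fix $\tau=[2\mathsf k(\mathsf d+1)]^{-1}$. I would then enumerate the $\Or(1)$ rounds indexed by a choice of local Pauli $Q$ per cluster, for example by cycling through a fixed ordering of the at most $4^{\mathsf k}-1$ nonidentity Paulis on each cluster. In each round every cluster $C_{i,j}$ runs the signed-parity experiment in parallel: product Clifford $V_Q$, random Paulis $R_0,R_1$, randomized Pauli pulses, and computational-basis readout with sign correction. The signal is recorded at both $t=0$ and $t=\tau$. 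Because the clusters are disjoint and the twirl is a product over qubits, the expected signed parity on cluster $C$ equals $f_{Q,C}(t)=s_Qm_Qe^{t\ell_{Q,C}}$ up to the pulse bias. This follows from Eq.~\eqref{eq:spam_robust_decay_signal}. The Pauli factors $R_0,R_1$ restricted to $C^c$ only affect qubits outside $S$, so the character-orthogonality identities still isolate the $Q$ component locally.

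Second, I would control the two error sources separately. For the pulse bias, Lemma~\ref{lem:local_pulse_rate} with $t=\tau=\Or(1)$ and $r=\Or(1/\eta)$ rounds gives, via Eq.~\eqref{eq:local_pulse_marginal_bound}, a bias at most $\eta/2$ in each signal. Here $\eta=\min\{c/2,\tau c\epsilon/4\}$ and $c=e^{-1}r_pr_m$, and the bias is independent of $N$. For statistical error, each signed parity is a $\pm1$ random variable. Hoeffding's inequality with $\Or(\eta^{-2}\log(N/\delta))=\Or(\epsilon^{-2}\log(N/\delta))$ shots per (round, time) pair gives deviation at most $\eta/2$. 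A union bound runs over the $\Or(N)$ pairs $(Q,C_{i,j})$ and the two times. On this event every empirical signal is within $\eta$ of its ideal value, and both are at least $c/2$ by Eq.~\eqref{eq:zero_time_signal_bound}. The Lipschitz estimate displayed before Eq.~\eqref{eq:decay_rate_sample_complexity} then yields $|\widehat\ell_{Q,C}-\ell_{Q,C}|\le\epsilon$ simultaneously for all pairs.

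Third, I would invert via Eq.~\eqref{eq:gamma_from_decay_rates}. I would first verify this inversion identity from Eq.~\eqref{eq:local_pauli_decay_rate} using $\sum_{Q}(-1)^{\langle P,Q\rangle+\langle P',Q\rangle}=4^{|C|}\delta_{P,P'}$ over Paulis on $C$. I would also check the $Q=I$ term: $\ell_{I,C}=0$, and the sum $\sum_{P}\gamma_{P,C}$ term cancels because the identity coefficient is excluded. Since the inversion is an average of $4^{|C|}$ terms with unit-modulus weights, the error in each $\widehat\gamma_{P,C}$ is at most $\epsilon$.

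Finally, I would count resources. The protocol uses $\Or(1)$ rounds, two times, and $\Or(\epsilon^{-2}\log(N/\delta))$ shots each, and every experiment has evolution time at most $\tau=\Or(1)$. This gives the claimed experiment count and total evolution time.

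The main obstacle is making the pulse-bias step rigorous inside the ratio estimator. The concern is that Lemma~\ref{lem:local_pulse_rate} bounds the deviation of the channel, while the signal also passes through the noisy SPAM and the random $R_0,R_1$. I would handle this by applying Eq.~\eqref{eq:local_pulse_heisenberg_bound} to the Heisenberg-evolved local observable obtained after averaging over $R_1$, which is $m_QQ$ supported on $S\subseteq C$, with norm at most $1$. Pairing that bound with an arbitrary state $R_0\rho_QR_0$ bounds the bias by $K\tau^2/r$ uniformly, so the exact cancellation of $s_Qm_Q$ survives up to this additive error.
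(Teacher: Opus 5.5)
Your proposal is correct and follows the paper's argument essentially step for step. The shared steps are: parallel signed-parity decay experiments on each cluster with $\tau=[2\mathsf k(\mathsf d+1)]^{-1}$, SPAM cancellation in the log-ratio estimator, Hoeffding plus a union bound over the $\Or(N)$ pairs $(Q,C_{i,j})$ and both times, and inversion via Eq.~\eqref{eq:gamma_from_decay_rates}. Your explicit $\eta/2+\eta/2$ split between pulse bias and statistical error is slightly cleaner bookkeeping of the same ingredients. One small citation fix: the lower bound $f_{Q,C}(\tau)\geq c$ comes from the decay-rate bound \eqref{eq:decay_rate_bound} together with your choice of $\tau$, not from Eq.~\eqref{eq:zero_time_signal_bound} alone.
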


\subsubsection{The cost of learning the diagonal part}
\label{sec:cost_diagonal_part}

We are now ready to estimate the cost of estimating all coefficients $\alpha_{aa}$ associated with the Lindbladian in \eqref{eq:lindbladian_defn}. We first apply the method in the previous section to learn $\{\gamma_{P,C_{i,j}}\}_{P,j}$ in a SPAM-robust way for each $i\in[\mathsf{r}]$. Next, we use $\{\gamma_{P,C_{i,j}}\}_{P,j}$ to reconstruct $\alpha_{aa}$ using the back-substitution procedure described in \eqref{eq:back_substitution}. Estimation error in $\{\gamma_{P,C_{i,j}}\}_{P,j}$ can only be amplified $\Or(1)$ times in this procedure. Therefore, we have the following theorem.
\begin{theorem}
    \label{thm:estimate_diagonal}
    Under the SPAM noise model in Definition~\ref{defn:spam_noise_model}, for any $\epsilon,\delta>0$,
    we can estimate all $\{\alpha_{aa}\}_{a}$ to within additive error $\epsilon$ with probability at least $1-\delta$ using $\Or(\epsilon^{-2}\log(N/\delta))$ experiments and total evolution time.
    {The protocol uses
    \(\Or(N\epsilon^{-3}\log(N/\delta))\) single-qubit gates in total.}
\end{theorem}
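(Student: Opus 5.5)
The plan is to combine Lemma~\ref{lem:estimate_gamma} with the back-substitution of Eq.~\eqref{eq:back_substitution}, taking a union bound over the $\mathsf r\leq\mathsf d+1$ partitions of the interaction-covering family. Concretely, for each partition $\Pi_i$, invoke Lemma~\ref{lem:estimate_gamma} with accuracy $\epsilon'=\epsilon/A$ and failure probability $\delta/\mathsf r$. Here $A=\Or(1)$ is the error-amplification constant of the back-substitution, to be determined below. Since $\mathsf r=\Or(1)$, the total number of experiments is $\mathsf r\cdot\Or(A^2\epsilon^{-2}\log(\mathsf rN/\delta))=\Or(\epsilon^{-2}\log(N/\delta))$. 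Each experiment uses evolution time at most $\tau=\Or(1)$, so the total evolution time is of the same order. The union bound gives that, with probability at least $1-\delta$, every $\widehat\gamma_{P,C_{i,j}}$ is $\epsilon'$-accurate.

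The step needing care is bounding $A$. By Eq.~\eqref{eq:from_alpha_aa_to_gamma_PC} and Lemma~\ref{lem:restricted_pauli_lindbladian}, the exact $\gamma$'s satisfy the back-substitution identity exactly, so only estimation error propagates. I would argue by induction on the support-inclusion order. Define the depth of $a$ as the length of the longest strict chain $S_a\subsetneq S_b\subsetneq\cdots$ among diagonal labels; since $|S_a|\leq\mathsf k$, the depth is at most $\mathsf k-1$.

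Each $b\neq a$ appearing in the sum for $a$ has $P_{b,C(a)}=P_a\neq I$, so $S_b$ intersects $S_a$. By the degree bound on $G_{\mathrm{int}}$, there are at most $\mathsf d$ such $b$. Writing $e_a$ for the error in $\widehat\alpha_{aa}$, we get $e_a\leq\epsilon'+\mathsf d\max_b e_b$, where the maximum is over labels of strictly smaller depth. Hence $e_a\leq\epsilon'\sum_{j<\mathsf k}\mathsf d^j$. This gives $A=\sum_{j=0}^{\mathsf k-1}\mathsf d^j$, which depends only on $\mathsf k$ and $\mathsf d$.

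For the gate count, Lemma~\ref{lem:local_pulse_rate} with $t=\tau=\Or(1)$ requires $r=\Or(\epsilon^{-1})$ pulse rounds to make the twirling bias at most a constant fraction of $\epsilon'$. Folding this bias into the Hoeffding slack only changes constants, because the log-ratio estimator is Lipschitz on signals bounded below by $c/2$. Each experiment therefore uses $\Or(N(r+1))=\Or(N\epsilon^{-1})$ single-qubit gates. Multiplying by the $\Or(\epsilon^{-2}\log(N/\delta))$ experiments gives $\Or(N\epsilon^{-3}\log(N/\delta))$ gates in total.

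The main subtlety I expect is checking that the pulse-induced bias, bounded in Eq.~\eqref{eq:local_pulse_marginal_bound}, is compatible with the signed random-Pauli projection. That projection is itself a product-Pauli operation applied on the reduced marginal of size at most $\mathsf k$, so the marginal trace-norm bound controls the bias in $f_{Q,C}$ directly.
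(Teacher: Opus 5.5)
Your proposal is correct and follows the paper's proof: it applies Lemma~\ref{lem:estimate_gamma} to each of the $\mathsf r=\Or(1)$ partitions, takes a union bound, runs the support-ordered back-substitution with $\Or(1)$ error amplification, and uses $r=\Or(\epsilon^{-1})$ pulse rounds to get the gate count. Your constant $A=\sum_{j=0}^{\mathsf k-1}\mathsf d^j$ simply makes precise the paper's remark about ``bounded branching and bounded depth.'' One justification should be tightened: the marginal bound applies not because of anything about ``reduced marginals,'' but because the signed average over $R_1$ replaces the possibly global $R_1M_QR_1$ by $m_QQ$, which is supported on $S\subseteq C$ with norm at most one.
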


\subsection{Gauge-invariant off-diagonal parameters}

The {universally} gauge-invariant off-diagonal {$\chi$-matrix components} can be divided into two categories: Type I and Type II, which correspond to the learnable real and imaginary components, respectively, as defined in Definitions~\ref{defn:typei_coeff} and~\ref{defn:typeii_coeff}.

\subsubsection{Type I}

The technique for learning Type I off-diagonals follows directly from that used to learn the diagonal coefficients. To demonstrate this, we will rotate the original Lindbladian such that the off-diagonal coefficients appear along the diagonal. 
We will write $P\cdot Q$ to denote the superoperator $\rho\mapsto P\rho Q$.
Consider the restriction of the Lindbladian $\mathcal L$ to the span of $P_a \cdot P_a, P_a \cdot P_b, P_b \cdot P_a, P_b \cdot P_b$, for $a, b \in \mathcal T_1$:
$$\mathcal L_{ab} = \alpha_{aa}P_a \rho P_a + \alpha_{ab}P_a \rho P_b + \alpha_{ab}^*P_b \rho P_a + \alpha_{bb}P_b \rho P_b.$$
By definition, we have that for $P_a = \prod_{i = 1} ^n P_{a, i}$ and $P_b = \prod_{i = 1} ^n P_{b, i}$, and there exists a unique $j$ such that $P_{a, j} \ne P_{b, j}$. 
We define the single-qubit Pauli acting at the $j$th index of $P_a$ as $Q$ and that acting at the $j$th index of $P_b$ as $R$. Since we only aim to learn the real part of $\alpha_{ab}$, which is the same as that of $\alpha_{ba}$, we only need to consider $(Q, R) \in \{(X, Y), (Y, Z), (Z, X)\}$. For each of these cases, we can define a single-qubit unitary transformation $U_{Q, R}$ that maps
\begin{equation}
    \label{eq:transformation_requirement}
    Q \mapsto \frac{1}{\sqrt 2}(Q - R), \quad R \mapsto \frac{1}{\sqrt 2}(Q + R), \quad Q \star R \mapsto Q \star R
\end{equation}
under conjugation {\(U_{Q,R}^\dagger\cdot U_{Q,R}\)}. Defining $U_{Q, R, j}$ as the action of $U_{Q, R}$ on the $j$th qubit,  we can define the unitary channel $\mathcal U_{Q, R, j}^\dagger = U_{Q, R, j} \cdot U_{Q, R, j}^\dagger$. We can decompose $U_{Q, R, j} = C_{Q, R, j}^\dagger TC_{Q, R, j}$, where 
\begin{equation}
\label{eq:type_i_dressing_choice}
    T=\begin{pmatrix}
    1 & 0 \\
    0 & e^{i\pi/4}
\end{pmatrix},\quad
C_{Q, R, j} = \begin{cases}
    I_j & (Q, R) = (X, Y) \\
    H_j & (Q, R) = (Y, Z) \\
    H_jS_jH_j & (Q, R) = (Z, X)
\end{cases},
\end{equation}
One can readily verify that this construction satisfies \eqref{eq:transformation_requirement}.
Since conjugating a Lindbladian by a unitary channel produces another Lindbladian, we arrive at the following lemma:

\begin{lemma}\label{lem:typei_coeff}
    For a Lindbladian $\mathcal L[\rho] = -i\sum_{c \in \mathcal I_h}h_c[P_c, \rho] + \sum_{(a, b) \in \mathcal I_{d}} \alpha_{ab}\mathcal D_{ab}[\rho]$, define the rotated Lindbladian $\mathcal L'[\rho] = \mathcal U_{Q, R, j} \circ \mathcal L \circ \mathcal U_{Q, R, j}^\dagger$. We then have that
    {
$$
\mathcal L'[\rho]=-i\sum_c h_c'[P_c,\rho]+\sum_{a,b}\alpha_{ab}'\mathcal D_{ab}[\rho],
$$
where \(c\) ranges over all nonidentity Pauli strings, and \((a,b)\) ranges over all ordered pairs of nonidentity Pauli strings, whose joint support is contained in an allowed patch
of the support pattern of \(\mathcal L\).
Here we allow coefficients that were absent from the original lists.}
    {For any} $(k, \ell) \in \mathcal T_1$ such that $P_{k, j} = Q, P_{\ell, j} = R$, we have that
    \begin{align*}
        \alpha_{kk}' =
        \frac{\alpha_{kk} + \alpha_{\ell\ell}}{2} + \operatorname{Re}[\alpha_{k \ell}], \quad 
        \alpha_{\ell \ell}' =
        \frac{\alpha_{kk} + \alpha_{\ell\ell}}{2} - \operatorname{Re}[\alpha_{k \ell}]
    \end{align*}
    and for all other $m$ for which $P_{m, j} \notin \{Q, R\}$,
    $\alpha_{mm}' = \alpha_{mm}$.
\end{lemma}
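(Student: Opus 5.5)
The plan is to compute the conjugated generator term by term, using that conjugation by a product of single-qubit unitaries acts linearly on the Pauli labels. First, the Hamiltonian part: $\mathcal U_{Q,R,j}\circ(-i[P_c,\cdot])\circ\mathcal U_{Q,R,j}^\dagger=-i[U P_c U^\dagger,\cdot]$ with $U=U_{Q,R,j}$ (up to the paper's adjoint convention). Here $UP_cU^\dagger$ is a real combination of Pauli strings with the same support. So it contributes only to Hamiltonian coefficients $h'$ within the same patch and never to any $\alpha'$. Similarly, $\mathcal U\circ\mathcal D_{ab}\circ\mathcal U^\dagger=\mathcal D_{U P_aU^\dagger,\,UP_bU^\dagger}$, where the dissipator is extended sesquilinearly in its two operator arguments. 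Writing $UP_aU^\dagger=\sum_m c_{am}P_m$, the new Kossakowski matrix is $A'=C A C^\dagger$ (or $C^{T}$, depending on index convention). Here $C$ is the real orthogonal matrix induced on the nonidentity Pauli basis. This $C$ is block diagonal, acting only on the $j$th tensor factor, and it preserves supports. Hence every new pair $(m,n)$ with $\alpha'_{mn}\neq0$ has joint support inside the original term's patch, which proves the first claim.

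Next, the diagonal entries are $\alpha'_{mm}=\sum_{a,b}c_{am}c_{bm}\alpha_{ab}$, which is real since $\alpha_{ab}+\alpha_{ba}=2\Re\alpha_{ab}$ by Hermiticity. Fix $P_m$. Write $P_m=P_{m,j}\otimes W$ with $W$ the restriction of $P_m$ off qubit $j$. Because $U$ acts only on qubit $j$, only labels $P_a$ of the form $\sigma\otimes W$ contribute, with $\sigma\in\{I,Q,R,Q\star R\}$.

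By \eqref{eq:transformation_requirement}, $I$ and $Q\star R$ are fixed, and $\mathrm{span}\{Q,R\}$ is rotated by $45^\circ$. Thus if $P_{m,j}\notin\{Q,R\}$, then $c_{am}=\delta_{am}$ and $\alpha'_{mm}=\alpha_{mm}$. If $P_{m,j}=Q$, inverting the rotation shows that $P_k=Q\otimes W$ and $P_\ell=R\otimes W$ map onto $P_k$ with coefficients $\pm1/\sqrt2$. This gives $\alpha'_{kk}=\tfrac12(\alpha_{kk}+\alpha_{\ell\ell})\pm\Re\alpha_{k\ell}$, and the opposite sign for $\alpha'_{\ell\ell}$.

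I expect the main obstacle to be bookkeeping rather than any conceptual difficulty. One point is fixing the direction of conjugation, since $\mathcal U$ versus $\mathcal U^\dagger$ decides whether $C$ or $C^{-1}$ appears. Both choices swap the sign of the cross term, so I must check it against the stated mapping $Q\mapsto (Q-R)/\sqrt2$ to obtain $+\Re\alpha_{k\ell}$ for $\alpha'_{kk}$.

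A second point is the Hamiltonian and anticommutator pieces. These must not leak into diagonal dissipative coefficients. This holds because the GKS decomposition is unique once the identity is excluded from the dissipative basis, and conjugation maps traceless operators to traceless ones. Finally, when $\alpha_{k\ell}$ or the diagonal entries are absent from the original lists, I will treat them as zero. This is consistent with the statement allowing new coefficients.
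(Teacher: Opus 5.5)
Your proposal is correct and follows essentially the same route as the paper. Conjugation acts on the Pauli labels by a real, support-preserving map that fixes $I$ and $Q\star R$ on qubit $j$ and rotates $\operatorname{span}\{Q,R\}$ by $45^\circ$, so $\alpha'_{mm}$ is read off from the transformed $\{P_k,P_\ell\}$ block; your congruence $A'=C^{T}AC$ and the uniqueness of the traceless GKS form make explicit what the paper leaves implicit. With the paper's convention $\mathcal U_{Q,R,j}[X]=U_{Q,R,j}^\dagger X U_{Q,R,j}$, both $P_k$ and $P_\ell$ map onto $P_k$ with coefficient $+1/\sqrt2$, which fixes the sign to $+\Re\alpha_{k\ell}$ for $\alpha'_{kk}$, as stated.
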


\begin{proof}
    Conjugation by single-qubit unitaries maps each Pauli string into a linear combination of Pauli strings with the same support, so $\mathcal L'$ remains within the support pattern of $\mathcal L$.
    If $(k, \ell) \in \mathcal T_1$ and $P_{k, j} = Q, P_{\ell, j} = R$, then under $\mathcal U_{Q, R, j}$,
    $$P_k \mapsto \frac{P_k - P_\ell}{\sqrt 2}, \quad P_\ell \mapsto \frac{P_k + P_\ell}{\sqrt 2},$$
    meaning that the restriction $\mathcal L_{k \ell}$ of the Lindbladian $\mathcal L$ to the span of $P_k \cdot P_k, P_k \cdot P_\ell, P_\ell \cdot P_k, P_\ell \cdot P_\ell$ maps to
    {
$$
\begin{aligned}
\mathcal L'_{k\ell}[\rho]
={}&\frac{\alpha_{kk}+\alpha_{\ell\ell}+\alpha_{k\ell}+\alpha_{\ell k}}{2}P_k\rho P_k\\
&+\frac{\alpha_{kk}+\alpha_{\ell\ell}-\alpha_{k\ell}-\alpha_{\ell k}}{2}P_\ell\rho P_\ell
+(\text{off-diagonals}),
\end{aligned}
$$
}
    giving us the values for $\alpha'_{kk}$ and {\(\alpha'_{\ell\ell}\)}. Furthermore, for any Pauli $P_m$ where $P_{m, j} \in \{I, Q \star R\}$, we have that $P_m \mapsto P_m$, meaning that we trivially have that $\alpha'_{mm} = \alpha_{mm}$.
\end{proof}

We will be applying the above transformation to multiple qubits for parallel learning.
We can construct a set of defect qubits $\mathcal Q_{i, s}$ such that $\forall (j, s), |\mathcal Q_{i, s} \cap C_{i, j}| \le 1$ and $\bigsqcup_{s} \mathcal Q_{i, s}$ is the set of all $N$ qubits. From Lemma~\ref{lem:typei_coeff}, we know that $\alpha_{kk}' + \alpha_{\ell \ell}' = \alpha_{kk} + \alpha_{\ell \ell}$. {
For a fixed nonidentity Pauli \(P\) on \(C_{i,j}\), the restricted coefficient (obtained through \eqref{eq:from_alpha_aa_to_gamma_PC})
\begin{equation}
    \label{eq:restricted_rotated_coefs}
\gamma'_{P,C_{i,j},\mathcal Q_{i,s}}
=
\sum_{R\in\bar{\mathcal P}_{[N]\setminus C_{i,j}}}
\alpha'_{P\otimes R,P\otimes R}
\end{equation}
is unchanged whether or not rotations are applied outside \(C_{i,j}\), provided the rotation inside \(C_{i,j}\) is held fixed. Individual diagonal coefficients in this sum may change.
Hence any defect on a qubit \(p\notin C_{i,j}\) has no effect on the restricted coefficient} $\gamma'_{P, C_{i, j}, \mathcal Q_{i, s}}$, defined as the coefficient $\gamma_{P, C_{i, j}}$ for the Lindbladian $\mathcal L' = \mathcal U_{Q, R} \circ \mathcal L \circ \mathcal U_{Q, R}^\dag$, where $\mathcal U_{Q, R} = \prod_{j \in \mathcal Q_{i, s}} \mathcal U_{Q_j, R_j, j}$ and $Q, R \in \bar{\mathcal P}_N$. 
Consequently, the transformed cluster coefficient \(\gamma'_{P,C_{i,j},\mathcal Q_{i,s}}\) depends on \(\mathcal Q_{i,s}\) only through the unique qubit \(q\in\mathcal Q_{i,s}\cap C_{i,j}\), whose local rotation transforms the restricted Lindbladian as described in Lemma~\ref{lem:typei_coeff}.

We now show how to recover the original Type I components from the rotated restricted coefficients \eqref{eq:restricted_rotated_coefs}.
For Pauli strings \(A,B\) on \(C=C_{i,j}\) that agree except at the distinguished qubit \(q\), where \(A_q=Q\) and \(B_q=R\), Lemma~\ref{lem:typei_coeff} and \eqref{eq:restricted_rotated_coefs}, together with invariance under exterior rotations, give
$$
\frac{\gamma'_{A,C,\mathcal Q_{i,s}}-\gamma'_{B,C,\mathcal Q_{i,s}}}{2}
=
\sum_{\substack{(k,\ell)\in\mathcal T_1\\
P_{k,C}=A,\;P_{\ell,C}=B}}
\Re\alpha_{k\ell}.
$$
Support-ordered back-substitution then recovers the individual components as in
\eqref{eq:back_substitution}. 

We next verify that the same decay-rate and pulse-rate bounds (Eq.~\eqref{eq:decay_rate_bound} and Lemma~\ref{lem:local_pulse_rate}) apply to the rotated Lindbladian.
Local unitary conjugation preserves the supports and induced norms of the original local summands. Although their Pauli expansions may acquire additional terms and coefficients larger than one, their norms remain at most \(2\). Twirling is an average of norm-preserving conjugations, so the same local decay-rate bound \(2\mathsf k(\mathsf d+1)\) and pulse-rate argument apply. The number of Pauli terms per support increases by at most a factor depending on \(\mathsf k\), preserving constant reconstruction overhead.

This means that to learn the Type I coefficients in parallel, we simply learn $\{\gamma'_{P, C_{i, j}, \mathcal Q_{i, s}}\}_{P, j}$ for all $s$ and for every $i \in [r]$. Since each cluster is of size at most $\mathsf{k}$, this means that there are at most $\mathsf{k} = \Or(1)$ choices of $s$. Hence, learning the Type I off-diagonals results in a constant overhead over Theorem~\ref{thm:estimate_diagonal}.
\begin{theorem}
    \label{thm:estimate_typei}
    Under the SPAM noise model in Definition~\ref{defn:spam_noise_model}, for any $\epsilon,\delta>0$,
    we can estimate all {\(\{\Re\alpha_{ab}\}_{(a,b)\in\mathcal T_1}\)} to within additive error $\epsilon$ with probability at least $1-\delta$ using $\Or(\epsilon^{-2}\log(N/\delta))$ experiments and total evolution time. 
    {The protocol uses
    \(\Or(N\epsilon^{-3}\log(N/\delta))\) single-qubit gates in total.}
\end{theorem}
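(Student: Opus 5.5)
The proof will reduce Theorem~\ref{thm:estimate_typei} to the machinery behind Theorem~\ref{thm:estimate_diagonal}, applied to the rotated Lindbladians $\mathcal L'=\mathcal U_{Q,R}\circ\mathcal L\circ\mathcal U_{Q,R}^\dagger$.

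\textbf{Step 1: implement $\mathcal L'$ with unchanged SPAM.} For each partition $\Pi_i$, each defect class $\mathcal Q_{i,s}$ (at most $\mathsf k$ of them) and each of the three pairs $(Q,R)$ per defect qubit (a constant number of choices), we sandwich every evolution segment with the trusted product unitaries $\mathcal U_{Q,R}^\dagger$ and $\mathcal U_{Q,R}$. Adjacent factors cancel between pulse rounds, so the net effect is evolution under $e^{t\mathcal L'}$ with the same noisy $\rho_0$ and $\mathcal E$. The SPAM prefactors $s_Q,m_Q$ in \eqref{eq:spam_prefactors} therefore still obey \eqref{eq:spam_prefactor_bounds}.

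\textbf{Step 2: rerun the diagonal analysis on $\mathcal L'$.} The paragraph preceding the theorem shows that $\mathcal L'$ is again $(\mathsf k,\mathsf d)$-bounded-degree local on the same patches, with local summand norms at most $2$. Consequently:
\begin{itemize}
\item Lemma~\ref{lem:local_pulse_rate} applies with constants depending only on $\mathsf k,\mathsf d$.
\item The decay-rate bound \eqref{eq:decay_rate_bound} holds with the same $\tau$.
\end{itemize}
Lemma~\ref{lem:estimate_gamma} then yields every $\gamma'_{P,C_{i,j},\mathcal Q_{i,s}}$ to accuracy $\epsilon'$ using $\Or(\epsilon'^{-2}\log(N/\delta'))$ experiments per configuration. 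There are $\Or(1)$ configurations in total, so a union bound with $\delta'=\delta/\Or(1)$ preserves the $\Or(\epsilon^{-2}\log(N/\delta))$ scaling.

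\textbf{Step 3: classical reconstruction.} We use the difference identity
\[
\frac{\gamma'_{A,C,\mathcal Q_{i,s}}-\gamma'_{B,C,\mathcal Q_{i,s}}}{2}=\sum_{\substack{(k,\ell)\in\mathcal T_1\\ P_{k,C}=A,\;P_{\ell,C}=B}}\Re\alpha_{k\ell}.
\]
This identity needs a short justification via Lemma~\ref{lem:typei_coeff}:
\begin{itemize}
\item Diagonal coefficients of strings whose entry at $q$ lies in $\{I,Q\star R\}$ are untouched by the rotation.
\item For strings with entry $Q$ or $R$ at $q$, the rotation replaces $\alpha_{kk}$ and $\alpha_{\ell\ell}$ by their average $\pm\Re\alpha_{k\ell}$.
\item Summing over exterior completions leaves exactly the Type~I real parts, because exterior rotations preserve the restricted sums.
\end{itemize}
We then back-substitute in support-inclusion order as in \eqref{eq:back_substitution}. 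Chains have depth at most $\mathsf k$ and bounded branching, so the error is amplified by a constant depending only on $\mathsf k,\mathsf d$. Choosing $\epsilon'=\epsilon/\Or(1)$ completes the accuracy guarantee.

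\textbf{Gate count.} Each experiment uses $r=\Or(\epsilon^{-1})$ pulse rounds of $\Or(N)$ gates, plus $\Or(N)$ gates for the rotations. Over $\Or(\epsilon^{-2}\log(N/\delta))$ experiments this gives $\Or(N\epsilon^{-3}\log(N/\delta))$ gates.

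\textbf{Main obstacle.} I expect the hardest step to be the rigorous justification of the difference identity together with insensitivity to defects outside $C_{i,j}$. In particular, one must check that the rotated Lindbladian's new off-diagonal and Hamiltonian terms contribute nothing to the twirled diagonal. This holds because the twirl retains only diagonal $\chi$ entries, and the diagonal entries after rotation are given exactly by Lemma~\ref{lem:typei_coeff}.
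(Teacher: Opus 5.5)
Your proposal is correct and follows essentially the same route as the paper. You implement $\mathcal L'$ with trusted single-qubit rotations on defect sets $\mathcal Q_{i,s}$ that place one qubit per cluster. You then rerun the SPAM-robust diagonal protocol; its decay-rate and pulse-rate bounds survive because local rotations preserve supports and summand norms. Finally, you recover $\Re\alpha_{k\ell}$ from the difference identity given by Lemma~\ref{lem:typei_coeff}, using invariance of the restricted sums under exterior rotations, followed by support-ordered back-substitution.

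One small imprecision: with Pauli pulses interleaved, the rotation factors between rounds do not literally cancel. Instead they merge with the pulses into one layer of single-qubit gates, or they do cancel if you twirl in the rotated frame. Either way, the experiment and gate counts are unaffected.
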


\subsubsection{Type II}

While rotating the Lindbladian is sufficient to learn the Type I coefficients, the same does not hold true for Type II components. This follows from the fact that there is no such unitary transformation that maps $I$ to anything but $I$.
Consequently, we instead utilize partial twirling to only eliminate some of the off-diagonals. In this section we sometimes represent single-qubit Pauli matrices as $X,Y,Z$ for notation simplicity.

Fix $(a,b)\in\mathcal T_2$, and let $q$ be the unique qubit on which $P_a$ and $P_b$ differ. Without loss of generality, we assume
\begin{equation}
    P_a=I_q\otimes R,
    \quad
    P_b=Z_q\otimes R.
\end{equation}
We will show how to learn the imaginary part of the corresponding $\chi$-matrix entry $\chi_{I_q\otimes R,Z_q\otimes R}=\alpha_{ab}$.
{We order each pair with the identity on the left at \(q\); reversing the pair only changes the sign because \(\Im\alpha_{ba}=-\Im\alpha_{ab}\). The choice of the \(Z\) axis is without loss of generality, since the other two axes are related by trusted single-qubit Clifford rotations.}

Let $C$ be a cluster containing $S_{(a,b)}$, and let $U,V\in\{I,X,Y,Z\}$. For $R\in\bar{\mathcal P}_{C\setminus\{q\}}$, define the restricted coefficient
\begin{equation}
    \label{eq:typeii_restricted_chi}
    \chi^{(q,C)}_{UV;R}
    =
    \sum_{Q\in \bar{\mathcal{P}}_{[N]\setminus C}}
    \chi_{U_q\otimes R\otimes Q,V_q\otimes R\otimes Q},
\end{equation}
where $R$ is supported on $C\setminus\{q\}$ and $Q$ is supported on $[N]\setminus C$.
As in the diagonal case, these restricted coefficients combine all terms that have the same restriction to $C$. The original coefficients will be reconstructed by back-substitution after the local experiment.

{We partially Pauli-twirl the evolution to retain the Type II entries while reducing the relevant dynamics to a two-dimensional block.}
Let
\begin{equation}
    \mathcal G_q
    =
    \{I_q,Z_q\}\otimes\bar{\mathcal P}_{[N]\setminus\{q\}},
\end{equation}
and, for $K\in\mathcal G_q$, let $\mathcal K[\rho]=K\rho K$. 
The ideal partially twirled generator is
\begin{equation}
    \label{eq:typeii_partial_twirl}
    \mathcal L^{(q)}
    =
    \frac{1}{2\cdot4^{N-1}}
    \sum_{K\in\mathcal G_q}
    \mathcal K^\dagger\circ\mathcal L\circ\mathcal K.
\end{equation}
The physical protocol uses finite-rate randomized Pauli pulses to approximate this ideal evolution. 
We bound the resulting local signal bias below, after establishing the parallel partial-twirl construction.

The full Pauli twirl on every qubit other than $q$ retains only entries whose left and right Pauli labels agree away from $q$. The $\{I,Z\}$ twirl on $q$ retains the two sectors $\{I,Z\}$ and $\{X,Y\}$.
{Thus a surviving entry has \((U,V)\in\{I,Z\}^2\cup\{X,Y\}^2\) on \(q\). The Type II entries in the \(\{I,Z\}\) sector determine the antisymmetric part of the coupling between \(X\) and \(Y\); the \(\{X,Y\}\) sector also survives and will be canceled through classical postprocessing in the estimator below.}

For $R'\in\bar{\mathcal P}_{C\setminus\{q\}}$, define the Pauli Fourier coefficients
\begin{equation}
    \label{eq:typeii_fourier_coefficients}
    \widehat\chi^{R'}_{UV}
    =
    \sum_{R\in\bar{\mathcal P}_{C\setminus\{q\}}}
    (-1)^{\braket{R,R'}}\chi^{(q,C)}_{UV;R}.
\end{equation}
We view $R'$ as acting trivially outside $C\setminus\{q\}$. 
{To compute the action of the Heisenberg-picture generator \((\mathcal L^{(q)})^\dagger\) on \(\sigma_q\otimes R'\), where \(\sigma\in\{X,Y\}\), note that a term with left and right labels  \(U_q\otimes R\otimes Q\) and \(V_q\otimes R\otimes Q\) contributes
$$
\chi_{U_q\otimes R\otimes Q,V_q\otimes R\otimes Q}
(V\sigma U)_q\otimes(RR'R)\otimes(QI Q)
$$
to \((\mathcal L^{(q)})^\dagger[\sigma_q\otimes R']\), for \(\sigma\in\{X,Y\}\). 
In the Heisenberg picture, using \(\chi_{U,V}^{*}=\chi_{V,U}\),  we have \(\mathcal L^\dagger[O]=\sum_{U,V}\chi_{U,V}VOU\).
Since \(RR'R=(-1)^{\braket{R,R'}}R'\) and \(QI Q=I\), summing the entries that survive the partial Pauli twirl (i.e., those in $\{I,Z\}^2\cup\{X,Y\}^2$) gives
$$
(\mathcal L^{(q)})^\dagger[\sigma_q\otimes R']
=\left(
\sum_{(U,V)\in\{I,Z\}^2\cup\{X,Y\}^2}
\widehat\chi^{R'}_{UV}V\sigma U
\right)_q\otimes R'.
$$
This uses the full \(\chi\) representation, so the Hamiltonian and anticommutator contributions are already included. For each surviving pair, \(V\sigma U\) is proportional to \(X\) or \(Y\), proving that their span is invariant. In particular, the \((I,Z)\) and \((Z,I)\) entries contribute \(-2\Im\widehat\chi^{R'}_{IZ}Y\) for \(\sigma=X\) and \(+2\Im\widehat\chi^{R'}_{IZ}X\) for \(\sigma=Y\); the \((X,Y)\) and \((Y,X)\) entries contribute \(2\Re\widehat\chi^{R'}_{XY}\) to both off-diagonal couplings. Combining these with the diagonal entries yields}
\begin{equation}
    \label{eq:typeii_effective_action}
    \begin{aligned}
        (\mathcal L^{(q)})^\dagger[X_q\otimes R']
        &=
        (d+c)X_q\otimes R'
        +{(a-b)}Y_q\otimes R',\\
        (\mathcal L^{(q)})^\dagger[Y_q\otimes R']
        &=
        {(a+b)}X_q\otimes R'
        +(d-c)Y_q\otimes R',
    \end{aligned}
\end{equation}
where
\begin{equation}
    \label{eq:typeii_block_parameters}
    a=2\Re\widehat\chi^{R'}_{XY},
    \quad
    b=2\Im\widehat\chi^{R'}_{IZ},
    \quad
    c=\widehat\chi^{R'}_{XX}-\widehat\chi^{R'}_{YY},
    \quad
    d=\widehat\chi^{R'}_{II}-\widehat\chi^{R'}_{ZZ}.
\end{equation}
In particular, if
\begin{equation}
    \mathbf P_{R'}
    =
    \begin{pmatrix}
        X_q\otimes R'\\
        Y_q\otimes R'
    \end{pmatrix},
\end{equation}
then
\begin{equation}
    \label{eq:typeii_block_matrix}
    (\mathcal L^{(q)})^\dagger[\mathbf P_{R'}]
    =A_{R'}\mathbf P_{R'},
    \quad
    A_{R'}={aX-ibY+cZ+dI}
    =
    \begin{pmatrix}
        d+c&{a-b}\\
        {a+b}&d-c
    \end{pmatrix}.
\end{equation}
This gives a two-dimensional effective dynamics even though the original Lindbladian acts on $N$ qubits.

The time dependence of this block can be written explicitly. Defining
\begin{equation}
    \Delta=a^2-b^2+c^2,
    \quad
    \lambda=\sqrt{\Delta},
\end{equation}
we have
\begin{equation}
    \label{eq:typeii_matrix_exponential}
    e^{tA_{R'}}
    =
    e^{dt}
    \left(
        \cosh(\lambda t)I
        +
        \frac{\sinh(\lambda t)}{\lambda}
        {(aX-ibY+cZ)}
    \right).
\end{equation}
The expression is understood by continuity when $\Delta=0$. When $\Delta<0$, writing $\lambda=i\omega$ turns the hyperbolic functions into $\cos(\omega t)$ and $\sin(\omega t)$.

We next describe a SPAM-robust experiment for estimating \(b=2\Im\widehat{\chi}^{R'}_{IZ}\). The formula below eliminates \(a\), while \(c\) and \(d\) do not enter the off-diagonal derivatives at \(t=0\).
The procedure is similar to the one in Section~\ref{sec:robust_est_gamma_P_C}. 
Fix $R'\in\bar{\mathcal P}_{C\setminus\{q\}}$ and let
\begin{equation}
    Q_{\mu,R'}=\sigma_q^\mu\otimes R',
    \quad
    \mu\in\{x,y\}.
\end{equation}
Let $S=\supp(Q_{\mu,R'})=\{q\}\cup\supp(R')$, which is independent of $\mu$, and define $Z_S=\prod_{j\in S}\sigma_j^z$. For each $\mu\in\{x,y\}$, choose a product Clifford $V_{\mu,R'}$ supported on $S$ such that
\begin{equation}
    V_{\mu,R'}Z_SV_{\mu,R'}^\dagger=Q_{\mu,R'}.
\end{equation}
Starting from the noisy computational-basis state $\rho_0$, the state used to prepare the $\mu$ component is
\begin{equation}
    \rho_{\mu,R'}
    =
    V_{\mu,R'}\rho_0V_{\mu,R'}^\dagger.
\end{equation}
Similarly, measuring $Q_{\mu,R'}$ is implemented by applying $V_{\mu,R'}^\dagger$ and measuring in the computational basis on $S$. Accounting for the measurement-error channel $\mathcal E$, the corresponding effective observable is
\begin{equation}
    M_{\mu,R'}
    =
    V_{\mu,R'}\mathcal E^\dagger[Z_S]V_{\mu,R'}^\dagger.
\end{equation}
We define the preparation and measurement visibilities by
\begin{equation}
    \begin{aligned}
        s_{R'}
        &=
        \Tr[Q_{\mu,R'}\rho_{\mu,R'}]
        =
        \Tr[Z_S\rho_0],\\
        m_{R'}
        &=
        \overline{\Tr}[Q_{\mu,R'}M_{\mu,R'}]
        =
        \overline{\Tr}[Z_S\mathcal E^\dagger[Z_S]].
    \end{aligned}
\end{equation}
The right-hand sides are independent of $\mu$. Thus, the $X$ and $Y$ experiments have the same preparation visibility $s_{R'}$ and the same measurement visibility $m_{R'}$.

For \(\mu,\nu\in\{x,y\}\), we first describe the ideal experiment, whose expected signal is
\begin{equation}
\label{eq:F_mu_nu_defn}
    F_{\mu\nu}^{R'}(t)
    =
    m_{R'}
    \left(e^{tA_{R'}}\right)_{\mu\nu}
    s_{R'}.
\end{equation}
We prepare $\rho_{\nu,R'}$, apply a uniformly random Pauli
$R_0\in\bar{\mathcal P}_N$, evolve for time $t$ under the ideal generator $\mathcal L^{(q)}$, and then apply an independent uniformly random Pauli $R_1\in\bar{\mathcal P}_N$. 
We finally measure $Q_{\mu,R'}$ using the basis change described above. 
We denote by $Y_{\mu\nu}^{R'}(t)\in\{-1,1\}$ the measurement outcome, and multiply it in classical post-processing by
$
    (-1)^{
        \braket{Q_{\nu,R'},R_0}
        +
        \braket{Q_{\mu,R'},R_1}
    }.
$
Character orthogonality gives
\begin{equation}
\label{eq:typeii_pauli_projections}
    \begin{aligned}
        \mathbb E_{R_0}\left[
            (-1)^{\braket{Q_{\nu,R'},R_0}}
            R_0\rho_{\nu,R'}R_0
        \right]
        &=
        \frac{s_{R'}}{2^N}Q_{\nu,R'},\\
        \mathbb E_{R_1}\left[
            (-1)^{\braket{Q_{\mu,R'},R_1}}
            R_1M_{\mu,R'}R_1
        \right]
        &=
        m_{R'}Q_{\mu,R'}.
    \end{aligned}
\end{equation}
Thus, the two random Pauli operations remove every unwanted Pauli component of the prepared state and measured observable.
{The expected signed outcome is therefore
$$
\begin{aligned}
&\mathbb{E}[(-1)^{\braket{Q_{\nu,R'},R_0}+\braket{Q_{\mu,R'},R_1}}Y_{\mu\nu}^{R'}(t)]
=m_{R'}s_{R'}\overline{\Tr}\!\left[
\left(e^{t(\mathcal L^{(q)})^\dagger}[Q_{\mu,R'}]\right)Q_{\nu,R'}
\right]
=m_{R'}s_{R'}(e^{tA_{R'}})_{\mu\nu}
=F_{\mu\nu}^{R'}(t).
\end{aligned}
$$
The final equality uses Pauli orthogonality and Eq.~\eqref{eq:typeii_block_matrix}. Here $\mu$ labels the measured Pauli and $\nu$ labels the prepared Pauli.}
Since $S\subseteq C$ and $|C|\leq\mathsf{k}$, the SPAM noise model in Definition~\ref{defn:spam_noise_model} implies
\begin{equation}
    s_{R'}\geq r_p,
    \quad
    m_{R'}\geq r_m.
\end{equation}

Collecting the four experiments corresponding to $(\mu,\nu)\in\{x,y\}^2$ into a matrix gives
\begin{equation}
    \label{eq:typeii_spam_signal_matrix}
    F^{R'}(t)
    =
    \begin{pmatrix}
        F_{xx}^{R'}(t)&F_{xy}^{R'}(t)\\
        F_{yx}^{R'}(t)&F_{yy}^{R'}(t)
    \end{pmatrix}
    =
    m_{R'}s_{R'}e^{tA_{R'}}.
\end{equation}
The common SPAM prefactor cancels in the combination needed below. Differentiating at the origin gives
\begin{equation}
    \begin{aligned}
        \dot F_{xy}^{R'}(0)&=m_{R'}s_{R'}{(a-b)},\quad
        \dot F_{yx}^{R'}(0)&=m_{R'}s_{R'}{(a+b)},\quad
        F_{xx}^{R'}(0)&=F_{yy}^{R'}(0)=m_{R'}s_{R'}.
    \end{aligned}
\end{equation}
It follows that
\begin{equation}
    \label{eq:typeii_direct_b}
    b
    =
    \frac{
        {\dot F_{yx}^{R'}(0)-\dot F_{xy}^{R'}(0)}
    }{
        2F_{xx}^{R'}(0)
    }.
\end{equation}

We estimate the derivatives in \eqref{eq:typeii_direct_b} by interpolating each signal at Chebyshev nodes and differentiating the interpolating polynomial. The following lemma records the required accuracy and sample complexity.

\begin{lemma}[Chebyshev derivative estimation]
    \label{lem:chebyshev_derivative_estimation}
    Let $f:[0,T]\to[-1,1]$, and suppose that
    $
        g(x)=f\left(\frac{T}{2}(x+1)\right)
    $
    extends analytically to the Bernstein ellipse $E_\rho$ for some $\rho>1$, with $|g(z)|\leq B$ throughout $E_\rho$. Suppose that, at any chosen $t\in[0,T]$, one can sample a random variable in $[-1,1]$ with expectation $f(t)$. For every integer $n\geq1$, there is an estimator $\widehat D_n$ of $f'(0)$ using the $n+1$ mapped Chebyshev--Lobatto nodes such that
    \begin{equation}
        \left|\mathbb E[\widehat D_n]-f'(0)\right|
        \leq
        \frac{C_\rho B}{T}n^2\rho^{-n}.
    \end{equation}
    Moreover, using a total of $M$ samples allocated among these nodes, with probability at least $1-\delta$,
    \begin{equation}
        \left|\widehat D_n-\mathbb E[\widehat D_n]\right|
        \leq
        \frac{Cn^2}{T}
        \sqrt{\frac{\log(2/\delta)}{M}},
    \end{equation}
    where $C$ and $C_\rho$ are independent of $f,n,M,$ and $N$. Consequently, when $T,B,$ and $\rho$ are constants, choosing $n=\Theta(\log(1/\epsilon))$ estimates $f'(0)$ to additive error $\epsilon$ using
    \begin{equation}
        \Or\left(
            \epsilon^{-2}
            \log^4(2/\epsilon)
            \log(1/\delta)
        \right)
    \end{equation}
    samples. 
    All mapped Chebyshev--Lobatto nodes lie in \([0,T]\), so \(M\) samples use cumulative evolution time at most \(TM\). 
    Since \(T=\Or(1)\), total evolution time has the same asymptotic order as the sample count.
\end{lemma}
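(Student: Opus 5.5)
The plan is to use classical Chebyshev approximation theory for the bias and a linear-estimator concentration argument for the fluctuation. Let $x_k=\cos(k\pi/n)$, $k=0,\dots,n$, be the Chebyshev--Lobatto nodes, and $t_k=\frac{T}{2}(x_k+1)\in[0,T]$ the mapped nodes. Write $p_n$ for the degree-$n$ polynomial interpolating $g$ at the $x_k$. Since $f'(0)=\frac{2}{T}g'(-1)$, the estimator is
\[
\widehat D_n=\frac{2}{T}\sum_{k=0}^n w_k\,\widehat g_k ,
\]
where $\widehat g_k$ is the empirical mean of the samples drawn at $t_k$, and $w_k=\ell_k'(-1)$ are the derivatives at $-1$ of the Lagrange basis polynomials. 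This is exactly $\frac{2}{T}\widetilde p_n'(-1)$, with $\widetilde p_n$ the interpolant of the noisy data. Linearity gives $\mathbb E[\widehat D_n]=\frac{2}{T}p_n'(-1)$.

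For the bias, analyticity of $g$ in $E_\rho$ with bound $B$ gives Chebyshev coefficients $|a_j|\leq 2B\rho^{-j}$. The interpolant's coefficients differ from the $a_j$ only by aliasing of the tail $j>n$. So $g-p_n$ is a Chebyshev series with coefficients of size $\Or(B\rho^{-m})$ at each index $m$: aliasing folds the tail back onto indices $\leq n$ with total weight $\Or(B\rho^{-n}/(1-\rho^{-1}))$, and the untouched tail sits at indices above $n$. I then differentiate termwise at $-1$ using $|T_j'(-1)|=j^2$. The aliased low-degree part contributes at most $n^2\cdot\Or(B\rho^{-n})$. The high-degree tail contributes $\sum_{j>n}j^2\rho^{-j}B=\Or(n^2\rho^{-n}B)$, with a constant depending only on $\rho$. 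Multiplying by $2/T$ yields $C_\rho B n^2\rho^{-n}/T$. This aliasing bookkeeping is the step I expect to need the most care: one must check that folding index $j$ onto $|j\bmod 2n|$-type indices keeps the derivative weight at most $n^2$. It does, because every folded index is at most $n$, and the tail sum is geometric.

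For the fluctuation, I use the known explicit formula for the Chebyshev--Lobatto differentiation matrix at an endpoint: $|w_k|\leq (2n^2+1)/6$ for $k=0$ and $|w_k|=\Or(1/(1-x_k))$ otherwise. This gives $\sum_k|w_k|=\Or(n^2)$. I would actually avoid the delicate endpoint weights by allocating samples in proportion to $|w_k|$: $M_k\approx M|w_k|/W$, where $W=\sum_k|w_k|$, rounding up so each node gets at least one sample, which costs at most $n+1$ extra samples. Then $\widehat D_n$ is $\frac{2}{T}\frac{W}{M}$ times a sum of $M$ independent bounded terms in $[-1,1]$, each multiplied by a sign. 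Hoeffding's inequality gives deviation at most $\frac{2W}{T}\sqrt{2\log(2/\delta)/M}\leq \frac{Cn^2}{T}\sqrt{\log(2/\delta)/M}$.

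Finally, I balance the two terms. Choosing $n=\lceil \log_\rho(C'/\epsilon)\rceil$ plus lower-order terms makes the bias at most $\epsilon/2$, since $n^2\rho^{-n}$ decays once $n=\Theta(\log(1/\epsilon))$. Choosing $M=\Theta(n^4\epsilon^{-2}\log(2/\delta))=\Or(\epsilon^{-2}\log^4(2/\epsilon)\log(1/\delta))$ makes the fluctuation at most $\epsilon/2$. Every mapped node lies in $[0,T]$, so each sample uses evolution time at most $T$, and the total evolution time is at most $TM$.
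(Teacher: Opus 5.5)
Your proposal is correct and follows essentially the same route as the paper. The bias comes from Chebyshev--Lobatto interpolation, the aliasing bound, and $|T_k'(-1)|=k^2$. The fluctuation comes from the $\Or(n^2)$ absolute row sum of the endpoint differentiation weights, with samples allocated in proportion to $|w_k|$ and Hoeffding's inequality.

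The one real difference is how samples are assigned. The paper draws the node index independently for each of the $M$ samples with probability $|w_j|/W$ and records $W\sgn(w_j)Y_j\in[-W,W]$, so it uses exactly $M$ samples and needs no rounding. Your deterministic allocation with rounding up still gives the same Hoeffding bound, since $|w_k|/M_k\le W/M$, at the cost of at most $n+1$ extra samples. Also, with $x_k=\cos(k\pi/n)$ the endpoint $-1$ is $x_n$, and the off-diagonal weights scale like $1/(1+x_k)$ rather than $1/(1-x_k)$. This orientation slip does not affect $\sum_k|w_k|=\Or(n^2)$.
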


\begin{proof}
    Let $p_n$ be the degree-$n$ polynomial interpolating $g$ at the Chebyshev--Lobatto nodes $x_j=\cos(j\pi/n)$. Analyticity on $E_\rho$ implies that the Chebyshev coefficients of $g$ decay as $\Or(B\rho^{-k})$ \cite[Theorem~8.1]{trefethen2019approximation}. Since
    $
        |T_k'(-1)|=k^2,
    $
    the standard Chebyshev interpolation and aliasing bounds \cite[Theorem~4.2 and Equation~(4.9)]{trefethen2019approximation} give
    \begin{equation}
        \left|\frac{2}{T}p_n'(-1)-f'(0)\right|
        \leq
        \frac{C_\rho B}{T}
        \sum_{k>n}k^2\rho^{-k}
        \leq
        \frac{C_\rho B}{T}n^2\rho^{-n}.
    \end{equation}
    The endpoint, i.e., the point $-1$, derivative of the interpolant is a linear combination of the sampled values,
    \begin{equation}
        \frac{2}{T}p_n'(-1)
        =
        \sum_{j=0}^{n}w_j f\left(\frac{T}{2}(x_j+1)\right),
    \end{equation}
    where by explicit calculation using \cite[Equation (21.2)]{trefethen2019approximation}, the endpoint row of the Chebyshev differentiation matrix satisfies
    \begin{equation}
        \sum_{j=0}^{n}|w_j|\leq \frac{Cn^2}{T}.
    \end{equation}
    {Let \(W=\sum_j|w_j|\). For each of the \(M\) independent samples, draw a node index \(j\) with probability \(|w_j|/W\), sample its bounded outcome \(Y_j\), and record \(W\sgn(w_j)Y_j\). The sample mean \(\widehat D_n\) has expectation \(\sum_jw_j f(T(x_j+1)/2)=2p_n'(-1)/T\). 
    {Each recorded value lies in \([-W,W]\), so Hoeffding's inequality gives}}
    \begin{equation}
        \left|\widehat D_n-\mathbb E[\widehat D_n]\right|
        \leq
        \left(\sum_{j=0}^{n}|w_j|\right)
        \sqrt{\frac{2\log(2/\delta)}{M}}
    \end{equation}
    with probability at least $1-\delta$. The claimed complexity follows by taking $n=\Theta(\log(1/\epsilon))$ so that the interpolation bias is at most $\epsilon/2$, and then choosing $M$ so that the statistical error is also at most $\epsilon/2$.
\end{proof}

For the signals in \eqref{eq:typeii_spam_signal_matrix}, bounded-degree locality and bounded local coefficients imply $\|A_{R'}\|=\Or(1)$. 
{Indeed, only the \(O(1)\) local terms intersecting \(\{q\}\cup\operatorname{supp}(R')\) act nontrivially on \(X_q\otimes R'\) or \(Y_q\otimes R'\), and each has bounded norm.} Equation~\eqref{eq:typeii_spam_signal_matrix} therefore extends analytically to complex time and is uniformly bounded on a fixed Bernstein ellipse for any fixed $T=\Or(1)$. This allows us to apply Lemma~\ref{lem:chebyshev_derivative_estimation} to $F_{xy}^{R'}$ and $F_{yx}^{R'}$. The zero-time diagonal signals can be estimated with the smaller cost $\Or(\epsilon^{-2}\log(1/\delta))$ as they do not require differentiation. Since
$
    F_{xx}^{R'}(0)
    =
    m_{R'}s_{R'}
    \geq
    r_pr_m,
$
the division in \eqref{eq:typeii_direct_b} amplifies estimation error by at most a constant independent of $N$.
{More explicitly, if each of the two derivatives and the zero-time signal is estimated to error at most \(\xi\leq r_pr_m/2\), the empirical denominator is positive and the resulting estimate satisfies \(|\widehat b-b|\leq 2(1+|b|)\xi/(r_pr_m)\). Since \(|b|=\Or(1)\), choosing \(\xi\) to be a sufficiently small constant multiple of \(\epsilon\) suffices.}

Once the signed values of $b$ have been recovered for every $R'\in\bar{\mathcal P}_{C\setminus\{q\}}$, by the definition \eqref{eq:typeii_fourier_coefficients}, the inverse Pauli Fourier transform gives
\begin{equation}
    \label{eq:typeii_inverse_fourier}
    \Im\chi^{(q,C)}_{IZ;R}
    =
    \frac{1}{2\cdot4^{|C|-1}}
    \sum_{R'\in\bar{\mathcal P}_{C\setminus\{q\}}}
    (-1)^{\braket{R,R'}}b_{R'}.
\end{equation}
{The absolute weights in this inverse transform sum to \(1/2\), so uniform error \(\xi\) in the \(b_{R'}\) gives error at most \(\xi/2\) in each restricted imaginary coefficient.}

For $R\neq I$, we write
\(P_a=I_q\otimes R\) and \(P_b=Z_q\otimes R\), where
\(R\in\bar{\mathcal P}_{C\setminus\{q\}}\). 
Then $(P_a,P_b)$ is a Type II pair supported in \(C\) corresponding to a dissipative term.
Eq.~\eqref{eq:typeii_restricted_chi} gives
\begin{equation}
\Im\chi^{(q,C)}_{IZ;R}
=
\sum_{\substack{(k,\ell)\in\mathcal T_2\\
P_{k,C}=I_q\otimes R,\;
P_{\ell,C}=Z_q\otimes R}}
\Im\alpha_{k\ell}.
\end{equation}
Every summand other than \(\Im\alpha_{ab}\) has a nonidentity
common Pauli factor outside \(C\), so its joint support strictly
contains that of \((a,b)\).
We therefore recover the individual $\Im\alpha_{ab}$ by the same support-inclusion back-substitution used in \eqref{eq:back_substitution}, processing larger supports before smaller supports.
Repeating over the interaction-covering clusters, the
distinguished qubits, and the three local Pauli axes allows us
to estimate all Type II imaginary dissipative coefficients $\Im \alpha_{k\ell}$.

For $R=I$, we have
\begin{equation}
\Im\chi^{(q,C)}_{IZ;I}
=h_{Z_q}+
\sum_{\substack{(k,\ell)\in\mathcal T_2\\
P_{k,C}=I_q\otimes I,\;
P_{\ell,C}=Z_q\otimes I}}
\Im\alpha_{k\ell}.
\end{equation}
{The \(Q=I\) summand in Eq.~\eqref{eq:typeii_restricted_chi} is \(\Im\chi_{I,Z_q}=h_{Z_q}\), without contribution from dissipative coefficients, because the combined contribution of the anticommutator terms \(-\frac12\sum_{a,b}\alpha_{ab}\{P_bP_a,\cdot\}\) affects only the real part of \(\chi_{I,Z_q}\), since \(\sum_{a,b}\alpha_{ab}P_bP_a\) is Hermitian.
Every \(Q\neq I\) summand has two nonidentity Pauli labels and is therefore a Type II dissipative coefficient.}
Each dissipative coefficient in this sum has nontrivial support
outside \(C\) and has already been recovered by the \(R\neq I\)
procedure on a cluster containing its full support.
Repeating over all qubits and the three local Pauli axes
therefore recovers every single-qubit Hamiltonian coefficient.

{For error propagation, each back-substitution involves only \(\Or(1)\) coefficients whose supports overlap the target support, and every dependency follows strict support inclusion. The dependency depth is at most \(\mathsf k\), so the cumulative error amplification is bounded independently of \(N\). Recovering \(h_{Z_q}\) requires one further subtraction of \(\Or(1)\) already estimated dissipative coefficients. Thus all recovered Type II components have error at most \(\epsilon\) when the local estimates have error a sufficiently small constant multiple of \(\epsilon\).}

{Finally, the protocol can be performed in parallel with constant overhead. For each interaction-covering partition $\Pi_i$, bounded-degree locality of the Lindbladian and bounded cluster size imply that each cluster shares an interaction term with only $\Or(1)$ other clusters. Hence, the clusters can be grouped into $\Or(1)$ subfamilies such that no Lindbladian term intersects two clusters in the same subfamily. In each round, choose one distinguished qubit $q$ per cluster in the active subfamily, perform the $\{I,Z\}$ twirl (or its counterpart in another Pauli basis) on those qubits, and fully Pauli-twirl all remaining qubits. This separation preserves the local invariant subspaces used above. Cycling over the subfamilies, the at most $\mathsf{k}$ choices of $q$, the three local Pauli axes, and the $4^{|C|-1}=O(1)$ choices of $R'$ incurs only constant overhead.}

{The Chebyshev differentiation procedure in Lemma~\ref{lem:chebyshev_derivative_estimation} is compatible with paralleization, because Chebyshev nodes and their sampling probabilities depend only on \(n\) and \(T\), so each shot can use the same sampled time for all active clusters and yield one signed parity for each cluster. For each cluster, independence across experimental shots gives concentration of its signal estimate. A union bound then gives a simultaneous guarantee for all clusters, even though their outcomes within the same shot may be correlated.}

{We now verify that the single-distinguished-qubit analysis and
the finite-pulse error bound of Lemma~\ref{lem:local_pulse_rate}
remain valid for the parallel partial-twirl protocol.
For an active cluster \(C\), no term meeting \(C\) contains a partially twirled qubit outside \(C\). 
Consequently, for every term meeting \(C\), the full Pauli twirl
on its support outside \(C\) retains only equal left and right
exterior Pauli factors \(Q\). When acting on
\(O_C\otimes I_{C^c}\), these cancel in the sandwich terms
because \(Q I_{C^c}Q=I_{C^c}\), and in the anticommutator terms
because \(Q^2=I_{C^c}\). Terms disjoint from \(C\) annihilate
this observable.
Thus the algebra on \(C\) is invariant, and its restriction agrees with the single-distinguished-qubit calculation above. The partial-twirl extension of Lemma~\ref{lem:local_pulse_rate} therefore bounds each signal's finite-pulse bias by \(Kt^2/r\). 
{Using the signed Pauli-averaging identities in Eq.~\eqref{eq:typeii_pauli_projections}, the measured operator is \(m_{R'}Q_{\mu,R'}\), while the signed average of the randomized input states is \(s_{R'}Q_{\nu,R'}/2^N\), which has trace norm \(s_{R'}\leq1\). Since \(m_{R'}\leq1\), the local Heisenberg bound in the partial-twirl extension of Lemma~\ref{lem:local_pulse_rate} therefore bounds each signal's finite-pulse bias by \(Kt^2/r\).}
By the proof of Lemma~\ref{lem:chebyshev_derivative_estimation},
the derivative estimator has total absolute weight
\(\sum_{j=0}^{n}|w_j|=\Or(n^2/T)\), where
\(n=\Theta(\log(1/\epsilon))\) is the interpolation degree and
\([0,T]\) is the sampling interval, with \(T>0\) fixed
independently of \(N\) and \(\epsilon\).
Thus, a uniform signal bias of \(\Or(\epsilon T/n^2)\)
contributes only \(\Or(\epsilon)\) to the derivative error.
For fixed \(T\), \(r=\Or(n^2/\epsilon)\) pulse rounds per experiment suffice. This changes neither the experiment count nor the total evolution-time scaling.}
{Each such experiment uses \(\Or(Nn^2/\epsilon)\)
single-qubit gates.}

We therefore have the following theorem:
\begin{theorem}
    \label{thm:estimate_typeii}
    Under the SPAM noise model in Definition~\ref{defn:spam_noise_model}, for any $\epsilon,\delta>0$, we can estimate all $\{\Im\alpha_{ab}\}_{(a,b)\in\mathcal T_2}$ {and all $\{h_c:|\supp(P_c)|=1\}$} to within additive error $\epsilon$ with probability at least $1-\delta$ using
    \begin{equation}
        \Or\left(
            \epsilon^{-2}
            \log^4(2/\epsilon)
            \log(N/\delta)
        \right)
    \end{equation}
    experiments and total evolution time of the same order. 
    {The protocol uses
    \(\Or(N\epsilon^{-3}\log^6(2/\epsilon)\log(N/\delta))\)
    single-qubit gates in total.}
\end{theorem}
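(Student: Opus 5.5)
The proof will assemble the ingredients developed above into one accounting argument. Fix an interaction-covering partition family $\Pi_1,\ldots,\Pi_{\mathsf r}$ with $\mathsf r\leq\mathsf d+1$. For each partition, each grouping of its clusters into the $\Or(1)$ non-interacting subfamilies, each choice of distinguished qubit $q$ per active cluster (at most $\mathsf k$), each of the three local Pauli axes, and each $R'\in\bar{\mathcal P}_{C\setminus\{q\}}$ ($4^{\mathsf k-1}$ choices), we run the parallel partial-twirl experiment. The number of such configurations is a constant $c_{\mathsf k,\mathsf d}$ independent of $N$. Within one configuration, every active cluster yields the four signals $F^{R'}_{\mu\nu}(t)=m_{R'}s_{R'}(e^{tA_{R'}})_{\mu\nu}$ simultaneously. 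This uses Eq.~\eqref{eq:typeii_spam_signal_matrix} together with the invariance of the local algebra on $C$ shown above. The SPAM factors are bounded below by $r_pr_m$.

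Next, we estimate $b_{R'}$ via Eq.~\eqref{eq:typeii_direct_b}. The derivatives $\dot F_{xy}^{R'}(0)$ and $\dot F_{yx}^{R'}(0)$ come from Lemma~\ref{lem:chebyshev_derivative_estimation} with fixed $T$ and a fixed Bernstein ellipse. Analyticity and the bound $B$ are uniform in $N$ because $\|A_{R'}\|=\Or(1)$, and $e^{tA_{R'}}$ is entire with $|e^{zA}|\leq e^{|z|\|A\|}$. We set the target accuracy to $\xi=c\epsilon$ with $c$ a constant absorbing three sources of amplification: the division error bound $2(1+|b|)\xi/(r_pr_m)$, the inverse Fourier transform~\eqref{eq:typeii_inverse_fourier} (gain $1/2$), and the support-ordered back-substitution (depth $\leq\mathsf k$, branching $\Or(1)$, plus one extra subtraction for $h_{Z_q}$). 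With per-signal failure probability $\delta'=\delta/(c'N)$, Lemma~\ref{lem:chebyshev_derivative_estimation} requires $\Or(\epsilon^{-2}\log^4(2/\epsilon)\log(N/\delta))$ shots per configuration, shared across all active clusters because the Chebyshev node distribution depends only on $n,T$. A union bound over the $\Or(N)$ clusters, configurations and signals then gives overall failure at most $\delta$. Multiplying by $c_{\mathsf k,\mathsf d}$ preserves the order. Since each shot uses evolution time at most $T=\Or(1)$, the total evolution time matches.

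The finite-pulse bias is handled by the partial-twirl version of Lemma~\ref{lem:local_pulse_rate}, which gives per-signal bias $Kt^2/r$. Because the derivative estimator has absolute weight $\Or(n^2/T)$, choosing $r=\Theta(n^2/\epsilon)$ with $n=\Theta(\log(1/\epsilon))$ keeps this bias contribution $\Or(\epsilon)$. We fold it, together with the interpolation bias $C_\rho Bn^2\rho^{-n}/T$, into the $\xi/2$ bias budget. The gate count then follows: each shot uses $\Or(N(r+1))=\Or(N\epsilon^{-1}\log^2(2/\epsilon))$ gates, and multiplying by the shot count gives $\Or(N\epsilon^{-3}\log^6(2/\epsilon)\log(N/\delta))$.

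The main obstacle is making the simultaneous-guarantee argument rigorous under parallelism. Outcomes of different clusters within a shot are correlated, and the random node index is shared. The plan is to apply Hoeffding's inequality separately to each cluster's i.i.d. sequence of recorded values across shots, which holds irrespective of within-shot correlations, and then take the union bound. We must also check that the exterior full twirl on non-active qubits does not couple an active cluster to a neighbor sharing a term. This is ensured by the subfamily grouping, since no term meets two active clusters, so the invariant two-dimensional block computation carries over unchanged.
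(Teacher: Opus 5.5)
Your proposal is correct and follows the paper's argument essentially step for step. The shared steps are the parallel partial-twirl configurations with constant overhead, the SPAM-robust signal matrix $m_{R'}s_{R'}e^{tA_{R'}}$, and Chebyshev derivative estimation with a cluster-independent node distribution followed by per-cluster Hoeffding bounds and a union bound. They also include error propagation through the division, the inverse Pauli Fourier transform and support-ordered back-substitution (with the extra subtraction for $h_{Z_q}$), and control of the finite-pulse bias with $r=\Theta(n^2/\epsilon)$ pulse rounds, which yields the stated gate count. One step you leave implicit, which the paper states, is why the per-signal pulse bias is $Kt^2/r$: the Heisenberg bound applies because the signed-averaged input $s_{R'}Q_{\nu,R'}/2^N$ has trace norm at most $1$ and the signed-averaged observable $m_{R'}Q_{\mu,R'}$ has operator norm at most $1$.
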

In the above the $\log(N/\delta)$ factor arises due to a union bound over the $\Or(N)$ required local signals. {The implicit constants depend only on \(\mathsf k,\mathsf d,r_p,r_m\).}

{\section{Boundary-induced gauge invariance of Hamiltonian coefficients}
\label{sec:hamiltonian_learnability}}

{Section~\ref{sec:gauge_transformations_freedom} classifies components as universally gauge invariant or generically gauge dependent among physical Lindbladians, using the ambient classification of Theorem~\ref{thm:gauge_invariant_coefficients} and the physical genericity result of Corollary~\ref{cor:generic_physical_gauge_dependence}. Here we study a different notion: \emph{boundary-induced gauge invariance} at a fixed physical Lindbladian. The requirement that both generators produce CPTP semigroups imposes positivity of their Kossakowski matrices. 
As an example, for boundary instances with zero diagonal entries in the Kossakowski matrix, the positivity restricts the directions in which gauge transformation can modify the Lindbladian while preserving physicality.
Under the sufficient conditions below, these constraints force additional Hamiltonian coefficients to remain unchanged between physical gauge-equivalent generators. Such coefficients remain generically gauge dependent; their invariance at these instances is not universal but instance-specific. Learning these additional boundary-induced invariants in general remains an open problem.}

More specifically, we now use positivity of the
Kossakowski matrix to give simple sufficient conditions under which a Pauli
Hamiltonian coefficient cannot change between two gauge-equivalent
Lindbladians.  Throughout this section, let
\begin{equation}
    \mathcal L'
    =
    \mathcal G\circ\mathcal L\circ\mathcal G^{-1},
    \qquad
    \mathcal G[P]
    =
    g_{\supp(P)}P,
    \label{eq:general_gauge_hamiltonian_section}
\end{equation}
where \(\mathcal G\) is an invertible Hermiticity-preserving commuting gauge
transformation of the form in Eq.~\eqref{eq:general_commuting_gauge}.  We
suppose that both \(\mathcal L\) and \(\mathcal L'\) are Lindbladians.  In this
section we index the Kossakowski matrix directly by nonidentity Pauli
operators, and write \(\alpha_{P,Q}\) and \(\alpha'_{P,Q}\) for the
Kossakowski coefficients of \(\mathcal L\) and \(\mathcal L'\), respectively.
We similarly write \(h_c\) and \(h'_c\) for the coefficients of \(P_c\) in
their Hamiltonians.

Define the set of Pauli operators with nonzero diagonal Kossakowski
coefficient by
\begin{equation}
    \mathsf D(\mathcal L)
    =
    \left\{
        P\in\bar{\mathcal P}_N\setminus\{I\}:
        \alpha_{P,P}>0
    \right\}.
    \label{eq:diagonal_pauli_set}
\end{equation}
Positivity of the Kossakowski matrix gives
\begin{equation}
    |\alpha_{P,Q}|^2
    \leq
    \alpha_{P,P}\alpha_{Q,Q}.
    \label{eq:kossakowski_two_by_two_bound}
\end{equation}
Consequently, if \(P\notin\mathsf D(\mathcal L)\), every element of the row
and column indexed by \(P\) vanishes.

We will use the following observation about the diagonal coefficients.
For Pauli operators \(U,V\), let \(U\mathbin{\cdot}V\) denote the
superoperator \(\rho\mapsto U\rho V\).

\begin{lemma}
\label{lem:diagonal_kossakowski_invariant}
For every \(P\in\bar{\mathcal P}_N\setminus\{I\}\), we have
$
    \alpha'_{P,P}
    =
    \alpha_{P,P}.
$
Hence
\(\mathsf D(\mathcal L')=\mathsf D(\mathcal L)\).
\end{lemma}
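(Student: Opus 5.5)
The plan is to compute the diagonal $\chi$-matrix entry $\chi_{P,P}$ of $\mathcal L'=\mathcal G\circ\mathcal L\circ\mathcal G^{-1}$ directly from the Pauli-transfer representation, and to show that every summand contributing to it is left unchanged by $\mathcal G$. This extends item~1 of Theorem~\ref{thm:gauge_invariant_coefficients} from the depolarizing family $\mathcal S_\theta$ to the general commuting gauge of Eq.~\eqref{eq:general_commuting_gauge}. For nonidentity $P$ we have $\chi_{P,P}=\alpha_{P,P}$, so this suffices.

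\emph{Step 1 (action on the Pauli-transfer matrix).} Since $\mathcal G[P]=g_{\supp(P)}P$, Eq.~\eqref{eq:ptm_representation} gives
\[
T'_{P,Q}=\frac{g_{\supp(P)}}{g_{\supp(Q)}}\,T_{P,Q}.
\]
This is the analogue of Eq.~\eqref{eq:ptm_gauge_transform}.

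\emph{Step 2 (the diagonal $\chi$ entry).} Apply Eq.~\eqref{eq:chi_from_ptm} with $U=V=P$. Each summand involves $T_{R\star P,R\star P}$, a diagonal Pauli-transfer entry. Its rescaling factor is $g_{\supp(R\star P)}/g_{\supp(R\star P)}=1$. The phase $\varphi_R(P,P)=\omega(R,R\star P)^2$ is independent of the gauge. Hence every summand is invariant, and $\chi'_{P,P}=\chi_{P,P}$. It follows that $\alpha'_{P,P}=\alpha_{P,P}$ for all nonidentity $P$.

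There is a point that needs checking here, and I expect it to be the main obstacle: Eq.~\eqref{eq:chi_from_ptm} must hold for a general Hermiticity-preserving superoperator. This is needed because $\mathcal L'$ is again Hermiticity preserving and trace annihilating; indeed $g_\varnothing$ cancels, so invertibility with real $g_S$ is enough. The derivation before Theorem~\ref{thm:gauge_invariant_coefficients} used only the Pauli twirling identity and linearity, so it applies verbatim. Finally, the identification $\chi_{P,P}=\alpha_{P,P}$ holds for any such generator, including $\mathcal L'$, since the Hamiltonian and anticommutator parts contribute only to $\chi$ entries with an identity label.

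\emph{Step 3 (the set $\mathsf D$).} The equality $\mathsf D(\mathcal L')=\mathsf D(\mathcal L)$ follows immediately from the definition in Eq.~\eqref{eq:diagonal_pauli_set}. Note that no positivity of the gauge factors is needed anywhere in this argument. This matters because the later odd-weight theorem drops that assumption.
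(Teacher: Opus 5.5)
Your proof is correct. It rests on the same underlying fact as the paper's proof, but you reach it through the explicit change-of-basis formula rather than the paper's sector decomposition.

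You work in coordinates. Take Eq.~\eqref{eq:chi_from_ptm} with $U=V=P$, use $\omega(R,R\star P)^2=(-1)^{\braket{R,P}}$, and substitute $B=R\star P$. Your computation then amounts to the identity
\[
\chi_{P,P}=4^{-N}\sum_{B\in\bar{\mathcal P}_N}(-1)^{\braket{B,P}}\,T_{B,B}.
\]
So the diagonal of $\chi$ is the Walsh--Hadamard transform of the Pauli eigenvalues $T_{B,B}$, and these are untouched by a PTM-diagonal $\mathcal G$.

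The paper argues structurally instead:
\begin{itemize}
\item It groups the left--right terms $U\cdot V$ into sectors $\mathsf V_R$ according to $U\star V=R$.
\item Every PTM element $\lvert A)(B\rvert$ of $U\cdot V$ satisfies $A\star B=U\star V$, so gauge similarity preserves each sector.
\item The sector $\mathsf V_I=\operatorname{span}\{U\cdot U\}$ consists of Pauli-diagonal maps, which commute with $\mathcal G$.
\end{itemize}
The two arguments are dual readings of one fact: the span of $\{U\cdot U\}$ is exactly the space of Pauli-diagonal superoperators. As you note, neither needs positivity of the $g_S$.

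What each route buys:
\begin{itemize}
\item Yours reuses the machinery of Theorem~\ref{thm:gauge_invariant_coefficients} verbatim and makes the invariance explicit.
\item The paper's gets invariance of every sector $\mathsf V_R$ at no extra cost, and the proof of Theorem~\ref{thm:hamiltonian_product_condition} reuses exactly that. Your formula would give this too, since every $T_{R\star V,R\star U}$ entering $\chi_{U,V}$ has label product $U\star V$. However, you only derive the diagonal case.
\end{itemize}

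Finally, the step you flag as the main obstacle is not one. Eq.~\eqref{eq:chi_from_ptm} is a linear identity that holds for every superoperator, with no need for Hermiticity preservation or trace annihilation.
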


\begin{proof}
The term \(U\mathbin{\cdot}V\) maps an input Pauli operator \(B\) to a
multiple of the Pauli operator with projective label
\(U\star V\star B\).  Thus every nonzero Pauli-transfer matrix element
\(\lvert A)(B\rvert\) of this term satisfies
\begin{equation}
    A\star B
    =
    U\star V.
\end{equation}
Similarity transformation by \(\mathcal G\) multiplies this matrix element by
\(g_{\supp(A)}/g_{\supp(B)}\), and therefore does not change
\(A\star B\).  It follows that gauge similarity transformation can mix
\(U\mathbin{\cdot}V\) only with terms for which the product of the left and
right Pauli labels is also \(U\star V\).

When \(U\star V=I\), one has \(U=V\).  Moreover,
\(U\mathbin{\cdot}U\) maps every Pauli operator to itself up to a sign, so it
commutes with \(\mathcal G\) and is fixed by gauge similarity.  The
coefficient of \(P\mathbin{\cdot}P\) in the left--right Pauli expansion of a
Lindbladian is \(\alpha_{P,P}\).  The preceding observations therefore give $\alpha'_{P,P}=\alpha_{P,P}$.
\end{proof}

We first give a condition that applies to a Hamiltonian term of any weight.

\begin{theorem}[General sufficient condition]
\label{thm:hamiltonian_product_condition}
Suppose \(g_S>0\) for every \(S\subseteq[N]\).  If, for all nonidentity
Pauli operators \(P,Q\),
\begin{equation}
    P\star Q=P_c
    \quad\Longrightarrow\quad
    P\notin\mathsf D(\mathcal L)
    \ \text{or}\
    Q\notin\mathsf D(\mathcal L),
    \label{eq:hamiltonian_product_condition}
\end{equation}
then
$
    h'_c=h_c.
$
\end{theorem}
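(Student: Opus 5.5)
The plan is to work in the ``product-label sector'' of the left--right Pauli expansion, using the observation in the proof of Lemma~\ref{lem:diagonal_kossakowski_invariant}. Write every generator as $\sum_{U,V}\chi_{U,V}\,U\mathbin{\cdot}V$ and group terms by the projective label $U\star V$. Each term $U\mathbin{\cdot}V$ has Pauli-transfer elements $\lvert A)(B\rvert$ only with $A\star B=U\star V$. Similarity by $\mathcal G$ rescales each such element by $g_{\supp(A)}/g_{\supp(B)}$, which preserves $A\star B$. Hence the sector
\[
\mathcal L_{[P_c]}=\sum_{U\star V=P_c}\chi_{U,V}\,U\mathbin{\cdot}V
\]
satisfies $\mathcal L'_{[P_c]}=\mathcal G\circ\mathcal L_{[P_c]}\circ\mathcal G^{-1}$. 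Equivalently, $\mathcal L_{[P_c]}$ is the PTM of $\mathcal L$ restricted to entries with $A\star B=P_c$, and the gauge acts entrywise on those entries.

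The next step is to identify this sector explicitly. The pairs with $U\star V=P_c$ are $(I,P_c)$, $(P_c,I)$, and pairs $(P,Q)$ of nonidentity Paulis with $P\star Q=P_c$. By hypothesis~\eqref{eq:hamiltonian_product_condition}, for every such pair at least one of $P,Q$ lies outside $\mathsf D(\mathcal L)$. The bound~\eqref{eq:kossakowski_two_by_two_bound} then gives $\alpha_{P,Q}=0$. By Lemma~\ref{lem:diagonal_kossakowski_invariant} we have $\mathsf D(\mathcal L')=\mathsf D(\mathcal L)$, so the same argument gives $\alpha'_{P,Q}=0$.

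With those coefficients gone, the anticommutator sum contributing to $\chi_{I,P_c}$ and $\chi_{P_c,I}$ vanishes. The expansion of the one-identity coefficients given before Proposition~\ref{prop:local_simultaneous_gauge_dependence} then yields
\[
\mathcal L_{[P_c]}=-ih_c[P_c,\cdot\,],\qquad \mathcal L'_{[P_c]}=-ih'_c[P_c,\cdot\,].
\]
So the problem reduces to a single identity,
\[
\mathcal G\circ(-ih_c[P_c,\cdot\,])\circ\mathcal G^{-1}=-ih'_c[P_c,\cdot\,].
\]

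To finish, I will compare PTM entries. For a Pauli $B$ anticommuting with $P_c$, set $A=P_c\star B$; this $A$ also anticommutes with $P_c$. The superoperator $-i[P_c,\cdot\,]$ is anti-Hermitian with real PTM, so its PTM is antisymmetric, and $T_{A,B}=-T_{B,A}=\tau\neq0$. Matching the $(A,B)$ and $(B,A)$ entries gives
\[
h'_c=h_c\,r\quad\text{and}\quad h'_c=h_c\,r^{-1},\qquad r=\frac{g_{\supp(A)}}{g_{\supp(B)}}.
\]
If $h_c=0$, these equations force $h'_c=0$. Otherwise $r^2=1$, and positivity of the gauge factors gives $r=1$, hence $h'_c=h_c$. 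This is exactly where the assumption $g_S>0$ enters: with negative factors, $r=-1$ would allow the sign flip $h'_c=-h_c$.

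I expect the main obstacle to be justifying cleanly that the sector decomposition is well defined and gauge covariant. The left--right expansion in the Pauli basis is unique, and the map from $\chi$ to the PTM is a sector-preserving bijection. One must also keep track of the phase conventions $\omega(\cdot,\cdot)$, so that $\chi_{I,P_c}-\chi_{P_c,I}$ really isolates $2ih_c$ once the off-diagonal $\alpha$'s in the sector vanish. The antisymmetry/sign check for the PTM of $-i[P_c,\cdot\,]$ is routine but needs care.
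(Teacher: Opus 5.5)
Your proposal is correct and follows essentially the same route as the paper. You kill every $\alpha_{P,Q}$ and $\alpha'_{P,Q}$ with $P\star Q=P_c$ via the $2\times 2$ positivity bound and Lemma~\ref{lem:diagonal_kossakowski_invariant}, project onto the gauge-preserved sector $\mathsf V_{P_c}$ to get $\mathcal G\circ(-ih_c[P_c,\cdot\,])\circ\mathcal G^{-1}=-ih'_c[P_c,\cdot\,]$, and then compare the two transposed transfer entries for a Pauli anticommuting with $P_c$ to force $r=r^{-1}$, hence $r=1$ by positivity of the gauge factors. Your extra details (antisymmetry of the real transfer matrix of $-i[P_c,\cdot\,]$, and using the explicit one-identity $\chi$ coefficients to pin down the sector) only make explicit steps that the paper states briefly.
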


\begin{proof}
Equation~\eqref{eq:kossakowski_two_by_two_bound} and the hypothesis imply
\begin{equation}
    \alpha_{P,Q}=0
    \qquad
    \text{whenever }
    P\star Q=P_c.
    \label{eq:no_product_dissipator}
\end{equation}
Lemma~\ref{lem:diagonal_kossakowski_invariant} shows that
\(\mathcal L'\) has the same set \(\mathsf D\), so the corresponding
coefficients \(\alpha'_{P,Q}\) also vanish.

For each projective Pauli operator \(R\), let
\[
    \mathsf V_R
    =
    \operatorname{span}
    \left\{
        U\mathbin{\cdot}V:
        U\star V=R
    \right\}.
\]
Indeed,
\[
    \mathcal D_{P,Q}
    =
    P\mathbin{\cdot}Q
    -\frac{1}{2}(QP)\mathbin{\cdot}I
    -\frac{1}{2}I\mathbin{\cdot}(QP)
\]
belongs entirely to \(\mathsf V_{P\star Q}\), while
\(-i[P_c,\mathbin{\cdot}]\) belongs to \(\mathsf V_{P_c}\).
The fact that $\alpha_{P,Q}=\alpha'_{P,Q}=0$ whenever $P\star Q=P_c$ and $P,Q\in \mathsf{D}(\mathcal{L})$ therefore
show that the \(\mathsf V_{P_c}\) components of \(\mathcal L\) and
\(\mathcal L'\) are, respectively,
\(-ih_c[P_c,\mathbin{\cdot}]\) and
\(-ih'_c[P_c,\mathbin{\cdot}]\).
The specific commutator form is due to the assumption that both $\mathcal{L}$ and $\mathcal{L}'$ are valid Lindbladians and therefore preserve the trace and Hermiticity.
The proof of Lemma~\ref{lem:diagonal_kossakowski_invariant} shows that gauge
similarity preserves every subspace \(\mathsf V_R\). Projecting
\(\mathcal L'=\mathcal G\circ\mathcal L\circ\mathcal G^{-1}\) onto
\(\mathsf V_{P_c}\) therefore gives
\begin{equation}
    \mathcal G\circ
    \bigl(-ih_c[P_c,\mathbin{\cdot}]\bigr)
    \circ\mathcal G^{-1}
    =
    -ih'_c[P_c,\mathbin{\cdot}].
    \label{eq:isolated_hamiltonian_similarity}
\end{equation}

If \(h_c=0\), this equation immediately gives \(h'_c=0\).  Otherwise, choose
a Pauli operator \(R\) that anticommutes with \(P_c\), and let
\(T=P_c\star R\).  The commutator in
Eq.~\eqref{eq:isolated_hamiltonian_similarity} contains both transitions
\(R\mapsto T\) and \(T\mapsto R\).  Gauge similarity multiplies their
coefficients by
\begin{equation}
    r
    =
    \frac{g_{\supp(T)}}{g_{\supp(R)}}
    \qquad\text{and}\qquad
    r^{-1},
\end{equation}
respectively.  Both factors must equal \(h'_c/h_c\), so \(r=r^{-1}\).
Since every \(g_S\) is positive, \(r=1\), and hence \(h'_c=h_c\).
\end{proof}

For odd-weight Hamiltonian terms, we will propose a less restrictive sufficient condition.
Let \(S_c=\supp(P_c)\).  For \(R\subseteq S_c\), define \(P_{c,R}\) to agree
with \(P_c\) on \(R\) and to be the identity outside \(R\).  Thus
\begin{equation}
    P_{c,R}\star P_{c,S_c\setminus R}
    =
    P_c.
    \label{eq:complementary_restriction_product}
\end{equation}

\begin{theorem}[Odd-weight sufficient condition]
\label{thm:hamiltonian_odd_weight_condition}
We assume $g_S\neq 0$ for all $S\subseteq [N]$.
Suppose \(|S_c|\) is odd.  If
\begin{equation}
    \left\{
        P_{c,R},P_{c,S_c\setminus R}
    \right\}
    \nsubseteq
    \mathsf D(\mathcal L)
    \label{eq:hamiltonian_odd_weight_condition}
\end{equation}
for every nonempty proper subset \(R\subsetneq S_c\), then
$
h'_c=h_c.
$
\end{theorem}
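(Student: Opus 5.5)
The plan is to work in the sector $\mathsf V_{P_c}$ introduced in the proof of Theorem~\ref{thm:hamiltonian_product_condition}, and to replace the ``$r=r^{-1}$'' trick, which needed $g_S>0$, by a parity argument in the spirit of the proof of Theorem~\ref{thm:gauge_invariant_coefficients}.

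\textbf{Step 1: localize to $\mathsf V_{P_c}$.} By the proof of Lemma~\ref{lem:diagonal_kossakowski_invariant}, gauge similarity by $\mathcal G$ preserves every $\mathsf V_R$. So it suffices to compare the $\mathsf V_{P_c}$ components of $\mathcal L$ and $\mathcal L'$. This component consists of $-ih_c[P_c,\cdot]$ plus the dissipators $\alpha_{P,Q}\mathcal D_{P,Q}$ with $P\star Q=P_c$. By Lemma~\ref{lem:diagonal_kossakowski_invariant}, $\mathsf D(\mathcal L')=\mathsf D(\mathcal L)$. Eq.~\eqref{eq:kossakowski_two_by_two_bound} together with hypothesis~\eqref{eq:hamiltonian_odd_weight_condition} then gives $\alpha_{P_{c,R},P_{c,S_c\setminus R}}=\alpha'_{P_{c,R},P_{c,S_c\setminus R}}=0$ for every nonempty proper $R\subsetneq S_c$. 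Unlike Theorem~\ref{thm:hamiltonian_product_condition}, however, other pairs with $P\star Q=P_c$ (for instance pairs with support beyond $S_c$) may survive.

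\textbf{Step 2: write $h_c$ in PTM entries and track the gauge factors.} Use $h_c=\Im\chi_{I,P_c}$ and expand via Eq.~\eqref{eq:chi_from_ptm}. For the general commuting gauge, each PTM entry $T_{A,B}$ is multiplied by $g_{\supp(A)}/g_{\supp(B)}$. This factor is $1$ exactly when $A$ and $B$ have the same support, i.e.\ when $G_R=\varnothing$ in the notation of the proof of Theorem~\ref{thm:gauge_invariant_coefficients}. Here $D(I,P_c)=B(I,P_c)=S_c$, so Eq.~\eqref{eq:gauge_phase_parity} gives phase parity $|S_c|+|G_R|$. Since $|S_c|$ is odd, $\Im\chi_{I,P_c}$ splits into two pieces:
\begin{itemize}
\item an unrescaled piece, from the $G_R=\varnothing$ summands;
\item a rescaled piece, from summands with $G_R\subseteq S_c$ nonempty and of \emph{even} size.
\end{itemize}
Because $|S_c|$ is odd, no nonempty even $G_R$ can equal all of $S_c$. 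So every rescaled summand involves a transition $B\mapsto A$ with $A\star B=P_c$ in which, on $S_c$, one of $A,B$ equals $P_{c,R}$-type data and the other $P_{c,S_c\setminus R}$-type data, with $G_R$ a nonempty proper subset.

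\textbf{Step 3 (main obstacle): show the rescaled piece vanishes for both generators.} The aim is to identify the rescaled piece with a linear combination of $\Im\alpha_{P,Q}$ over pairs restricting on $S_c$ to complementary restrictions $P_{c,R},P_{c,S_c\setminus R}$ with $R\neq\varnothing,S_c$.
\begin{itemize}
\item The Hamiltonian term contributes only to real-phase, support-preserving summands, so it should not enter this piece.
\item The anticommutator terms contribute only to real parts, as in the text.
\end{itemize}
The delicate point is pairs $P,Q$ with support outside $S_c$. I would handle them as in the Type~II analysis: their exterior factors must agree in the $\mathsf V_{P_c}$ sector, so after restriction they reduce to the complementary restrictions. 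I would then apply Eq.~\eqref{eq:kossakowski_two_by_two_bound} to the restricted diagonal entries using the hypothesis. If this fails in general, the fallback is to argue directly with $\mathcal L-\mathcal L'$ on the specific transitions $P_{c,R}\mapsto P_{c,S_c\setminus R}$, whose supports differ.

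\textbf{Step 4: conclude.} Once the rescaled piece is shown to vanish for $\mathcal L$ and $\mathcal L'$ alike, only the unrescaled piece remains, and it is identical for both. Hence $h'_c=\Im\chi_{I,P_c}(\mathcal L')=\Im\chi_{I,P_c}(\mathcal L)=h_c$. This argument uses only invertibility of $\mathcal G$ and never compares $r$ with $r^{-1}$, so no sign assumption on the $g_S$ is needed.
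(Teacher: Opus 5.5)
\textbf{There is a genuine gap: Step~3 aims at a false statement.} The rescaled part of $\Im\chi_{I,P_c}$ does not vanish, and the first bullet of Step~3 is where this goes wrong. The commutator $-ih_c[P_c,\cdot]$ has many support-changing Pauli-transfer entries, and these feed $\Im\chi_{I,P_c}$ through summands with $|G_R|$ even and nonzero.

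Concretely, on three qubits take $P_c=X_1X_2X_3$ and $\mathcal L=-ih[P_c,\cdot]$. This satisfies the hypothesis trivially, because $\mathsf D(\mathcal L)=\varnothing$. Then $\mathcal L[Z_1]=-2h\,Y_1X_2X_3$, a support-changing entry. In Eq.~\eqref{eq:chi_from_ptm}, every $R$ anticommuting with $P_c$ contributes $2ih/4^{3}$ to $\chi_{I,P_c}$. The $8$ summands with $G_R=\varnothing$ (every factor of $R$ in $\{Y,Z\}$) give only $h/4$ of $\Im\chi_{I,P_c}=h$. The remaining $3h/4$ comes from the $24$ summands with $|G_R|=2$.

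In general, under the hypothesis, the support-preserving part of $\Im\chi_{I,P_c}$ equals $2^{1-|S_c|}h_c$. So the rescaled part is $(1-2^{1-|S_c|})h_c$, which is nonzero whenever $h_c\neq0$ and $|S_c|\geq3$. Step~4's identification of $h_c$ with the unrescaled piece therefore fails. Your fallback does not say what would replace it.

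\textbf{How to repair it.} Reverse the target: instead of killing the rescaled part, compute the invariant part and show it is a nonzero multiple of $h_c$. Write $\mathcal B(\mathcal M)=\sum_S\Pi_S\circ\mathcal M\circ\Pi_S$. The $G_R=\varnothing$ part of $\chi_{I,P_c}(\mathcal L)$ is $\chi_{I,P_c}(\mathcal B(\mathcal L))=4^{-N}\langle\mathcal B(I\cdot P_c),\mathcal L\rangle_{\mathrm{sop}}$. A one-qubit computation gives $\mathcal B(I\cdot P_c)=(-\tfrac12)^{|S_c|}\mathcal W_c$, where $\mathcal W_c=\bigotimes_{j\in S_c}[P_{c,j},\cdot]\otimes\mathcal I_{S_c^c}$.

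This is exactly the paper's proof, which pairs $\mathcal L$ with $\mathcal W_c$ directly:
\begin{itemize}
\item $\mathcal W_c$ preserves exact Pauli support, so it commutes with $\mathcal G$ and the pairing is gauge invariant.
\item In the left--right Pauli basis the pairing sees only the entries $\chi_{P_{c,R},P_{c,S_c\setminus R}}$ with $R\subseteq S_c$.
\item For odd $|S_c|$ the two endpoint entries combine into $\chi_{P_c,I}-\chi_{I,P_c}=-2ih_c$, which cancels the anticommutator contributions.
\item The middle entries vanish for both $\mathcal L$ and $\mathcal L'$ by the hypothesis, Eq.~\eqref{eq:kossakowski_two_by_two_bound}, and Lemma~\ref{lem:diagonal_kossakowski_invariant}.
\end{itemize}

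This also shows that your worry about pairs supported beyond $S_c$ is moot, since they never enter the pairing. The treatment you sketched for them, via positivity of ``restricted'' diagonal entries, is in any case not something the hypothesis provides.
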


\begin{proof}
We consider
\begin{equation}
    \mathcal W_c
    =
    \left(
        \bigotimes_{j\in S_c}
        [P_{c,j},\mathbin{\cdot}]
    \right)
    \otimes
    \mathcal I_{S_c^c},
    \label{eq:odd_weight_superoperator}
\end{equation}
where $I_{S_c^c}$ is the identity superoperator acting on $S_c^c$.
Its left--right Pauli expansion is
\begin{equation}
    \mathcal W_c
    =
    \sum_{R\subseteq S_c}
    (-1)^{|S_c\setminus R|}
    P_{c,R}\mathbin{\cdot}P_{c,S_c\setminus R}.
    \label{eq:odd_weight_superoperator_expansion}
\end{equation}
At a qubit \(j\in S_c\), the corresponding local factor annihilates a Pauli
operator that commutes with \(P_{c,j}\), and maps either anticommuting Pauli
operator to a nonzero scalar multiple of another nonidentity Pauli operator.
It follows that \(\mathcal W_c\) maps every Pauli operator either to zero or
to a scalar multiple of a Pauli operator with exactly the same support.  Hence
both \(\mathcal W_c\) and \(\mathcal W_c^\dagger\) commute with \(\mathcal G\).

Equip the space of superoperators with the Hilbert--Schmidt inner product
\begin{equation}
    \langle\mathcal A,\mathcal B\rangle_{\mathrm{sop}}
    =
    \Tr(\mathcal A^\dagger\mathcal B).
\end{equation}
Cyclicity of the superoperator trace gives
\begin{equation}
\begin{aligned}
    \langle\mathcal W_c,\mathcal L'\rangle_{\mathrm{sop}}
    =
    \Tr\left(
        \mathcal W_c^\dagger
        \mathcal G\mathcal L\mathcal G^{-1}
    \right)=
    \Tr\left(
        \mathcal G^{-1}
        \mathcal W_c^\dagger
        \mathcal G\mathcal L
    \right)=
    \langle\mathcal W_c,\mathcal L\rangle_{\mathrm{sop}}.
\end{aligned}
\label{eq:odd_weight_pairing_invariant}
\end{equation}

For the unnormalized Pauli basis,
\begin{equation}
    \left\langle
        U\mathbin{\cdot}V,
        U'\mathbin{\cdot}V'
    \right\rangle_{\mathrm{sop}}
    =
    4^N\delta_{U,U'}\delta_{V,V'}.
    \label{eq:left_right_pauli_orthogonality}
\end{equation}
Since \(|S_c|\) is odd, the coefficients of
\(P_c\mathbin{\cdot}I\) and \(I\mathbin{\cdot}P_c\) in
Eq.~\eqref{eq:odd_weight_superoperator_expansion} are \(1\) and \(-1\),
respectively.  Therefore
\begin{equation}
    \left\langle
        \mathcal W_c,
        -i[P_c,\mathbin{\cdot}]
    \right\rangle_{\mathrm{sop}}
    =
    -2i\,4^N.
    \label{eq:odd_weight_hamiltonian_overlap}
\end{equation}

We will then show that for a dissipative term $P\cdot Q- \frac{1}{2}\{QP,\cdot\}$, its inner product with $\mathcal{W}_c$ must be zero due to \eqref{eq:hamiltonian_odd_weight_condition}.
First, the two anticommutator contributions $-\frac{1}{2}QP\cdot I$ and $-\frac{1}{2}I\cdot QP$ in a dissipative term have equal
coefficients for the terms with left--right labels
\(P_c\mathbin{\cdot}I\) and \(I\mathbin{\cdot}P_c\), and hence cancel in
their inner product with \(\mathcal W_c\). 
Second, for term \(P\mathbin{\cdot}Q\) in the dissipator to have nonzero inner product with
\(\mathcal W_c\), we need
\begin{equation}
    (P,Q)
    =
    \left(
        P_{c,R},P_{c,S_c\setminus R}
    \right)
\end{equation}
for a nonempty proper subset \(R\subsetneq S_c\).  Because of the hypothesis \eqref{eq:hamiltonian_odd_weight_condition},
Eq.~\eqref{eq:kossakowski_two_by_two_bound}, and
Lemma~\ref{lem:diagonal_kossakowski_invariant} force all these Kossakowski
coefficients to vanish for both \(\mathcal L\) and \(\mathcal L'\).
It follows that the inner product between any dissipative term and $\mathcal{W}_c$ is necessarily zero. Consequently, the commutator with $P_c$ is the only term that contributes to the inner product with $\mathcal{W}_c$ in both $\mathcal{L}$ and $\mathcal{L}'$. We therefore have
\begin{equation}
    \langle\mathcal W_c,\mathcal L\rangle_{\mathrm{sop}}
    =
    \langle\mathcal W_c,-ih_c[P_c,\cdot]\rangle_{\mathrm{sop}}
    =
    -2i\,4^N h_c,
    \quad
    \langle\mathcal W_c,\mathcal L'\rangle_{\mathrm{sop}}
    =
    \langle\mathcal W_c,-ih'_c[P_c,\cdot]\rangle_{\mathrm{sop}}
    =
    -2i\,4^N h'_c.
\end{equation}
Combining these equations with
Eq.~\eqref{eq:odd_weight_pairing_invariant} proves \(h'_c=h_c\).
\end{proof}

We remark that $h_c$ for single-qubit $P_c$ is {universally gauge invariant} because it is a Type-II learnable coefficient, for which we have a learning protocol as described in Theorem~\ref{thm:estimate_typeii}.

\section{Conclusion}

In this work, we developed a framework connecting the gauge freedom caused by unknown SPAM noise to the limits and possibilities of learning local Lindbladians. For any prescribed pattern of bounded-degree local interactions, we identified all universally gauge-invariant components and gave efficient protocols to learn them using only trusted single-qubit control. For every remaining component, we showed they are generically gauge dependent among physical Lindbladians and could therefore retain an irreducible uncertainty even with unlimited data. Furthermore, we showed that this ambiguity could remain of constant size as the system size increased, despite complete-positivity constraints. Our protocols learn all universal invariants with nearly optimal dependence on the target accuracy and only logarithmic dependence on system size in the number of experiments and total evolution time, without ancillas or entangling control. These guarantees required only nonvanishing local SPAM visibility. We also identified sufficient conditions under which positivity enforced additional, instance-specific Hamiltonian invariants at the boundary of the physical set.

A central open problem is to turn boundary-induced gauge invariance into efficient learning protocols and to quantify how the resulting guarantees deteriorate when small dissipative perturbations move the generator away from the boundary. Beyond individual coefficients, characterizing and learning gauge-invariant combinations could reveal dynamical information that remains accessible even when their constituent parameters are not. It would also be useful to determine which additional calibrated operations suffice to remove specific gauge ambiguities, and how uncertainty in the single-qubit controls changes both identifiability and estimation accuracy. Extending the framework in these directions could provide a systematic way to match experimentally available controls to the information needed for noise characterization and the study of dissipative many-body dynamics.

Our sufficient conditions for boundary-induced Hamiltonian gauge invariance suggest an intriguing connection to the Hamiltonian-not-in-Lindblad-span condition for Heisenberg-limited quantum metrology \cite{Zhou2017AchievingTH}. The role of dissipative operator products in our conditions echoes their role in determining whether quantum error correction can preserve a Hamiltonian signal while suppressing noise. Establishing a precise relationship between these criteria remains an open problem, since gauge invariance alone does not guarantee Heisenberg-limited estimation. Such a connection could guide the development of SPAM-robust protocols for additional Hamiltonian coefficients and clarify which control resources would enable precision beyond the standard quantum limit.

\section*{Statement on the Use of Generative AI}

The classification results and learning algorithms that form the main
contributions of this work were developed and proved entirely by the authors
and were initially recorded in a collection of informal notes. ChatGPT
was used to polish and organize these notes into an initial manuscript
draft. It was also used to draft the proofs of 
Theorem~\ref{thm:uniform_gauge_width},
Proposition~\ref{prop:local_simultaneous_gauge_dependence}, and
Theorem~\ref{thm:hamiltonian_odd_weight_condition}, and to assist in extending
\cite[Theorem 16, Appendix D]{HuangTongFangSu2023learning} to the Lindbladian setting to obtain
Lemma~\ref{lem:local_pulse_rate}. All AI-assisted arguments and text were
independently checked, revised, and approved by the authors, who take full
responsibility for the content of the manuscript.

\section*{Acknowledgments} This material is based upon work supported by the U.S. Department of Energy, Office of Science, Accelerated Research in Quantum Computing Centers, Quantum Utility through Advanced Computational Quantum Algorithms, grant no. DE-SC0025572 (Y.T.). 
STF was supported by the U.S. Department of Energy, Office of Science, National Quantum Information Science Research Centers, Co-design Center for Quantum Advantage (C2QA) under contract number DE-SC0012704, and by the U.S. National Science Foundation National Quantum Virtual Laboratory (NQVL), Erasure Qubits and Dynamic Circuits for Quantum Advantage (ERASE), under grant number OSI-2435244.

\appendix

\section{Proof of the local pulse-rate bound}
\label{app:local_pulse_rate}

{
\begin{proof}[Proof of Lemma~\ref{lem:local_pulse_rate}]
Let
\begin{equation*}
    \Phi_\Delta
    =
    \frac{1}{4^N}
    \sum_{R\in\bar{\mathcal P}_N}
    \mathcal R^\dagger\circ e^{\Delta\mathcal L}\circ\mathcal R,
    \qquad
    \mathcal V_\Delta=e^{\Delta\mathcal L^{\mathrm{diag}}},
\end{equation*}
so that $\mathcal U'_{t,r}=\Phi_{t/r}^r$ and
$\mathcal U_t=\mathcal V_{t/r}^r$.  We first establish a one-step bound
on operators supported on $C$.

If $X$ is supported on $C$ and a local term $\mathcal L_\alpha$ is
supported on a set disjoint from $C$, then
$\mathcal L_\alpha^\dagger[X]=0$.  Hence a term
\begin{equation*}
    \mathcal L_{\alpha_j}^\dagger\cdots
    \mathcal L_{\alpha_1}^\dagger[X]
\end{equation*}
can be nonzero only if, at each step, the next local term overlaps $C$ or
one of the preceding terms.  Every qubit belongs to at most
$\mathsf d+1$ term supports, and every term overlaps at most $\mathsf d$
other terms. Each of these non-zero terms therefore correspond to a word $(\alpha_1,\cdots,\alpha_j)$. Denoting the number of these words $(\alpha_1, \dots, \alpha_j)$ as $A_j(C)$ having non-zero contribution, we have
\begin{align*}
    A_1(C) &\leq \mathsf k (\mathsf d + 1) \\
    A_j(C) &\leq A_{j - 1}(C)[\mathsf{k} (\mathsf d + 1) + (j - 1)\mathsf d] \leq A_{j - 1}(C)[(\mathsf{k} + j - 1) (\mathsf d + 1)],
\end{align*}
employing the fact that $|C|\leq \mathsf{k}$. Expanding this recursive bound gives us
\begin{equation}
    \label{eq:local_word_count}
    A_j(C)
    \leq
    \prod_{m=1}^j (\mathsf k + m-1)(\mathsf d+1)
    = j!\binom{\mathsf k + j - 1}{\mathsf k - 1}(\mathsf d + 1)^j
    \leq
    2^{\mathsf k-1}j!\,[2(\mathsf d+1)]^j,
\end{equation}
Pauli conjugation
preserves both the support and the induced norm of every local term.  Thus,
with
\begin{equation*}
    a_{\mathsf k}=2^{\mathsf k-1},
    \quad
    g=2(\mathsf d+1)J,
\end{equation*}
we have, uniformly over $R\in\bar{\mathcal P}_N$,
\begin{equation}
    \label{eq:local_lindbladian_power_bound}
    \left\|
        \left[
            (\mathcal R^\dagger\circ\mathcal L\circ\mathcal R)^\dagger
        \right]^j[X]
    \right\|_\infty
    \leq
    a_{\mathsf k}j!g^j\|X\|_\infty.
\end{equation}
{The same bound holds for the generator averaged over any product Pauli subgroup: average each local summand separately. Each averaged summand remains trace-annihilating, has support contained in its original support, and has induced norm at most \(J\). Thus the same local-word count applies, in particular to \(\mathcal L^{\mathrm{diag}}\).}

The zeroth-order terms in $\Phi_\Delta^\dagger[X]$ and
$\mathcal V_\Delta^\dagger[X]$ coincide, and their first-order terms
coincide by Eq.~\eqref{eq:twirled_generator}.  Expanding both channels and
using Eq.~\eqref{eq:local_lindbladian_power_bound}, for $g\Delta\leq1/2$
we obtain
\begin{equation}
    \label{eq:local_one_step_pulse_bound}
    \left\|
        \Phi_\Delta^\dagger[X]-\mathcal V_\Delta^\dagger[X]
    \right\|_\infty
    \leq
    2a_{\mathsf k}
    \sum_{j=2}^\infty(g\Delta)^j\|X\|_\infty
    \leq
    4a_{\mathsf k}g^2\Delta^2\|X\|_\infty.
\end{equation}
This bound depends on $\mathsf k,\mathsf d$, and $J$, but not on $N$.

Both $\Phi_\Delta$ and $\mathcal V_\Delta$ are quantum channels, so their
adjoints are contractions in operator norm.  Moreover,
$\mathcal V_\Delta^\dagger$ preserves the algebra of operators supported
on $C$, since $\mathcal L^{\mathrm{diag}}$ is diagonal in the Pauli basis.
The telescoping identity therefore gives
\begin{equation*}
\begin{aligned}
    \left\|
        (\Phi_\Delta^\dagger)^r[O_C]
        -(\mathcal V_\Delta^\dagger)^r[O_C]
    \right\|_\infty 
    \leq
    \sum_{u=0}^{r-1}
    \left\|
        (\Phi_\Delta^\dagger-\mathcal V_\Delta^\dagger)
        (\mathcal V_\Delta^\dagger)^u[O_C]
    \right\|_\infty
    \leq
    4a_{\mathsf k}g^2r\Delta^2\|O_C\|_\infty.
\end{aligned}
\end{equation*}
Setting $\Delta=t/r$ proves
Eq.~\eqref{eq:local_pulse_heisenberg_bound} with
$K=4a_{\mathsf k}g^2$ and $\Delta_0=(2g)^{-1}$.

\end{proof}
}

For the partial-twirl extension, we only need to make minimal modifications to the above proof. We use
$$
\begin{aligned}
\Phi_\Delta&=\frac{1}{|\mathcal H|}\sum_{R\in\mathcal H}\mathcal R^\dagger\circ e^{\Delta\mathcal L}\circ\mathcal R,\quad 
\mathcal V_\Delta&=e^{\Delta\overline{\mathcal L}}.
\end{aligned}
$$
Their first-order terms agree, and the preceding averaged-local-term bound gives the identical one-step estimate. Both maps are CPTP, and the assumed invariance of the algebra on \(C\) ensures that \((\mathcal V_\Delta^\dagger)^u[O_C]\) remains supported on \(C\). The same telescoping and duality arguments therefore give both claimed bounds.

\bibliographystyle{ieeetr}
\bibliography{refs}

\end{document}